\documentclass[a4paper, UKenglish, cleveref, autoref, thm-restate, numberwithinsect]{lipics-v2021}

\newboolean{arXivLongVersion}
\setboolean{arXivLongVersion}{true} %

\pdfoutput=1 %
\ifthenelse{\boolean{arXivLongVersion}}{%
\hideLIPIcs  %
}{%
\relatedversiondetails{Full Version}{https://arxiv.org/abs/XXX}
}

\usepackage{ifthen}

\usepackage{refcount}

\usepackage{amsfonts}

\usepackage{mathtools}

\usepackage[only, llbracket, rrbracket, fatsemi]{stmaryrd}

\usepackage{amsthm}
\usepackage{thmtools,thm-restate}

\usepackage{tikz}
\usetikzlibrary{arrows, calc, shapes, snakes, automata, fit, positioning, intersections}
\tikzset{>=stealth'} %

\tikzstyle{every picture} = [style=semithick]
\tikzstyle{every node}    = [font=\small]
\tikzstyle{every state}   = [thick, minimum size=1mm, inner sep=2pt]

\tikzstyle{initial}   = [initial   by arrow, initial   text=, initial   distance=4mm]
\tikzstyle{accepting} = [accepting by arrow, accepting text=, accepting distance=4mm]

\usepackage{algorithm}
\usepackage{algpseudocodex}
\algrenewcommand\algorithmicrequire{\textbf{Input:}}
\algrenewcommand\algorithmicensure{\textbf{Output:}}

\usepackage{subcaption}

\usepackage{hyperref}

\theoremstyle{remark}

\newcommand{\setN}{\mathbb{N}}
\newcommand{\setZ}{\mathbb{Z}}
\newcommand{\setQ}{\mathbb{Q}}

\renewcommand{\vec}[1]{{\mathbf #1}}
\newcommand{\norm}[1]{|#1|}

\newcommand{\per}[1]{\operatorname{Per}(#1)}

\newcommand{\lin}[1]{\operatorname{lin}(#1)}
\renewcommand{\dim}[1]{\operatorname{dim}(#1)}
\newcommand{\sdim}[1]{\operatorname{sdim}(#1)}

\newcommand{\poststar}[2]{\operatorname{Post}^*_{#1}(#2)}
\newcommand{\src}[1]{\operatorname{src}(#1)}
\newcommand{\tgt}[1]{\operatorname{tgt}(#1)}

\newcommand{\prestar}[2]{\operatorname{Pre}^*_{#1}(#2)}

\title{A Forward-Only Construction of Semilinear Inductive Invariants for VAS} %

\titlerunning{Forward-Only Inductive Invariants for VAS} %

\author{Clotilde Bizière}{LaBRI, Univ. Bordeaux, CNRS, Bordeaux INP, Talence, France \and University of Warsaw, Poland}{}{https://orcid.org/0009-0003-6469-1170}{Supported by Polish National Science Centre SONATA BIS-12 grant
number 2022/46/E/ST6/00230}
\author{Jérôme Leroux}{LaBRI, Univ. Bordeaux, CNRS, Bordeaux INP, Talence, France}{}{https://orcid.org/0000-0002-7214-9467}{}
\author{Grégoire Sutre}{LaBRI, Univ. Bordeaux, CNRS, Bordeaux INP, Talence, France}{}{https://orcid.org/0009-0004-3839-0005}{}

\authorrunning{C. Bizière, J. Leroux, and G. Sutre}

\Copyright{Clotilde Bizière, Jérôme Leroux, and Grégoire Sutre}

\ccsdesc[500]{Theory of computation~Logic and verification}

\keywords{%
  Vector addition systems,
  Inductive invariants,
  Semilinear sets,
  Verification
}

\funding{%
This work was supported by the grant ANR-25-CE48-6933
of the French National Research Agency
(project CoqoPetri).}%

\nolinenumbers %

\EventEditors{Michal Kouck\'{y} and Daniela Petri\c{s}an}
\EventNoEds{2}
\EventLongTitle{51st International Symposium on Mathematical Foundations of Computer Science (MFCS 2026)}
\EventShortTitle{MFCS 2026}
\EventAcronym{MFCS}
\EventYear{2026}
\EventDate{August 24--28, 2026}
\EventLocation{Paris, France}
\EventLogo{}
\SeriesVolume{386}
\ArticleNo{3}

\begin{document}

\maketitle

\begin{abstract}
  The reachability problem for Vector Addition Systems (VAS) is a central decision problem in the theory of infinite-state systems, first solved by Kosaraju and Mayr in the 1980s. An alternative, conceptually simpler approach introduced by Leroux shows that non-reachability is always witnessed by semilinear inductive invariants, yielding a decision procedure by combining an enumeration of runs with a search for such invariants. However, the construction of these invariants relies on a back-and-forth scheme that depends symmetrically on the source and the target. As a result, the invariants are not guaranteed to reflect the structural properties of the VAS, and the construction is difficult to extend to asymmetric models such as Branching VAS.

We introduce a new forward-only construction of semilinear inductive invariants for VAS. Our method builds invariants from the source configuration alone and avoids the need for backward reasoning. This yields invariants that are more canonical and better aligned with the structure of the system. In particular, our method produces periodic inductive invariants for periodic VAS.

Beyond its intrinsic interest, our approach provides a step toward extending invariant-based techniques to Branching VAS.
\end{abstract}

\section{Introduction}\label{sec:introduction}
Vector Addition Systems (VAS), equivalently known as Petri nets, are a fundamental model of computation used to represent concurrent processes arising in areas such as distributed systems, chemical reaction networks, and biological systems. 
A VAS consists of an initial configuration in $\setN^d$ together with a finite set of actions in $\setZ^d$.
The reachability problem asks whether a given target vector can be obtained by iteratively adding actions to the initial configuration while maintaining non-negativity.

The reachability problem for VAS is one of the most fundamental decision problems in the theory of infinite-state systems. It was first posed in the late 1960s~\cite{KarpM69} and its decidability, established in the early 1980s through the works of Mayr and Kosaraju~\cite{Kosaraju,Mayr84}, is widely regarded as a major breakthrough in theoretical computer science.
Despite this early result, the computational complexity of the problem remained open for several decades, until it was finally shown to be Ackermann-complete, more than forty years after its introduction~\cite{DBLP:conf/lics/LerouxS19,DBLP:conf/focs/Leroux21,CO22}.

Two different approaches to the decidability of VAS reachability are known.
The first one, often referred to as the KLM decomposition (after Kosaraju, Lambert, Mayr), is based on a structural analysis of runs. It proceeds by decomposing the system into simpler components until a certain sufficient condition for reachability is met, or the decomposition process is exhausted. While powerful, this approach is technically involved, and the underlying condition long appeared somewhat ad hoc, until it was later given a conceptual explanation in terms of ideals of well-quasi-orders~\cite{LS15}. Moreover, refinements of this approach have led to the Ackermann complexity upper bound~\cite{DBLP:conf/lics/LerouxS19}.

A second, more recent approach was introduced by Leroux in the 2010s~\cite{DBLP:conf/popl/Leroux11,Turing-100:Vector_Addition_Systems_Reachability}. It is based on the use of inductive invariants to characterize non-reachability. More precisely, Leroux showed that if a target configuration is not reachable, then there exists a semilinear set containing the initial configuration, closed under the actions of the system, and excluding the target. This yields a decision procedure by combining two semi-algorithms: one enumerating runs, and the other enumerating semilinear sets in search of such inductive invariants.
\medskip

Numerous extensions of Vector Addition Systems have been introduced over the years, either to model richer computational phenomena or to explore the boundaries of decidability. For several of these extensions, the reachability problem remains decidable—for instance in the presence of nested zero tests~\cite{REINHARDT2008239,VASSnzCGL}. Recently, reachability has also been shown to be decidable for pushdown VAS~\cite{GKM25}, a long-standing open problem. However, it is still open for well-studied models, including unordered data nets~\cite{datanets} and, in particular, Branching Vector Addition Systems (BVAS)~\cite{GGS04,DBLP:journals/dmtcs/VermaG05}.
BVAS extend VAS with branching transitions, allowing one to combine two reachable configurations into a new one. As a result, executions are no longer sequences but trees. 

Extending existing techniques for VAS to BVAS is therefore a natural research direction. However, the classical KLM approach appears difficult to generalize to this setting. In contrast, the inductive invariant approach of Leroux seems more promising: recent results show that reachability sets of BVAS enjoy geometric properties similar to those used in Leroux’s proof for VAS~\cite{Fossacs26}.

A major obstacle, however, lies in the fundamentally asymmetric nature of BVAS. Indeed, Leroux’s construction crucially relies on a symmetry property of VAS: by taking the opposite of each action of a VAS, one obtains a reversed VAS in which every run from $\vec s$ to $\vec t$ in the original VAS can be reversed into a run from $\vec t$ to $\vec s$.
This symmetry breaks down in BVAS, where computations are inherently tree-shaped: the initial configuration is placed at the leaves, while the target appears at the root.
\medskip

Beyond these considerations, Leroux’s construction already presents certain limitations even in the setting of plain VAS. His approach follows a back-and-forth scheme: it alternately expands a semilinear over-approximation $\vec S$ of the forward reachability set from the source and a semilinear over-approximation $\vec T$ of the backward reachability set from the (unreachable) target, while ensuring that there is no run from $\vec S$ to $\vec T$, until the two sets become complementary (which implies that $\vec S$ is inductive). As a result, the inductive invariant obtained depends equally on the source and the target.

However, in many applications, the source and the target play fundamentally different roles. The source is part of the system specification, fixed once and for all, whereas the target represents a query that may vary. This asymmetry is also reflected in other VAS verification problems, such as coverability or boundedness, which are inherently one-sided.

From this perspective, the back-and-forth nature of Leroux’s construction makes the resulting invariants less canonical and harder to relate directly to the structure of the VAS.
A good example is the case of periodic VAS, i.e., VAS whose reachability set is periodic, meaning that it contains the zero vector and is closed under addition. Such systems naturally arise as abstractions of population protocols~\cite{DBLP:conf/podc/AngluinADFP04} and leaderless communication protocols~\cite{10.1145/146637.146681}.
The inductive invariants produced by Leroux’s method for periodic VAS are not always periodic.

\medskip

\noindent \textbf{Our contributions} 
\begin{itemize}
\item[(i)] We propose a new forward-only construction of inductive invariants for VAS.
In contrast with Leroux’s back-and-forth approach, our construction depends primarily on the reachable configurations from the source rather than on the co-reachable configurations from the target and builds invariants in a purely forward manner.\footnote{Our construction still depends on the target but this is unavoidable. Indeed,  since the forward reachability set from the source is not semilinear in general, there is no semilinear inductive invariant disjoint from every unreachable target.} As a result, it yields inductive invariants that more directly reflect the structure of the system. 
\item[(ii)] We introduce the class of periodic VAS defined as the class of VAS with periodic reachability sets. We provide an effective characterization of those VAS proving that the class of periodic VAS is recursive. We prove that the reachability problem for plain VAS can be reduced to the reachability problem for periodic VAS and we explain how semilinear inductive invariant for plain VAS can be computed from semilinear periodic inductive invariant for periodic VAS.
\item[(iii)] We apply our forward-only construction of inductive invariants to the class of periodic VAS and we prove that if the reachability set of a periodic VAS has an empty intersection with a semilinear set of target configurations, then there exists a semilinear periodic inductive invariant that is disjoint from the semilinear target set.
\end{itemize}

\section{Example} \label{sec:example}
These examples illustrate the differences between our forward-only construction and Leroux’s back-and-forth construction.
\ifthenelse{\boolean{arXivLongVersion}}{%
A detailed presentation is given in Appendix~\ref{app:example}, where all necessary definitions are recalled and both constructions are described intuitively.
}{}

\begin{example} \label{ex:1}
    Consider the VAS with actions $\vec A \coloneqq \{(0,1,0), (0,-1,1), (1,2,-2)\}$ and initial configuration $\vec C_0 = \{\vec 0\}$. We take unreachable targets of the form $\vec C_{\text{bad}} \coloneqq \{(x_{\text{bad}},1,0)\}$ for some $x_{\text{bad}} \ge 1$. The back-and-forth construction yields inductive invariants of the form $\setN^3 \setminus ([1,\infty) \times \{(0,0)\} \cup \{(x_{\text{bad}},1,0), (x_{\text{bad}},0,0)\})$, while our forward-only construction returns $\{(0,0,0), (0,1,0)\} \cup ((0,2,0) + \setN^3) \cup ((0,0,1) + \setN^3)$. 
    In particular, the invariant produced by the back-and-forth construction depends strongly on $\vec C_{\text{bad}}$, unlike the forward-only one.
\end{example}

\begin{example} \label{ex:BnF} 
    Consider the periodic VAS with actions $\vec A \coloneqq \{5,6\}$ and initial configuration $\vec C_0 \coloneqq \{0\}$. Let $\vec C_{\text{bad}} = \{14\}$. The back-and-forth construction yields the invariant $\setN \setminus \{1,2,3,4,8,9,14\}$, which is not periodic (it contains $7$ but not $7+7 = 14$). By contrast, the forward-only construction returns the periodic invariant $\setN \setminus \{1,2,3,4,7,8,9,13,14\}$.
\end{example}

\section{Preliminaries}\label{sec:preliminaries}
This section recalls the definitions of periodic sets and semilinear sets as well as classical results about vector addition systems. In the sequel, the set of natural numbers, integers, non-negative rational numbers, and rational numbers are denoted respectively by $\setN$, $\setZ$, $\setQ_{\geq 0}$, and $\setQ$. Vectors as well as sets of vectors are denoted in bold face and operations are extended component-wise. We denote by $\vec{e}_i$ the \emph{$i$th unit vector} of $\setN^d$ defined by $\vec{e}_i(i)=1$ and $\vec{e}_i(j)=0$ for every $j\not=i$. The sum $\vec{X}+\vec{Y}$ of two sets $\vec{X},\vec{Y}\subseteq\setQ^d$ is the set $\{\vec{x}+\vec{y} \mid (\vec{x},\vec{y})\in\vec{X}\times\vec{Y}\}$. We also denote by $\vec{X}+\vec{y}$ and $\vec{x}+\vec{Y}$ the sets $\vec{X}+\{\vec{y}\}$ and $\{\vec{x}\}+\vec{Y}$ where $\vec{x},\vec{y}\in\setQ^d$ and $\vec{X},\vec{Y}\subseteq \setQ^d$. We denote by $\setQ\vec{X}$ and $\setQ_{\geq 0}\vec{X}$ the sets $\{\lambda\vec{x}\mid (\lambda,\vec{x})\in\setQ\times\vec{X}\}$ and $\{\lambda\vec{x}\mid (\lambda,\vec{x})\in\setQ_{\geq 0}\times\vec{X}\}$ respectively.

\subsection{Vector Spaces and Space-dimension}
A \emph{vector space} is a set $\vec{V}\subseteq \setQ^d$ that contains the zero vector, such that $\vec{V}+\vec{V}\subseteq\vec{V}$ and such that $\setQ\vec{V}\subseteq \vec{V}$. The vector space \emph{spanned} by a set $\vec{X}\subseteq \setQ^d$ is the set of finite sums $\lambda_1\vec{x}_1+\cdots+\lambda_k\vec{x}_k$ of $k\in\setN$ vectors $\vec{x}_j\in \vec{X}$ scaled by rational numbers $\lambda_j\in \setQ$. Recall that every vector space $\vec{V}\subseteq \setQ^d$ is spanned by a finite set of vectors. The minimal number of vectors spanning $\vec{V}$ is called the \emph{dimension} of $\vec{V}$ and it is denoted by $\dim{\vec{V}}$. Let us recall that $\dim{\vec{V}}=0$ if and only if $\vec{V}=\{\vec{0}\}$. Moreover $\dim{\vec{V}}<\dim{\vec{W}}$ for every vector spaces $\vec{V}\subset\vec{W}$, and in particular $\dim{\vec{V}}\leq d$ for every vector space $\vec{V}\subseteq \setQ^d$. Finally, if $\vec{V}$ is spanned by a set $\vec{X}\subseteq \setQ^d$ then $\vec{V}$ is spanned by a subset of $\dim{\vec{V}}$ vectors in $\vec{X}$.

\medskip

The \emph{space-dimension}~\cite[Section~5]{DBLP:conf/popl/Leroux11} of a set $\vec{X}\subseteq \setQ^d$ is the minimal $r\in\{-1,\ldots,d\}$ such that there exists a sequence $\vec{x}_1,\ldots,\vec{x}_k\in\setQ^d$ and a sequence $\vec{V}_1,\ldots,\vec{V}_k$ of vector spaces $\vec{V}_j$ such that $\dim{\vec{V}_j}\leq r$ and such that $\vec{X}\subseteq \vec{x}_1+\vec{V}_1\cup\cdots\cup \vec{x}_k+\vec{V}_k$. We denote by $\sdim{\vec{X}}$ the \emph{space-dimension} of $\vec{X}$. Observe that $\sdim{\vec{X}}=-1$ iff $\vec{X}$ is empty and $\sdim{\vec{X}}=0$ iff $\vec{X}$ is a non-empty finite set. Note also that $\sdim{\vec{X}\cup\vec{Y}}=\max\{\sdim{\vec{X}},\sdim{\vec{Y}}\}$,  $\sdim{\vec{X}}\leq \sdim{\vec{Y}}$ if $\vec{X}\subseteq\vec{Y}$, and $\sdim{\vec{x}+\vec{X}}=\sdim{\vec{X}}$ for every $\vec{x}\in\setQ^d$ and $\vec{X},\vec{Y}\subseteq \setQ^d$.

\begin{example}\label{ex:D}
 The space dimension of $(\{0\}\times\setN)\cup(\setN\times\{0\})$ is $1$. The space dimension of $\{(n,2^{n+1}) \mid n\in\setN\}$ is $2$. 
\end{example}

The following lemma and corollary shows that the space-dimension of a vector space coincides with its dimension. The proofs are similar to~\cite[Lemma 7.3]{Turing-100:Vector_Addition_Systems_Reachability} and~\cite[Lemma 7.4]{Turing-100:Vector_Addition_Systems_Reachability}.
\ifthenelse{\boolean{arXivLongVersion}}{%
They are also given in appendix for sake of completeness.
}{}
\begin{restatable}{lemma}{leminsecable}\label{lem:insecable}
  Let $\vec{P}\subseteq \setQ^d$ such that $\vec{P}+\vec{P}\subseteq \vec{P}$. Let $k\in\setN_{>0}$, $\vec{V}_1,\ldots,\vec{V}_k$ be a sequence of vector spaces of $\setQ^d$ and $\vec{x}_1,\ldots,\vec{x}_k$ be a sequence of vectors in $\setQ^d$. We have $\vec{P}\subseteq \bigcup_{j=1}^k\vec{x}_j+\vec{V}_j$ if, and only if, there exists $j\in\{1,\ldots,k\}$ such that $\vec{P}\subseteq \vec{x}_j+\vec{V}_j$.
\end{restatable}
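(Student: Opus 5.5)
The \emph{if} direction is immediate, so I would concentrate on the \emph{only if} direction and argue by induction on $k$. The case $k=1$ is trivial. For the inductive step, assume $\vec{P}\subseteq\bigcup_{j=1}^k\vec{x}_j+\vec{V}_j$ with $k\geq 2$. If some piece is redundant for $\vec{P}$, that is, $\vec{P}$ is already covered by the other $k-1$ affine spaces, the induction hypothesis finishes the argument. So I would assume that for every $j$ there is a witness $\vec{p}_j\in\vec{P}$ lying in $\vec{x}_j+\vec{V}_j$ and in no other piece. Under this assumption I aim for a contradiction, by producing an element of $\vec{P}$ that lies in no piece at all.

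The key device is to take multiples and sums inside $\vec{P}$. Since $\vec{P}+\vec{P}\subseteq\vec{P}$, every $n\vec{p}$ with $n\geq 1$ and $\vec{p}\in\vec{P}$ belongs to $\vec{P}$. The first step is a pigeonhole observation. For a fixed $\vec{p}\in\vec{P}$, the infinitely many points $n\vec{p}$ are distributed among finitely many pieces, so some piece $\vec{x}_j+\vec{V}_j$ contains $n\vec{p}$ for two distinct values of $n$. Taking differences gives $\vec{p}\in\vec{V}_j$. Consequently every $\vec{p}\in\vec{P}$ lies in $\bigcup_j\vec{V}_j$.

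The second step combines two witnesses. Consider the points $\vec{p}_1+n\vec{p}_2\in\vec{P}$ for $n\geq1$. By the same pigeonhole argument, some piece $\vec{x}_j+\vec{V}_j$ contains two of them, and their difference shows $\vec{p}_2\in\vec{V}_j$. That piece then also contains $\vec{p}_1+n\vec{p}_2$ for every $n$ in $\setZ$ after adding multiples of $\vec{p}_2$; the intent is to reach the case $n=0$, which yields $\vec{p}_1\in\vec{x}_j+\vec{V}_j$, hence $j=1$ by the choice of $\vec{p}_1$, and therefore $\vec{p}_2\in\vec{V}_1$. By symmetry, $\vec{p}_1\in\vec{V}_2$, and more generally $\vec{p}_i\in\vec{V}_j$ for all $i\neq j$.

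The main obstacle is turning these membership relations into a point of $\vec{P}$ that escapes every piece. My first attempt would be to consider $\vec{q}=\vec{p}_1+\cdots+\vec{p}_k$ together with the multiples $n\vec{p}_1+\vec{p}_2+\cdots+\vec{p}_k$, and to use the single-piece lemma above to show that the piece containing these points must coincide with the piece containing each witness, which would contradict the uniqueness of the witnesses. If that approach becomes messy, I would fall back on an induction on $\sum_j\dim{\vec{V}_j}$. In that version, a piece $j$ with $\vec{P}\not\subseteq\vec{V}_j$ has $\vec{P}\cap(\vec{x}_j+\vec{V}_j)$ confined to a proper affine subspace of the affine span of $\vec{P}$, and I would argue that such a piece can be discarded, reducing to the case where every $\vec{V}_j$ contains $\vec{P}$. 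In that case, $\vec{P}$ meeting $\vec{x}_j+\vec{V}_j$ forces $\vec{P}\subseteq\vec{x}_j+\vec{V}_j$, because the piece is then a union of translates in directions containing $\vec{P}$. This last step is the one I would check most carefully.
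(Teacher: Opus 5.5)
Your setup (discard redundant pieces by induction, then pick for each $j$ a witness $\vec{p}_j\in\vec{P}$ lying only in $\vec{x}_j+\vec{V}_j$) and your two pigeonhole steps are sound. The gap is that the proposal stops exactly where the proof has to be finished. You label the derivation of the contradiction ``the main obstacle'' and offer two sketches, but carry out neither.

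The first sketch does not go through as described. Pigeonholing the points $n\vec{p}_1+\vec{p}_2+\cdots+\vec{p}_k$ only yields some $l$ with $\vec{p}_1\in\vec{V}_l$ and $\vec{p}_2+\cdots+\vec{p}_k\in\vec{x}_l+\vec{V}_l$. Since $\vec{p}_2+\cdots+\vec{p}_k$ is not a witness, nothing pins down $l$. In the fallback, the claim that a piece with $\vec{P}\not\subseteq\vec{V}_j$ ``can be discarded'' carries essentially the whole content of the lemma, and you assert it without argument. (It is true, by your own trick applied to $\vec{p}+n\vec{r}$ for a fixed $\vec{r}\in\vec{P}\setminus\vec{V}_j$, but the proposal does not supply this.)

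The missing step is in fact one line from your second step. Nothing there uses that $\vec{p}_2$ is a witness. Take an arbitrary $\vec{p}\in\vec{P}$, including $\vec{p}=\vec{p}_1$ itself, and pigeonhole the points $\vec{p}_1+n\vec{p}$. This gives a piece $j$ with $\vec{p}\in\vec{V}_j$ and $\vec{p}_1\in\vec{x}_j+\vec{V}_j$. Hence $j=1$ and $\vec{p}\in\vec{V}_1$, so $\vec{P}\subseteq\vec{V}_1$. Together with $\vec{p}_1\in\vec{x}_1+\vec{V}_1$ this gives $\vec{x}_1\in\vec{V}_1$, hence $\vec{P}\subseteq\vec{V}_1=\vec{x}_1+\vec{V}_1$. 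That is the desired conclusion, and it contradicts $\vec{p}_2\notin\vec{x}_1+\vec{V}_1$. You recorded only the off-diagonal relations $\vec{p}_i\in\vec{V}_j$ for $i\neq j$; the diagonal one, $\vec{p}_1\in\vec{V}_1$, is what closes the argument.

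Once completed this way, your proof is a variant of the paper's. The paper applies the same $\vec{p}_0+n\vec{p}$ device to an arbitrary $\vec{p}_0$ outside the last piece. It obtains $\vec{P}\subseteq\bigcup_{j\in J}\vec{V}_j$, where $J$ is the set of pieces containing $\vec{p}_0$, and then invokes the induction hypothesis on that linear cover. Your non-redundancy reduction makes $J$ a singleton, so you use induction only to discard redundant pieces.
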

\begin{proof}[Proof Sketch.]
  We observe that if there exists $\vec{p}_0\in \vec{P}\setminus (\vec{x}_k+\vec{V}_k)$ then $\vec{p}_0+\vec{P}\subseteq \bigcup_{j=1}^{k-1}\vec{x}_j+\vec{V}_j$. This property is obtained by observing that if $\vec{p}_0+\setN\vec{p}$ where $\vec{p}\in\vec{P}$ has an infinite intersection with some $\vec{x}_j+\vec{V}_j$, then $\vec{p}_0+\setQ\vec{p}$ is included in $\vec{x}_j+\vec{V}_j$ since $\vec{V}_j$ is a vector space. The proof of the lemma is then obtained by induction on $k$.
\end{proof}

\begin{restatable}{corollary}{cordimspacedim}\label{cor:dimspacedim}
  We have $\sdim{\vec{P}}=\dim{\vec{V}}$ where $\vec{V}$ is the vector space spanned by a non-empty set $\vec{P}\subseteq \setQ^d$ such that $\vec{P}+\vec{P}\subseteq\vec{P}$.
\end{restatable}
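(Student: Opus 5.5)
Set $r=\dim{\vec{V}}$, where $\vec{V}$ is the vector space spanned by $\vec{P}$. We prove the two inequalities $\sdim{\vec{P}}\leq r$ and $\sdim{\vec{P}}\geq r$ separately.

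\emph{Upper bound.} Since $\vec{P}\subseteq \vec{0}+\vec{V}$, the single vector $\vec{x}_1=\vec{0}$ and the single space $\vec{V}_1=\vec{V}$ form an admissible covering with $\dim{\vec{V}_1}= r$. Hence $\sdim{\vec{P}}\leq r$.

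\emph{Lower bound.} Let $r'=\sdim{\vec{P}}$. Because $\vec{P}$ is non-empty, we have $r'\geq 0$, so there exist $k\geq 1$, vectors $\vec{x}_1,\ldots,\vec{x}_k$ and vector spaces $\vec{V}_1,\ldots,\vec{V}_k$ with $\dim{\vec{V}_j}\leq r'$ and $\vec{P}\subseteq\bigcup_{j=1}^k\vec{x}_j+\vec{V}_j$. Since $\vec{P}+\vec{P}\subseteq\vec{P}$, Lemma~\ref{lem:insecable} yields an index $j$ with $\vec{P}\subseteq\vec{x}_j+\vec{V}_j$. It remains to show that this forces $\vec{V}\subseteq\vec{V}_j$, which gives $r=\dim{\vec{V}}\leq\dim{\vec{V}_j}\leq r'$.

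To see this, fix $\vec{p}_0\in\vec{P}$. Every $\vec{p}\in\vec{P}$ satisfies $\vec{p}_0+\vec{p}\in\vec{P}$. Both $\vec{p}_0$ and $\vec{p}_0+\vec{p}$ therefore lie in $\vec{x}_j+\vec{V}_j$, so their difference $\vec{p}$ lies in $\vec{V}_j$. Thus $\vec{P}\subseteq\vec{V}_j$. Since $\vec{V}_j$ is a vector space, it contains the span $\vec{V}$ of $\vec{P}$. We then conclude with the monotonicity of dimension under inclusion recalled in the preliminaries: either $\vec{V}=\vec{V}_j$, or $\vec{V}\subset\vec{V}_j$ and then $\dim{\vec{V}}<\dim{\vec{V}_j}$.

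The only step with real content is the reduction from a finite union of cosets to a single coset, and it is supplied by Lemma~\ref{lem:insecable}. The translation trick that removes the offset $\vec{x}_j$ is the one point requiring care. Here closure under addition is used a second time, which is why we need $\vec{p}_0+\vec{p}\in\vec{P}$ and not merely that $\vec{P}$ lies in the coset.
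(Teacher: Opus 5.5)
Your proof is correct and follows the paper's argument essentially step for step. The upper bound uses the trivial one-coset cover by $\vec{V}$. The lower bound uses Lemma~\ref{lem:insecable} to reduce to a single coset $\vec{x}_j+\vec{V}_j$, then translates by a fixed $\vec{p}_0\in\vec{P}$ to get $\vec{P}\subseteq\vec{V}_j$, hence $\vec{V}\subseteq\vec{V}_j$ and $\dim{\vec{V}}\leq\sdim{\vec{P}}$.
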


\subsection{Periodic Sets and Semilinear Sets}
A set $\vec{P}\subseteq \setN^d$ is said to be \emph{periodic} if it is a submonoid of $(\setN^d,+)$, i.e., if it is such that $\vec{0}\in\vec{P}$ and $\vec{P}+\vec{P}\subseteq \vec{P}$. The periodic set \emph{spanned} by a set $\vec{G}\subseteq\setN^d$ is the set $\per{\vec{G}}$ of finite sums $\vec{g}_1+\cdots+\vec{g}_k$ of $k\in\setN$ vectors $\vec{g}_j\in \vec{G}$. Notice that $\per{\vec{G}}$ is the minimal periodic set that contains $\vec{G}$. A \emph{periodic set} $\vec{P}\subseteq \setN^d$ is said to be \emph{finitely-generated} if $\vec{P}=\per{\vec{G}}$ for some finite set $\vec{G}\subseteq \setN^d$.

A \emph{linear set} is a set $\vec{L}\subseteq \setN^d$ of the form $\vec{b}+\per{\vec{G}}$ where $\vec{b}\in\setN^d$ is called the \emph{base} and $\vec{G}\subseteq\setN^d$ is a finite set of vectors called the \emph{periods}. A pair $(\vec{b},\vec{G})$ is called a \emph{linear-presentation} of $\vec{L}$.  A \emph{semilinear set} $\vec{S}\subseteq \setN^d$ is a finite union $\vec{L}_1\cup\ldots\cup\vec{L}_k$ of linear sets $\vec{L}_j\subseteq \setN^d$ where $k\in\setN$. A finite set $\{(\vec{b}_1,\vec{G}_1),\ldots,(\vec{b}_k,\vec{G}_k)\}$ where $(\vec{b}_j,\vec{G}_j)$ is a linear-presentations of $\vec{L}_j$ is called a \emph{semilinear-presentation} of $\vec{S}$. In the sequel semilinear sets (resp. linear sets) are always assumed to be implicitly given by presentations.

Let us recall that the class of semilinear sets is stable by many operations, and in particular by sum, union, intersection, complement, projection, and Cartesian product since a set is semilinear iff it is definable in the Presburger arithmetic~\cite{gs66}. Moreover, the periodic sets spanned by semilinear sets are semilinear since $\per{\vec{X}\cup\vec{Y}}=\per{\vec{X}}+\per{\vec{Y}}$ for every sets $\vec{X},\vec{Y}\subseteq\setN^d$.

\begin{example}
  A finitely-generated periodic set is clearly semilinear. The converse is not true since the periodic set $\vec{P}=\{(0,0)\}\cup \setN_{>0}^2$ is semilinear but not finitely-generated.
\end{example}

In the sequel, we are going to under-approximate linear sets $\vec{L}$ by linear sets of the form $\vec{L}'\coloneqq\vec{L}+\vec{q}$ for some $\vec{q}\in\setN^d$ such that $\vec{L}'\subseteq \vec{L}$. The following lemma shows that the space-dimension of the difference $\vec{L}\setminus \vec{L}'$ is strictly smaller than the space-dimension of $\vec{L}$.
\begin{restatable}{lemma}{lemspacedifflinear}\label{lem:spacediff:linear}
  For every linear set $\vec{L} \subseteq \setN^d$ and for every $\vec{q} \in \setN^d$ such that $\vec{L}'\coloneqq(\vec{L} + \vec{q}) \subseteq \vec{L}$, we have:
  $$\sdim{\vec{L} \setminus \vec{L}'} < \sdim{\vec{L}}$$
\end{restatable}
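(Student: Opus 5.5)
Write $\vec{L}=\vec{b}+\per{\vec{G}}$ with $\vec{G}=\{\vec{g}_1,\ldots,\vec{g}_m\}$ finite, and let $\vec{V}$ be the vector space spanned by $\vec{G}$. Since $\per{\vec{G}}$ is non-empty and closed under addition, Corollary~\ref{cor:dimspacedim} together with translation invariance of the space-dimension gives $\sdim{\vec{L}}=\dim{\vec{V}}$. The plan is to show that $\vec{L}\setminus\vec{L}'$ is covered by finitely many sets of the form $\vec{x}+\vec{W}$ with $\vec{W}\subsetneq\vec{V}$ a proper subspace, so that $\dim{\vec{W}}<\dim{\vec{V}}$.

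\emph{Step 1: locating $\vec{q}$.} Since $\vec{b}+\vec{q}\in\vec{L}'\subseteq\vec{L}$, we get $\vec{q}\in\per{\vec{G}}$. Hence we can write $\vec{q}=\sum_i n_i\vec{g}_i$ with $n_i\in\setN$, and $\vec{q}\in\vec{V}$.

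\emph{Step 2: covering.} Let $N\coloneqq\max_i n_i$ (if $\vec{q}=\vec{0}$ the difference is empty and there is nothing to prove). Take $\vec{x}=\vec{b}+\sum_i k_i\vec{g}_i\in\vec{L}$. If $k_i\geq n_i$ for every $i$, then $\vec{x}=\vec{b}+\vec{q}+\sum_i(k_i-n_i)\vec{g}_i\in\vec{L}'$. So every $\vec{x}\in\vec{L}\setminus\vec{L}'$ has some index $i$ with $k_i<n_i\le N$. This gives the inclusion
\[
\vec{L}\setminus\vec{L}'\subseteq\bigcup_{i=1}^m\bigcup_{c=0}^{N-1}\Bigl(\vec{b}+c\vec{g}_i+\per{\vec{G}\setminus\{\vec{g}_i\}}\Bigr).
\]
Each piece lies in $\vec{b}+c\vec{g}_i+\vec{V}_i$, where $\vec{V}_i$ is the span of $\vec{G}\setminus\{\vec{g}_i\}$.

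\emph{Step 3: the dimension drop (main obstacle).} The subspace $\vec{V}_i$ need not be proper: $\vec{g}_i$ may already lie in $\vec{V}_i$. I would fix this by restricting the representation of $\vec{q}$ rather than by a different covering.

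First, choose a subset $\vec{B}\subseteq\vec{G}$ that is a basis of $\vec{V}$. Next, pick some $\vec{p}\in\per{\vec{G}}$ that lies in the interior of the cone generated by $\vec{G}$ relative to $\vec{V}$; for instance $\vec{p}=\sum_i\vec{g}_i$. The key claim is that $\per{\vec{G}}\setminus(\vec{q}+\per{\vec{G}})$ is contained in the set of points $\vec{y}\in\per{\vec{G}}$ satisfying $\phi(\vec{y})<c_\phi$ for one of finitely many linear forms $\phi$ defining facets of that cone, together with a set handled by finitely many translates.

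This is the delicate part, so I would first try a cleaner alternative: argue by contradiction using Lemma~\ref{lem:insecable}. Suppose the difference has space-dimension $\dim{\vec{V}}$. By Step 2 it is covered by the sets $\vec{b}+c\vec{g}_i+\per{\vec{G}\setminus\{\vec{g}_i\}}$, so one of them, say $\vec{b}+c\vec{g}_i+\per{\vec{G}_i}$ with $\vec{G}_i\coloneqq\vec{G}\setminus\{\vec{g}_i\}$, meets the difference in a set of full space-dimension.

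Then I would induct on $|\vec{G}|$. Inside this piece, work with $\vec{L}_i\coloneqq\vec{b}+c\vec{g}_i+\per{\vec{G}_i}$ and compare it with $\vec{L}_i+\vec{q}_i$, where $\vec{q}_i\in\per{\vec{G}_i}$ is a multiple of $\vec{q}$ reachable without $\vec{g}_i$. Such a $\vec{q}_i$ exists when $\vec{g}_i\in\vec{V}_i$ and $\vec{q}$ is suitably chosen, because some positive multiple of $\vec{g}_i$ can be rewritten over $\vec{G}_i$ modulo the group generated. Applying the induction hypothesis yields the strict drop; the base case is when $\vec{G}$ is linearly independent, where Step 2 already works directly.

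Ensuring the existence of $\vec{q}_i$ and handling the passage from $\vec{q}$ to its multiple, via $\vec{L}+k\vec{q}\subseteq\vec{L}+\vec{q}$, is the step I expect to require the most care.
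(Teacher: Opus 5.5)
Your Steps 1 and 2 are correct, and an induction on $|\vec G|$ that peels off one generator at a time can be made to work. However, the claim on which your Step 3 rests is false as stated. You want $\vec q_i\in\per{\vec G_i}$ to be a positive multiple of $\vec q$, and you assert that such a vector exists as soon as $\vec g_i\in\vec V_i$. But some $k\vec q$ with $k\geq 1$ lies in $\per{\vec G_i}$ only if $\vec q\in\setQ_{\geq 0}\per{\vec G_i}$, which is a cone condition, whereas $\vec g_i\in\vec V_i$ is only a span condition. Concretely, take $\vec G=\{(1,0),(0,1),(1,1)\}$, $\vec q=(1,0)$ and $\vec g_i=(1,0)$. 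The only representation of $\vec q$ has $n_i=1$, so the piece $\vec b+\per{\vec G_i}$ occurs in your covering; it even contains all of $\vec L\setminus\vec L'$. Moreover $\vec V_i=\setQ^2=\vec V$. Yet $\per{\vec G_i}=\{(x,y)\in\setN^2\mid x\leq y\}$ contains no $(k,0)$ with $k>0$. Since $\vec q$ is given, you cannot ``choose it suitably''. The facet-based alternative you mention is only sketched. So, as written, Step 3 does not close.

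The repair is to drop the requirement that $\vec q_i$ be a multiple of $\vec q$. Any $\vec q_i\in(\vec q+\per{\vec G})\cap\per{\vec G_i}$ suffices, because then $\vec L+\vec q_i\subseteq\vec L+\vec q=\vec L'$. Such a vector exists when $\vec g_i\in\vec V_i$. Your own rewriting observation gives $k\vec g_i=\sum_{j\neq i}z_j\vec g_j$ with $k\geq 1$ and $z_j\in\setZ$. Then $\vec q_i\coloneqq\vec q+(k-1)n_i\vec g_i+M\sum_{j\neq i}\vec g_j=\sum_{j\neq i}(n_j+n_iz_j+M)\vec g_j$ belongs to $\per{\vec G_i}$ for $M$ large. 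With $\vec L_i\coloneqq\vec b+c\vec g_i+\per{\vec G_i}$ you get $\vec L_i+\vec q_i\subseteq\vec L_i\cap\vec L'$. Hence $\vec L_i\cap(\vec L\setminus\vec L')\subseteq\vec L_i\setminus(\vec L_i+\vec q_i)$, whose space-dimension is $<\dim{\vec V_i}=\dim{\vec V}$ by induction. Pieces with $\vec g_i\notin\vec V_i$ are handled directly, the base case is $\vec G=\emptyset$, and no contradiction argument is needed.

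With this fix your proof is correct but differs from the paper's. The paper uses the same clear-denominators-and-pad trick, but only once. For every subset $J$ of generators spanning $\vec V$ it finds $n_J$ with $n_J\sum_{j\in J}\vec g_j\in\vec q+\per{\vec G}$. With $n=\max_J n_J$, it covers $\per{\vec G}\setminus(\vec q+\per{\vec G})$ directly by the sets $\sum_{j\notin J}\{0,\ldots,n-1\}\vec g_j+\vec V_J$ over non-spanning $J$, avoiding induction altogether.
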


\begin{remark}
  If $\vec{L}=\vec{b}+\vec{P}$ is a linear set, a vector $\vec{q}$ is such that $\vec{L}+\vec{q}\subseteq \vec{L}$ iff $\vec{q}\in\vec{P}$.
\end{remark}

\subsection{Vector Addition Systems (VAS)}
\label{subsec:vas}
A \emph{$d$-dim vector addition system} (\emph{$d$-VAS} or just \emph{VAS} for short) is a finite set $\vec{A}$ of vectors in $\setZ^d$ called \emph{actions}. A \emph{configuration} is a vector in $\setN^d$. An \emph{initialized VAS}, or just a \emph{VAS} if there is no ambiguity, is a pair $(\vec{A},\vec{C}_0)$ where $\vec{C}_0\subseteq \setN^d$ is a semilinear set of \emph{initial configurations}. The \emph{reachability set} of a VAS $(\vec{A},\vec{C}_0)$ is the set of configurations $\poststar{\vec{A}}{\vec{C}_0}$ defined as the unique minimal (for the inclusion) set of configurations satisfying the following monovariate system of constraints:
\begin{align}
  \label{eq:inductive:init} & \vec{X}\supseteq \vec{C}_0\\ 
  & \vec{X}\supseteq (\vec{X}+\vec{a})\cap\setN^d & \forall \vec{a}\in\vec{A}\label{eq:inductive:step}
\end{align}

An \emph{inductive invariant} for an initialized VAS $(\vec{A},\vec{C}_0)$, resp. for a VAS $\vec{A}$, is a set of configurations $\vec{X}\subseteq \setN^d$ satisfying~(\ref{eq:inductive:init}) and~(\ref{eq:inductive:step}), resp. only~(\ref{eq:inductive:step}). Clearly, the reachability set of $(\vec{A},\vec{C}_0)$ is the minimal for the inclusion inductive invariant for $(\vec{A},\vec{C}_0)$. Since semilinear sets are effective for many operations, it follows the class of semilinear inductive invariants (for a initialized VAS or a VAS) is recursive.

\medskip

The reachability set can be equivalently expressed by introducing the classical notion of runs as follows. We associate with an action $\vec{a}\in\vec{A}$ the binary relation $\xrightarrow{\vec{a}}$ on $\setN^d$ defined by $\vec{x}\xrightarrow{\vec{a}}\vec{y}$ if $\vec{y}=\vec{x}+\vec{a}$. A \emph{run} $\rho$ of a vector addition system $\vec{A}$ is a non-empty word of configurations $\rho=\vec{c}_0\ldots\vec{c}_k$ such that for every $j\in\{1,\ldots,k\}$ there exists an action $\vec{a}_j\in\vec{A}$ such that $\vec{c}_{j-1}\xrightarrow{\vec{a}_j}\vec{c}_{j}$. In that case, we say that $\rho$ is a run from $\vec{c}_0$ to $\vec{c}_k$ labeled by $\sigma=\vec{a}_1\ldots\vec{a}_k$ and we write $\vec{c}_0\xrightarrow{\sigma}\vec{c}_k$. The configurations $\vec{c}_0$ and $\vec{c}_k$ are respectively called the \emph{source} and \emph{target} of $\rho$, and they are denoted by $\src{\rho}$ and $\tgt{\rho}$. The \emph{reachability relation} of a VAS $\vec{A}$ is the binary relation $\xrightarrow{\vec{A}^*}$ on the configurations defined by $\vec{x}\xrightarrow{\vec{A}^*}\vec{y}$ if there exists a run from $\vec{x}$ to $\vec{y}$ labeled by a word in $\vec{A}^*$. Observe that $\poststar{\vec{A}}{\vec{C}_0}$ is the set of configurations $\vec{y}\in\setN^d$ such that there exists a run from a configurations $\vec{x}\in\vec{C}_0$ labeled by a word in $\vec{A}^*$ to $\vec{y}$.

\medskip

The reachability problem for VAS takes as input a VAS $(\vec{A},\vec{C}_0)$ and a semilinear set $\vec{C}_{bad}$ and checks if $\vec{C}_{bad}$ has an empty intersection with the reachability set of $(\vec{A},\vec{C}_0)$. Based on a simple algorithm that enumerates semilinear inductive invariants and runs, the following theorem shows that the reachability problem for VAS is decidable.  
\begin{theorem}[\cite{DBLP:conf/lics/Leroux09,DBLP:conf/popl/Leroux11,Turing-100:Vector_Addition_Systems_Reachability}]\label{thm:inv}
  The reachability set of a VAS $(\vec{A},\vec{C}_0)$ has an empty intersection with a semilinear set $\vec{C}_{bad}$ if and only if there exists a semilinear inductive invariant $\vec{I}$ of $(\vec{A},\vec{C}_0)$ such that $\vec{I}\cap \vec{C}_{bad}$ is empty.
\end{theorem}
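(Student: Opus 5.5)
The easy direction is immediate. If $\vec{I}$ is an inductive invariant of $(\vec{A},\vec{C}_0)$, then $\poststar{\vec{A}}{\vec{C}_0}\subseteq\vec{I}$, because the reachability set is the least set satisfying~(\ref{eq:inductive:init}) and~(\ref{eq:inductive:step}). Hence $\vec{I}\cap\vec{C}_{bad}=\emptyset$ forces $\poststar{\vec{A}}{\vec{C}_0}\cap\vec{C}_{bad}=\emptyset$.

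For the converse I would follow Leroux's back-and-forth scheme, which is what the cited works do. I would maintain a pair of semilinear sets $(\vec{S},\vec{T})$ with four properties: $\vec{C}_0\subseteq\vec{S}$; $\vec{C}_{bad}\subseteq\vec{T}$; there is no run of $\vec{A}$ from $\vec{S}$ to $\vec{T}$; and $\vec{S}\cup\vec{T}$ is as large as possible. The initial pair is $(\vec{C}_0,\vec{C}_{bad})$, which is valid by hypothesis. When $\vec{S}$ and $\vec{T}$ are complementary, $\vec{S}$ is inductive. Indeed, if $\vec{x}\in\vec{S}$ and $\vec{x}+\vec{a}\in\setN^d\setminus\vec{S}=\vec{T}$, we would have a one-step run from $\vec{S}$ to $\vec{T}$, a contradiction. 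So $\vec{I}\coloneqq\vec{S}$ works.

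The core is therefore a progress lemma. If $\vec{R}\coloneqq\setN^d\setminus(\vec{S}\cup\vec{T})$ is non-empty, then the pair can be enlarged so that $\sdim{\vec{R}}$, or the number of linear components of $\vec{R}$ of maximal space-dimension, strictly decreases. Well-foundedness then follows because $\sdim{\cdot}$ ranges over $\{-1,\ldots,d\}$.

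To prove the progress lemma I would use a key geometric fact about VAS: reachability sets, and more precisely the reachability relation, are almost semilinear. That is, they are finite unions of sets $\vec{b}+\vec{P}$ where $\vec{P}$ is a periodic set whose spanned vector space is well-behaved, and which can be under-approximated by linear sets up to a part of lower space-dimension. Concretely, I would decompose $\vec{R}$ into linear sets $\vec{L}=\vec{b}+\per{\vec{G}}$ and take one of maximal space-dimension. Using the almost-semilinearity of $\poststar{\vec{A}}{\vec{S}}$ and of $\prestar{\vec{A}}{\vec{T}}$, I would show the following: for a suitable period $\vec{q}\in\per{\vec{G}}$, the shifted set $\vec{L}+\vec{q}$ is, up to lower-dimensional sets, included either in a region that may be safely added to $\vec{S}$ or in one that may be safely added to $\vec{T}$, without creating a run from $\vec{S}$ to $\vec{T}$. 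By Lemma~\ref{lem:spacediff:linear}, the leftover $\vec{L}\setminus(\vec{L}+\vec{q})$ has strictly smaller space-dimension. Lemma~\ref{lem:insecable} and Corollary~\ref{cor:dimspacedim} would be used to argue that a periodic set cannot be split across finitely many lower-dimensional affine pieces, so the maximal-dimension part must fall on one side.

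The main obstacle is that last step: establishing the almost-semilinearity of reachability relations. In Leroux's work this rests on a KLM-style or Lambert-style argument showing that the reachability relation restricted to a linear set is asymptotically captured by a periodic relation with a well-defined vector-space span. I would import this from~\cite{DBLP:conf/popl/Leroux11,Turing-100:Vector_Addition_Systems_Reachability} rather than reprove it, and concentrate on checking that the enlargement preserves the absence of runs from $\vec{S}$ to $\vec{T}$.
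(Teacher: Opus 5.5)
Your easy direction and the outer frame are correct: grow $(\vec{S},\vec{T})$ with no run from $\vec{S}$ to $\vec{T}$, and once they are complementary $\vec{S}$ is inductive. This is the back-and-forth scheme of the cited works, which the paper only recalls in its appendix. The paper's own contribution reaches the statement differently, through the reduction of Section~\ref{sub:plain2per} and the forward-only Algorithm~\ref{algo:inductive-invariant}.

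The genuine gap is your progress lemma, which is false as stated. A point may be added to $\vec{S}$ only if it lies outside $\prestar{\vec{A}}{\vec{T}}$, and to $\vec{T}$ only if it lies outside $\poststar{\vec{A}}{\vec{S}}$. Nothing prevents these two regions from cutting a maximal-dimensional component of $\vec{R}$ into two full-dimensional parts along a hyperplane. For instance, take $\vec{A}=\{(1,0),(1,1)\}$ with the initial pair $\vec{S}=\{(0,0)\}$ and $\vec{T}=\{(x,x+1)\mid x\in\setN\}$. Then $\poststar{\vec{A}}{\vec{S}}=\{(x,y)\mid y\leq x\}$ and $\prestar{\vec{A}}{\vec{T}}=\{(x,y)\mid y\geq x+1\}$. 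The set $\vec{L}=(2,0)+\per{\{(2,0),(0,2)\}}$ is a $2$-dimensional linear subset of $\vec{R}$, since parity keeps it off $\vec{T}$. For every period $\vec{q}$, the set $\vec{L}+\vec{q}$ meets both regions in $2$-dimensional sets. So it cannot be given to either side, even up to lower-dimensional sets. Lemma~\ref{lem:insecable} does not help: it only says that a periodic set covered by finitely many affine subspaces lies in one of them, and the two sides here are not of that form. Your secondary measure, the number of maximal-dimensional linear components of $\vec{R}$, is also not well defined, since such decompositions are not unique.

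The missing idea is to split $\vec{R}$ by linearizations, one side at a time. Let $\vec{X}\coloneqq\poststar{\vec{A}}{\vec{S}}\setminus\vec{S}$ and set $\vec{T}'\coloneqq\vec{T}\cup(\setN^d\setminus(\vec{S}\cup\lin{\vec{X}}))$. Then let $\vec{Y}\coloneqq\prestar{\vec{A}}{\vec{T}'}\setminus\vec{T}'$ and set $\vec{S}'\coloneqq\vec{S}\cup(\setN^d\setminus(\vec{T}'\cup\lin{\vec{Y}}))$. Safety is immediate, and the sequential order prevents runs between the two newly added parts. The new undetermined region lies in $\vec{R}\cap\lin{\vec{X}}\cap\lin{\vec{Y}}$. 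Here $\vec{X},\vec{Y}\subseteq\vec{R}$ are disjoint and almost semilinear, by Theorem~\ref{lem:reachalmost} applied to $\vec{A}$ and $-\vec{A}$.

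The strict decrease of $\sdim{\vec{R}}$ then rests on a precision lemma absent from your plan. For disjoint almost linear sets $\vec{b}+\vec{P}$ and $\vec{c}+\vec{P}'$, we have $\sdim{\lin{\vec{b}+\vec{P}}\cap\lin{\vec{c}+\vec{P}'}}<\max(\sdim{\vec{P}},\sdim{\vec{P}'})$. Suppose not. Then $\vec{P}$ and $\vec{P}'$ span the same space $\vec{V}$, and the intersection contains a linear set $\vec{z}+\per{\vec{H}}$ with $\vec{H}$ spanning $\vec{V}$. The sum $\vec{v}$ of the vectors of $\vec{H}$ is interior to both linearized periodic sets by Lemma~\ref{lem:intcara}. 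Lemma~\ref{lem:pushin} then yields some $r$ with $\vec{z}+r\vec{v}\in(\vec{b}+\vec{P})\cap(\vec{c}+\vec{P}')$, contradicting disjointness.

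So importing almost-semilinearity is legitimate, and the paper does so too. But the decisive step is this interior-vector argument, not a covering argument or almost-semilinearity itself.
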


\section{Periodic VAS}\label{sec:periovas}
A VAS $(\vec{A},\vec{C}_0)$ is said to be \emph{periodic} if its reachability set is periodic. The following lemma shows that if $\vec{C}_0$ is periodic then $(\vec{A},\vec{C}_0)$ is periodic. It also shows that if a VAS $(\vec{A},\vec{C}_0)$ is periodic, then we can replace $\vec{C}_0$ by the semilinear periodic set $\per{\vec{C}_0}$ without modifying the reachability set. Last but not least, since the inclusion of a semilinear set in the reachability set of a VAS is decidable~\cite{DBLP:conf/lics/Leroux13}, this lemma also proves that the class of periodic VAS is recursive.
\begin{lemma}
  \label{lem:periodic-vas}
  Let $(\vec{A},\vec{C}_0)$ be a VAS. The following properties are equivalent:
  \begin{itemize}
  \item[(i)] $(\vec{A},\vec{C}_0)$ is periodic.
  \item[(ii)] $\poststar{\vec{A}}{\vec{C}_0}=\poststar{\vec{A}}{\per{\vec{C}_0}}$.
  \item[(iii)] $\per{\vec{C}_0}\subseteq \poststar{\vec{A}}{\vec{C}_0}$.
  \end{itemize}
\end{lemma}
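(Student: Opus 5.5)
I will prove the cycle of implications (i)$\Rightarrow$(iii)$\Rightarrow$(ii)$\Rightarrow$(i). Throughout, write $\vec{R}\coloneqq\poststar{\vec{A}}{\vec{C}_0}$. The key auxiliary fact is that reachability is monotone and additive in the following sense. If $\vec{x}\xrightarrow{\sigma}\vec{y}$ is a run, then $\vec{x}+\vec{z}\xrightarrow{\sigma}\vec{y}+\vec{z}$ is a run for every $\vec{z}\in\setN^d$, because every intermediate configuration is shifted by $\vec{z}\geq\vec{0}$. Consequently, if $\vec{x}\xrightarrow{\vec{A}^*}\vec{y}$ and $\vec{x}'\xrightarrow{\vec{A}^*}\vec{y}'$, then $\vec{x}+\vec{x}'\xrightarrow{\vec{A}^*}\vec{y}+\vec{x}'\xrightarrow{\vec{A}^*}\vec{y}+\vec{y}'$.

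\textbf{(i)$\Rightarrow$(iii).} Since $\vec{R}$ is periodic and contains $\vec{C}_0$, and $\per{\vec{C}_0}$ is the minimal periodic set containing $\vec{C}_0$, we get $\per{\vec{C}_0}\subseteq\vec{R}$.

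\textbf{(iii)$\Rightarrow$(ii).} From $\vec{C}_0\subseteq\per{\vec{C}_0}$ we obtain $\vec{R}\subseteq\poststar{\vec{A}}{\per{\vec{C}_0}}$. For the converse, (iii) says that $\vec{R}$ satisfies constraint~(\ref{eq:inductive:init}) for the initial set $\per{\vec{C}_0}$. It also satisfies~(\ref{eq:inductive:step}). Hence $\vec{R}$ is an inductive invariant of $(\vec{A},\per{\vec{C}_0})$, and by minimality $\poststar{\vec{A}}{\per{\vec{C}_0}}\subseteq\vec{R}$.

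\textbf{(ii)$\Rightarrow$(i).} It suffices to show that $\vec{Q}\coloneqq\poststar{\vec{A}}{\vec{P}}$ is periodic whenever $\vec{P}$ is periodic, and then apply this to $\vec{P}=\per{\vec{C}_0}$. First, $\vec{0}\in\vec{P}\subseteq\vec{Q}$. Next, take $\vec{y},\vec{y}'\in\vec{Q}$, reached from some $\vec{x},\vec{x}'\in\vec{P}$. By the additivity fact above, $\vec{x}+\vec{x}'\xrightarrow{\vec{A}^*}\vec{y}+\vec{y}'$. Since $\vec{x}+\vec{x}'\in\vec{P}$, it follows that $\vec{y}+\vec{y}'\in\vec{Q}$.

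There is no real obstacle in this argument. The only point requiring care is the shifting step: non-negativity is preserved when a run is translated by a vector of $\setN^d$. This holds precisely because configurations live in $\setN^d$ and the shift is componentwise non-negative.
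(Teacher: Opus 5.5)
Your proof is correct and follows the paper's argument essentially step for step. It uses the same cycle of implications: minimality of $\per{\vec{C}_0}$ for (i)$\Rightarrow$(iii), monotonicity of the reachability set in its initial set for (iii)$\Rightarrow$(ii) (which you spell out via the minimal inductive invariant), and the two translated runs $\vec{x}+\vec{x}'\xrightarrow{\vec{A}^*}\vec{y}+\vec{x}'\xrightarrow{\vec{A}^*}\vec{y}+\vec{y}'$ for (ii)$\Rightarrow$(i).
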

\begin{proof}
  $(iii)\Rightarrow (ii)$ is trivial since $\per{\vec{C}_0}\subseteq \poststar{\vec{A}}{\vec{C}_0}$ implies $\poststar{\vec{A}}{\per{\vec{C}_0}}\subseteq \poststar{\vec{A}}{\vec{C}_0}$. For $(i)\Rightarrow (iii)$, just notice that $\vec{C}_0\subseteq \poststar{\vec{A}}{\vec{C}_0}$ implies $\per{\vec{C}_0}\subseteq \poststar{\vec{A}}{\vec{C}_0}$ since $\poststar{\vec{A}}{\vec{C}_0}$ is periodic.

  Finally let us prove $(ii)\Rightarrow(i)$. Let $\vec{P}$ be the reachability set of $(\vec{A},\per{\vec{C}_0})$ and let us prove that $\vec{P}$ is periodic. We clearly have $\vec{0}\in\vec{P}$. Let $\vec{p}_1,\vec{p}_2\in \vec{P}$. There exist a run $\rho_1$ from a configuration $\vec{x}_1\in\per{\vec{C}_0}$ to $\vec{p}_1$ and a run $\rho_2$ from a configuration $\vec{x}_2\in\per{\vec{C}_0}$ to $\vec{p}_2$. Notice that the word $\rho_1+\vec{x}_2$ obtained from $\rho_1$ by adding $\vec{x}_2$ on each configuration is a run from $\vec{x}_1+\vec{x}_2$ to $\vec{p}_1+\vec{x}_2$. Symmetrically, $\rho_2+\vec{p}_1$ is a run from $\vec{p}_1+\vec{x}_2$ to $\vec{p}_1+\vec{p}_2$. We deduce that $\vec{p}_1+\vec{p}_2\in \vec{P}$. Therefore $\vec{P}$ is periodic.
\end{proof}

The following lemma shows that the previous lemma cannot be extended to the projection of VAS reachability sets. 
\begin{restatable}{lemma}{lemundecper}
  We cannot decide if a VAS $(\vec{A},\vec{C}_0)$ with $\vec{A}\subseteq\setN^d\times\setN^e$ for some $d,e\in\setN$ is such that the set $\{\vec{x}\in\setN^d \mid \exists \vec{y}\in\setN^e, (\vec{x},\vec{y})\in\poststar{\vec{A}}{\vec{C}_0}\}$ is periodic.
\end{restatable}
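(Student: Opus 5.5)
The plan is a many-one reduction from Hilbert's tenth problem, using Rabin's classical technique of \emph{weakly computing} polynomials with VAS. This is the same device behind the undecidability of inclusion and equality of VAS reachability sets. The reduction cannot start from something like reachability or semilinear inclusion in a reachability set, because these are decidable~\cite{DBLP:conf/lics/Leroux13}. What we need is a universally quantified property that the hidden counters $\vec{y}\in\setN^e$ can encode, and periodicity is exactly such a property: it quantifies over all pairs of elements of the projected set. A simple route via Lemma~\ref{lem:periodic-vas} is also closed off. A submonoid projects onto a submonoid, so if $\vec{C}_0$ were periodic the projection would be periodic automatically; the construction must therefore use a non-periodic $\vec{C}_0$.

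The key steps, in order, are as follows.
\begin{itemize}
\item[(1)] Given a polynomial $Q\in\setZ[z_1,\ldots,z_k]$, build an initialized VAS with hidden counters $\vec{y}$ that guesses $z_1,\ldots,z_k$ by pumping. It then runs the standard gadgets (control states encoded in extra hidden counters) that weakly compute $Q(\vec{z})^2$ into a counter $v$. Weak computation means every value $0\le v\le Q(\vec{z})^2$ is reachable, and no larger one.
\item[(2)] Wire this onto visible coordinates $\vec{x}$, with $d=1$ or $d=2$, so that the projected set $\vec{X}$ differs from a fixed periodic set $\vec{P}$, for instance $\setN$ or $\{0\}\cup(\setN_{>0}\times\setN)$, exactly by an ``upper-bounded'' deviation. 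Concretely, the target shape is $\vec{X}=\vec{P}\setminus\{\vec{x}_\star\}$ for a fixed point $\vec{x}_\star$ that is a sum of two elements of $\vec{P}\setminus\{\vec{x}_\star\}$. The point $\vec{x}_\star$ should enter $\vec{X}$ precisely when some guess $\vec{z}$ certifies $Q(\vec{z})\neq 0$ in a way that can only be exhibited through an \emph{upper} bound on $v$. Equivalently, $\vec{X}$ should be periodic iff $\forall \vec{z},\ Q(\vec{z})^2\geq 1$.
\item[(3)] Check both directions. If $Q$ has no root, the gadget can always produce the missing sums, so $\vec{X}=\vec{P}$ is periodic. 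If $Q(\vec{z}_0)=0$, then $\vec{x}_\star\notin\vec{X}$ while its two summands lie in $\vec{X}$, so $\vec{X}$ is not periodic. The sign requirement on actions in the statement is then met by the usual encoding of control states and of decrements through hidden counters.
\end{itemize}

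The main obstacle is step (2): getting the quantifier direction right. Weak computation only yields downward-closed information (``$v\le Q(\vec{z})^2$''). In particular the value $v=0$ is always reachable, so a naive ``emit $x$ iff $v$ reaches some value'' gadget makes every polynomial look the same. Unions of simple cones indexed by $\vec{z}$ also tend to be periodic whatever $Q$ is. My first attempt would therefore make periodicity fail through a \emph{sum} of two visible outputs. One output comes from a guess $\vec{z}$, and the other demands a value of $v$ strictly exceeding what a root would allow. Non-periodicity is then witnessed only when some guess forces $Q(\vec{z})^2=0$.

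If this does not close, the fallback is to reduce instead from undecidability of inclusion $\poststar{\vec{A}}{\vec{c}}\subseteq\poststar{\vec{B}}{\vec{c}}$ (Rabin/Hack). That reduction would tag the visible coordinates by which system produced them and arrange that closure under addition holds iff every $\vec{A}$-reachable vector is also $\vec{B}$-reachable.
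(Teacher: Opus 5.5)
\textbf{Verdict: genuine gap.} Your main line via Hilbert's tenth problem breaks at step~(2), which you never actually construct.

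Worse, the shape you commit to there cannot exist. Suppose the reduction always produced $\vec{X}\in\{\vec{P},\vec{P}\setminus\{\vec{x}_\star\}\}$ with $\vec{x}_\star$ fixed, or merely computable from $Q$. Then $Q$ would have a root iff $\poststar{\vec{A}}{\vec{C}_0}\cap(\{\vec{x}_\star\}\times\setN^e)=\emptyset$. That is a reachability question with a semilinear target, hence decidable (see Theorem~\ref{thm:inv}), so Hilbert's tenth problem would become decidable. Your text also has $\vec{x}_\star$ entering $\vec{X}$ when \emph{some} $\vec{z}$ has $Q(\vec{z})\neq 0$, which is not the universal condition you then call equivalent. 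In any correct reduction, the witness of non-periodicity must depend on an actual root or counterexample, so it cannot be a point computable from the instance. The alternative you float (a sum of a guessed output with another output) points the right way, but it is unspecified.

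Step~(1) also fails as stated. Rabin's gadgets weakly compute polynomials with non-negative coefficients, whereas $Q(\vec{z})^2$ is in general not even monotone. The standard fix writes $Q=Q_1-Q_2$ with non-negative coefficients and compares the weakly computable $2Q_1Q_2+1$ and $Q_1^2+Q_2^2$. That is exactly the classical reduction to reachability-set inclusion.

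Your remark on the ``sign requirement'' is false. If $\vec{A}\subseteq\setN^{d+e}$, every action is always enabled, so $\poststar{\vec{A}}{\vec{C}_0}=\vec{C}_0+\per{\vec{A}}$ is semilinear, and periodicity of its projection is decidable in Presburger arithmetic. The statement must be read with $\vec{A}\subseteq\setZ^d\times\setZ^e$, as the paper's own proof does; no hidden-counter encoding can get around this.

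Your fallback is the paper's proof, but as a single sentence it omits everything that makes it work. The paper reduces from the undecidable inclusion $\poststar{\vec{A}_1}{\{\vec{c}_1\}}\subseteq\poststar{\vec{A}_2}{\{\vec{c}_2\}}$ using six extra counters:
\begin{itemize}
\item two visible tag counters $(m,n)$;
\item two hidden pairs acting as control tokens, with actions $\vec{A}_1\times\{(0,0,-1,1,0,0),(0,0,1,-1,0,0)\}\cup\vec{A}_2\times\{(0,0,0,0,-1,1),(0,0,0,0,1,-1)\}$. Each system can move only when its own token is present, and both token positions are put in $\vec{C}_0$ so that parity does not matter.
\end{itemize}
The initial set contains $\vec{c}_1$ tagged $(1,0)$ and $\vec{c}_2$ tagged $(1,1)$. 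It also contains the dead points $(\vec{0},(0,0))$ and $(\vec{0},(0,1))$, and a ``junk'' inductive invariant $\setN^d\times\vec{E}\times\setN^4$ with $\vec{E}$ upward closed. The projection is then $\{(\vec{0},(0,0)),(\vec{0},(0,1))\}\cup(\setN^d\times\vec{E})\cup(R_1\times\{(1,0)\})\cup(R_2\times\{(1,1)\})$, where $R_i$ are the two reachability sets. Suppose $\vec{E}$ omits $(1,1)$ but contains every other pairwise sum of the tags $(0,1),(1,0),(1,1)$. Then the only non-trivial closure condition is $(\vec{u},(1,0))+(\vec{0},(0,1))\in\vec{X}$ for all $\vec{u}\in R_1$, i.e.\ $R_1\subseteq R_2$. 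A failure is witnessed by some $\vec{u}\in R_1\setminus R_2$, which is exactly the unbounded, instance-dependent witness your design lacked.

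One correction to the paper's choice of $\vec{E}$: take for instance $\vec{E}=\{(m,n)\mid m\geq 2\vee n\geq 2\}$. The paper's condition $m+n\geq 3$ misses $(0,2)=(0,1)+(0,1)$, which as written would make $\vec{X}$ never periodic.
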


In the rest of this section, we show that periodic VAS naturally occurs when dealing with diagonal semilinear relations~(see \cref{subsec:diagonal-relations}). We then explain how the reachability problem for plain VAS can be reduced to the reachability problem for periodic VAS (see Section~\ref{sub:plain2per}). Finally, in Section~\ref{sub:mainthm} we introduce the main result of this paper.

\subsection{Diagonal Relations}
\label{subsec:diagonal-relations}
This is a folklore result that the reachability relation of a $d$-dim VAS $\vec{A}\subseteq \setN^d$ corresponds to the reachability set of $2d$-dim initialized periodic VAS $(\vec{A}',\{(\vec{0},\vec{0})\})$ defined as follows where $\vec{e}_i$ is the $i$th unit vector (see preliminaries):
$$\vec{A}' = (\{\vec{0}\}\times\vec{A})\cup \{(\vec{e}_i,\vec{e}_i)\mid 1\leq i\leq d\}$$

\medskip

This observation can be extended to the reflexive and transitive closure of any semilinear \emph{diagonal relation} defined as follows. A binary relation $\vec{R}$ over $\setN^d$ is said to be \emph{diagonal} if $(\vec{x},\vec{y})+(\vec{c},\vec{c})\in \vec{R}$ for every $(\vec{x},\vec{y})\in \vec{R}$ and for every $\vec{c}\in\setN^d$. Diagonal relations are clearly stable by union, intersection, composition, and in particular by transitive and reflexive closure. Let us denote by $\vec{R}^*$ the reflexive and transitive closure of a binary relation $\vec{R}$.
\begin{lemma}\label{lem:diag2per}
  The reflexive and transitive closure of a diagonal relation is periodic.  
\end{lemma}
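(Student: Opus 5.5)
We view $\vec{R}^*$ as a subset of $\setN^d\times\setN^d=\setN^{2d}$ and check directly the two conditions defining a periodic set: $(\vec{0},\vec{0})\in\vec{R}^*$ and $\vec{R}^*+\vec{R}^*\subseteq\vec{R}^*$. The first holds by reflexivity. For the second, note that $\vec{R}^*$ is itself diagonal, since diagonal relations are stable under reflexive and transitive closure. To be safe we verify this: the identity relation is diagonal, and if $\vec{x}=\vec{z}_0\mathrel{\vec{R}}\vec{z}_1\cdots\vec{z}_{k-1}\mathrel{\vec{R}}\vec{z}_k=\vec{y}$ is a chain witnessing $(\vec{x},\vec{y})\in\vec{R}^*$, then adding $\vec{c}$ to every $\vec{z}_j$ gives a chain witnessing $(\vec{x}+\vec{c},\vec{y}+\vec{c})\in\vec{R}^*$, because each step $(\vec{z}_{j-1}+\vec{c},\vec{z}_j+\vec{c})$ lies in $\vec{R}$ by diagonality.

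Now let $(\vec{x}_1,\vec{y}_1),(\vec{x}_2,\vec{y}_2)\in\vec{R}^*$. We mimic the proof of $(ii)\Rightarrow(i)$ in \cref{lem:periodic-vas}, shifting one witness by the endpoint of the other. By diagonality of $\vec{R}^*$ with $\vec{c}=\vec{x}_2$, we get $(\vec{x}_1+\vec{x}_2,\vec{y}_1+\vec{x}_2)\in\vec{R}^*$. Applying diagonality again with $\vec{c}=\vec{y}_1$ gives $(\vec{x}_2+\vec{y}_1,\vec{y}_2+\vec{y}_1)\in\vec{R}^*$. Transitivity of $\vec{R}^*$ then composes these two pairs into $(\vec{x}_1+\vec{x}_2,\vec{y}_1+\vec{y}_2)\in\vec{R}^*$, which proves closure under addition.

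There is no real obstacle here. The only point needing care is that diagonality is used for $\vec{R}^*$ rather than for $\vec{R}$, which is why we verify the stability of diagonal relations under closure explicitly and do not just cite it. All shifts are by vectors of $\setN^d$, so every configuration stays in $\setN^d$.
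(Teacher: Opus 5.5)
Your proof is correct and follows the paper's argument. Reflexivity gives $(\vec{0},\vec{0})$, and closure under addition comes from shifting $(\vec{x}_1,\vec{y}_1)$ by $(\vec{x}_2,\vec{x}_2)$ and $(\vec{x}_2,\vec{y}_2)$ by $(\vec{y}_1,\vec{y}_1)$, then composing by transitivity; the only difference is that you explicitly check that $\vec{R}^*$ is diagonal, which the paper just asserts before the lemma.
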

\begin{proof}
  Let $\vec{R}$ be a diagonal reflexive transitive binary relation. Clearly $(\vec{0},\vec{0})\in \vec{R}$ since $\vec R$ is reflexive. Let $(\vec{x}_1,\vec{y}_1)$ and $(\vec{x}_2,\vec{y}_2)$ be pairs in $\vec R$. Since $\vec R$ is diagonal from the membership in $\vec R$ of these two pairs, it follows that $(\vec{x}_1,\vec{y}_1)+(\vec{x}_2,\vec{x}_2)$ and $(\vec{x}_2,\vec{y}_2)+(\vec{y}_1,\vec{y}_1)$ are in $\vec R$. Since $\vec R$ is transitive, we deduce that $(\vec{x}_1+\vec{x}_2,\vec{y}_1+\vec{y}_2)$ is in $\vec R$. Since this pair is equal to $(\vec{x}_1,\vec{y}_1)+(\vec{x}_2,\vec{y}_2)$. We have proved that $\vec R+\vec R\subseteq \vec R$. It follows that $\vec R$ is periodic. 
\end{proof}

The following lemma shows that reflexive and transitive closures of semilinear diagonal relations are related to the model of VAS with states, or equivalently to regular expressions of VAS actions, an extension of the model of VAS with the same expressive power~\cite{KarpM69}. We do not recall the classical definition of regular expressions over a finite alphabet, but we just emphasis that the relation $\xrightarrow{E}$ in the following lemma where $E$ is a regular expression over a $2d$-dim VAS (that plays the role of a finite alphabet) is the binary relation over the pairs in $\setN^d\times\setN^d$ defined by $(\vec{x},\tilde{\vec{x}})\xrightarrow{E}(\vec{y},\tilde{\vec{y}})$ if there exists a word $\sigma$ accepted by $E$ such that $(\vec{x},\tilde{\vec{x}})\xrightarrow{\sigma}(\vec{y},\tilde{\vec{y}})$.
\begin{lemma}
  For every semilinear diagonal relation $R$ over $\setN^d$, we can effectively build a regular expression $E$ over the actions of a $2d$-dim VAS such that for all pairs $(\vec{x},\tilde{\vec{x}})$ and $(\vec{y},\tilde{\vec{y}})$ in $\setN^d\times\setN^d$, we have:
  $$(\vec{x}+\tilde{\vec{x}}) \vec{R}^* (\vec{y}+\tilde{\vec{y}})~~\Longleftrightarrow~~(\vec{x},\tilde{\vec{x}}) \xrightarrow{E} (\vec{y},\tilde{\vec{y}})$$
\end{lemma}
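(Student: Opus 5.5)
We reduce the statement to the structure of semilinear diagonal relations. The approach is to find a finite "core" of $\vec{R}$ and let the two copies of the state handle the diagonal shift.

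\textbf{Step 1: a finite presentation of $\vec{R}$.} Since $\vec{R}$ is semilinear and diagonal, we have $\vec{R}=\vec{R}_0+\{(\vec{c},\vec{c})\mid \vec{c}\in\setN^d\}$. Here $\vec{R}_0$ is the set of pairs of $\vec{R}$ that are minimal for the diagonal shift, namely those $(\vec{x},\vec{y})\in\vec{R}$ such that $(\vec{x}-\vec{e}_i,\vec{y}-\vec{e}_i)\notin\vec{R}$ for every $i$. The set $\vec{R}_0$ is Presburger definable, hence semilinear. Write
$$\vec{R}_0=\bigcup_{j}\vec{b}_j+\per{\vec{G}_j}$$
with periods $(\vec{u},\vec{v})\in\vec{G}_j$.

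\textbf{Step 2: simulating one step of $\vec{R}$.} We simulate one step $\vec{z}\,\vec{R}\,\vec{z}'$ on pairs $(\vec{x},\tilde{\vec{x}})$ whose sum represents $\vec{z}$. The first coordinate block is the "active" part and the second is the "idle" part, which plays the role of $\vec{c}$. A step proceeds in three phases.
\begin{itemize}
\item[(1)] Transfer units between the blocks freely, using actions $(-\vec{e}_i,\vec{e}_i)$ and $(\vec{e}_i,-\vec{e}_i)$ iterated via Kleene star.
\item[(2)] Choose a linear component $j$. Fire $(-\vec{b}_j^{x},\vec{0})$, where $\vec{b}_j=(\vec{b}_j^{x},\vec{b}_j^{y})$, then for each period $(\vec{u},\vec{v})\in\vec{G}_j$ fire $(-\vec{u},\vec{0})$ under a star. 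This consumes exactly an $\vec{x}$-part of $\vec{R}_0$ from the active block, leaving it at $\vec{0}$. This is enforced by ending phase 2 only after everything has been moved out of the active block.
\item[(3)] Produce the matching output $(\vec{b}_j^{y},\vec{0})$ plus $(\vec{v},\vec{0})$ for each chosen period.
\end{itemize}
The difficulty is that the number of repetitions of each period must be the same in consumption and production. We fix this by firing the combined action $(\vec{v}-\vec{u},\vec{0})$ only after guaranteeing nonnegativity. Concretely, we use the standard trick of first subtracting $\vec{u}$ and then adding $\vec{v}$ within one starred block $((-\vec{u},\vec{0})(\vec{v},\vec{0}))^*$, but applied to a separate buffer copy so that intermediate outputs do not feed back into inputs. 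Since the idle block already exists, the cleaner route is: in phase 1 move exactly the $\vec{x}$-part into the active block and keep $\vec{c}$ idle; in phase 2 replace the active content $\vec{x}$ by $\vec{y}$ using $((-\vec{u},\vec{0})(\vec{v},\vec{0}))^*$ for periods and $(-\vec{b}^x_j,\vec{0})(\vec{b}^y_j,\vec{0})$ for the base. The sequence is completed only if the active block held exactly an element of $\vec{b}_j+\per{\vec{G}_j}$ restricted to the $\vec{x}$-side. Leftovers in the active block are allowed, because $\vec{R}$ is diagonal and leftovers act as part of $\vec{c}$. We then take
$$E=\bigl(T^*\,(\textstyle\sum_j B_j P_j^*)\bigr)^*T^*,$$
where $T$ is the set of transfer actions, $B_j$ the base replacement, and $P_j$ the period replacements.

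\textbf{Step 3: correctness, the main obstacle.} Soundness is the delicate direction, because interleaving within $P_j^*$ may let a period consume material produced by an earlier period, so that the net effect is not in $\vec{b}_j+\per{\vec{G}_j}$ applied to the original content. We handle this by doing the replacement through a third "output" buffer. To keep the dimension at $2d$, we first apply Step 1 to the diagonal closure and transfer all active content to be consumed before producing. That is, we split each $P_j$ into a consuming pass $(-\vec{u},\vec{e})$ that records the count in the idle block, which is impossible without counters; so we instead choose the presentation so that $\vec{G}_j$ periods with $\vec{u}\neq\vec{0}$ are handled by the diagonal property. Indeed, $(\vec{u},\vec{v})\in\per{\vec{G}_j}$ together with diagonality lets us rewrite the net effect of any interleaving as a sum of base and periods plus a diagonal shift. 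The key claim is that any run of $B_jP_j^*$ from $\vec{z}$ to $\vec{z}'$, even with feedback, yields $\vec{z}\,\vec{R}^*\,\vec{z}'$, since each single period firing is itself realized as a composition of $\vec{R}$-steps by diagonality: $\vec{b}_j+\vec{c}\to\vec{b}_j+\vec{c}'$. I expect this bookkeeping, showing that feedback runs still stay inside $\vec{R}^*$, to be the main obstacle. If it fails, the fallback is to modify the presentation so that each period pair has $\vec{u}\le\vec{v}$ or $\vec{u}\ge\vec{v}$ coordinatewise. Completeness (every $\vec{R}^*$ chain is simulated) follows directly from phases 1–3.
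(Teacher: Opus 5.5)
\textbf{There is a genuine gap in the soundness direction ($\Leftarrow$), and it is the one you flag yourself.} It fails for your final expression $E=\bigl(T^*(\sum_j B_jP_j^*)\bigr)^*T^*$.

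Your base word $B_j=(-\vec{b}^x_j,\vec{0})(\vec{b}^y_j,\vec{0})$ and your period words $(-\vec{u},\vec{0})(\vec{v},\vec{0})$ read from and write to the same block. So a later consumption can use material produced earlier in the same simulated step. Your proposed repair, that ``each single period firing is itself realized as a composition of $\vec{R}$-steps by diagonality'', is false. A lone period shifted along the diagonal, $(\vec{u},\vec{v})+(\vec{c},\vec{c})$, need not lie in $\vec{R}$ or $\vec{R}^*$ when $\vec{b}_j\neq\vec{0}$.

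Here is a concrete counterexample. Take $d=1$ and $\vec{R}=(1,1)+\per{\{(1,1),(1,2)\}}=\{(x,y)\mid 1\le x\le y\le 2x-1\}$. Its shift-minimal part is $\vec{R}_0=(1,1)+\per{\{(1,2)\}}$. The word $B_1P_1=(-1,0)(1,0)(-1,0)(2,0)$ lies in your $E$. It leads from $(1,0)$ via $(0,0),(1,0),(0,0)$ to $(2,0)$. But the only $\vec{R}$-successor of $1$ is $1$ itself, so $1\,\vec{R}^*\,2$ does not hold.

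Your fallback does not help, since here $\vec{u}\le\vec{v}$ already. The assertion that the sequence completes ``only if the active block held exactly an element of $\vec{b}_j+\per{\vec{G}_j}$'' has nothing behind it: a regular expression over VAS actions cannot test a block for emptiness. Completeness, by contrast, is fine.

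\textbf{The missing idea is that you need neither a third buffer nor any counting.} Use the two blocks asymmetrically, the first as input and the second as output. Fire the \emph{single} combined actions $(-\vec{b}^x_j,\vec{b}^y_j)$ and $(-\vec{u},\vec{v})$ for $(\vec{u},\vec{v})\in\vec{G}_j$. This is what your rejected ``$(-\vec{u},\vec{e})$'' pass should have been.

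Each such action consumes and produces at once, so the repetition counts match automatically. During the step the first block only decreases, so outputs can never be re-consumed. Your objection that the idle block already holds $\vec{c}$ does not apply. Only the sums $\vec{x}+\tilde{\vec{x}}$ matter, and the old content of the second block together with whatever remains in the first block simply forms the diagonal shift.

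This is the paper's proof:
\begin{itemize}
\item Add $(\vec{e}_i,\vec{e}_i)$ to each $\vec{G}_j$; this is harmless since $\vec{R}$ is diagonal.
\item Set $\delta(\vec{x},\vec{y})=(-\vec{x},\vec{y})$, and let $A_{swap}$ be your transfer actions $T$.
\item Check that $(\vec{x}+\tilde{\vec{x}},\vec{y}+\tilde{\vec{y}})\in\vec{b}_j+\per{\vec{G}_j}$ iff $(\vec{x},\tilde{\vec{x}})\xrightarrow{A_{swap}^*\delta(\vec{b}_j)\delta(\vec{G}_j)^*A_{swap}^*}(\vec{y},\tilde{\vec{y}})$.
\item Take $E=A_{swap}^*\cup\bigl(\bigcup_j A_{swap}^*\delta(\vec{b}_j)\delta(\vec{G}_j)^*A_{swap}^*\bigr)^*$.
\end{itemize}
With this construction, your Step~1 (the shift-minimal core $\vec{R}_0$) becomes unnecessary.
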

\begin{proof}
  We introduce the finite set $A_{swap}$ of actions of $\setZ^d\times\setZ^d$ defined as the set of vectors $(\vec{e}_i,-\vec{e}_i)$ and $(-\vec{e}_i,\vec{e}_i)$ where $i$ ranges over $\{1,\ldots,d\}$. Notice that for all pairs $(\vec{x},\tilde{\vec{x}})$ and $(\vec{y},\tilde{\vec{y}})$ in $\setN^d\times\setN^d$, we have:
  $$(\vec{x},\tilde{\vec{x}})\xrightarrow{A_{swap}^*}(\vec{y},\tilde{\vec{y}})~~\Longleftrightarrow~~\vec{x}+\tilde{\vec{x}}=\vec{y}+\tilde{\vec{y}}$$
  Now, assume that $\vec R$ is a semilinear diagonal relation given by a semilinear-presentation $\{(b_1,G_1),\ldots,(b_k,G_k)\}$. By replacing $G_j$ by $G_j\cup\bigcup_i\{(\vec{e}_i,\vec{e}_i)\}$, we do not change the semilinear relation $\vec R$ denoted by the semilinear-presentation since $\vec R$ is diagonal. So, w.l.o.g., we can assume that $(\vec{e}_i,\vec{e}_i)\in G_j$ for every $i,j$. Let us introduce the function $\delta:\setZ^d\times\setZ^d\rightarrow\setZ^d\times\setZ^d$ defined by $\delta(\vec{x},\vec{y})=(-\vec{x},\vec{y})$ and observe that for all pairs $(\vec{x},\tilde{\vec{x}})$ and $(\vec{y},\tilde{\vec{y}})$ in $\setN^d\times\setN^d$ and for every $j\in\{1,\ldots,k\}$, we have:
  $$(\vec{x}+\tilde{\vec{x}},\vec{y}+\tilde{\vec{y}})\in b_j+\per{G_j}~~\Longleftrightarrow~~
  (\vec{x},\tilde{\vec{x}})\xrightarrow{A_{swap}^*\delta(b_j)\delta(G_j)^*A_{swap}^*}(\vec{y},\tilde{\vec{y}})$$
  It follows that the following regular expression $E$ satisfies the lemma.
  $$E=A_{swap}^*\cup (\bigcup_{j=1}^k A_{swap}^*\delta(b_j)\delta(G_j)^*A_{swap}^*)^*$$
\end{proof}

\subsection{From Plain VAS to Periodic VAS}\label{sub:plain2per}
Periodic VAS is a central model for population protocols~\cite{DBLP:conf/podc/AngluinADFP04} as well as for communication protocols without leaders~\cite{10.1145/146637.146681} since for those models, the set $\vec{C}_0$ of initial configurations is a periodic semilinear set of the form $\setN(1,0,\ldots,0)$ where the first counter intuitively counts the number of agents in a given initial state. Even if periodic VAS seems to be a strict subclass of VAS, the reachability problem for plain VAS can be easily reduced to the reachability of a periodic VAS by adding just one extra counter that is unchanged by the VAS actions. Let us explain the construction.

Assume that $\mathcal{V}=(\vec{A},\vec{C}_0)$ is a VAS of dimension $d$, i.e. such that $\vec{A}\subseteq \setN^d$ is a finite set of actions and $\vec{C}_0\subseteq\setN^d$ is a semilinear set of initial configurations given by a semilinear-presentation. We introduce the periodic VAS $\mathcal{V}'=(\vec{A}',\vec{C}_0')$ where $\vec{A}'=\vec{A}\times\{0\}$ and $\vec{C}_0'$ is the semilinear periodic set $\per{\vec{C}_0\times\{1\}}$. Now, just observe that the following equality holds.
\begin{align}
  &\poststar{\vec{A}}{\vec{C}_0}\times\{1\}=\poststar{\vec{A}'}{\vec{C}_0'}\cap (\setN^d\times \{1\})
\end{align}
In particular, the reachability problem for plain VAS that consists in deciding if a semilinear set $\vec{C}_{bad}\subseteq \setN^d$ has an empty intersection with $\poststar{\vec{A}}{\vec{C}_0}$ reduces to the reachability problem of $(\vec{A}',\vec{C}_0')$ for the semilinear set $\vec{C}_{bad}'$ defined as $\vec{C}_{bad}\times\{1\}$. The computation of inductive invariants for plain VAS also reduces to the computation of inductive invariants for periodic VAS as follows. Assume that $\vec{I}'$ is an inductive invariant for the periodic VAS $(\vec{A}',\vec{C}_0')$ such that $\vec{I}'\cap \vec{C}_{bad}'$ is empty. We introduce the semilinear set $\vec{I}$ defined as $\{\vec{x}\in\setN^d \mid (\vec{x},1)\in \vec{I}'\}$. We observe that $\vec{I}$ is an inductive invariant for $(\vec{A},\vec{C}_0)$ such that $\vec{I}\cap \vec{C}_{bad}$ is empty.

\subsection{Periodic Semilinear Inductive Invariants}\label{sub:mainthm}
The main result of the paper is the proof of the following theorem that extends Theorem~\ref{thm:inv} of~\cite{DBLP:conf/lics/Leroux09,DBLP:conf/popl/Leroux11,Turing-100:Vector_Addition_Systems_Reachability} to periodic VAS. We have seen in \cref{ex:BnF} that the back-and-forth construction of~\cite{DBLP:conf/lics/Leroux09,DBLP:conf/popl/Leroux11,Turing-100:Vector_Addition_Systems_Reachability} may produce non-periodic inductive invariants.
\begin{theorem}\label{thm:invper}
  The reachability set of a periodic VAS $(\vec{A},\vec{C}_0)$ has an empty intersection with a semilinear set $\vec{C}_{bad}$ if and only if there exists a periodic semilinear inductive invariant $\vec{I}$ of $(\vec{A},\vec{C}_0)$ such that $\vec{I}\cap \vec{C}_{bad}$ is empty.
\end{theorem}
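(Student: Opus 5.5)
The ``if'' direction is immediate: the reachability set is the least inductive invariant of $(\vec{A},\vec{C}_0)$, so it is contained in $\vec{I}$ and hence disjoint from $\vec{C}_{bad}$. For the ``only if'' direction I would not reuse the back-and-forth invariant of Theorem~\ref{thm:inv}, since \cref{ex:BnF} shows it need not be periodic. A cheap repair also fails. Given an invariant $\vec{I}$ and $\vec{R}\coloneqq\poststar{\vec{A}}{\vec{C}_0}$, set $\vec{I}'=\{\vec{x}\mid \vec{x}+\vec{R}\subseteq\vec{I}\}$. This set contains $\vec{R}$, is inductive, is contained in $\vec{I}$, and so avoids $\vec{C}_{bad}$. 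However, it is neither obviously periodic nor semilinear, because $\vec{R}$ is not semilinear. So I would build the invariant forward from $\vec{R}$ itself.

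\textbf{Step 1 (approximating the reachability set).} Using Lemma~\ref{lem:periodic-vas}, I assume w.l.o.g.\ that $\vec{C}_0$ is a semilinear periodic set. I maintain a candidate $\vec{S}$ together with a ``defect'' set $\vec{D}$. The candidate $\vec{S}$ is a semilinear periodic set with $\vec{C}_0\subseteq\vec{S}\subseteq\vec{R}\cup\vec{D}$ and $\vec{S}\cap\vec{C}_{bad}=\emptyset$. The defect $\vec{D}$ is a semilinear set recording where $\vec{S}$ may still fail to be inductive, that is, it contains $((\vec{S}+\vec{a})\cap\setN^d)\setminus\vec{S}$ for every $\vec{a}\in\vec{A}$. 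The key input is the geometric structure of $\vec{R}$ from Leroux's work: $\vec{R}$ is a finite union of sets $\vec{b}_j+\vec{P}_j$, where each $\vec{P}_j$ is a periodic set of reachable ``cycle effects'' whose cones $\setQ_{\geq 0}\vec{P}_j$ are definable. Because $\vec{R}$ is periodic, the relevant $\vec{P}_j$ can be taken inside $\vec{R}$.

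\textbf{Step 2 (closing under addition and actions).} I would take $\vec{S}$ to be $\per{\vec{G}}$, where $\vec{G}$ consists of finitely many reachable vectors together with linear sets $\vec{b}+\per{\vec{F}}$. Each such linear set should lie in $\vec{R}$ up to a shift, and Lemma~\ref{lem:spacediff:linear} supplies that shift: replacing $\vec{L}$ by $\vec{L}+\vec{q}$ with $\vec{q}$ a period removes a part of space-dimension strictly smaller than $\sdim{\vec{L}}$. The periodic closure $\per{\cdot}$ keeps $\vec{S}$ semilinear and periodic. Moreover, periodicity of $\vec{R}$ keeps $\per{\vec{G}}\subseteq\vec{R}$ whenever $\vec{G}\subseteq\vec{R}$, which in turn keeps $\vec{S}$ disjoint from $\vec{C}_{bad}$. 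Whenever $\vec{S}$ is not inductive, some $\vec{x}\in\vec{S}$ and $\vec{a}\in\vec{A}$ give $\vec{x}+\vec{a}\notin\vec{S}$. In that case I add to $\vec{G}$ the corresponding reachable linear pieces, computed forward from $\vec{S}$ alone.

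\textbf{Step 3 (termination by space-dimension).} This is the main obstacle. I would argue by well-founded induction on the multiset of space-dimensions of the remaining defect pieces, using Lemma~\ref{lem:insecable} and Corollary~\ref{cor:dimspacedim}. The aim is to show that each refinement either closes a defect or replaces it by defects of strictly smaller space-dimension. Lemma~\ref{lem:insecable} is the tool that forces a periodic piece into a single coset $\vec{x}_j+\vec{V}_j$. The hard case is a defect of full space-dimension whose period lies outside $\vec{R}$. There I expect to need the fact that $\vec{C}_{bad}$ misses $\vec{R}$, via the same non-reachability argument as in Theorem~\ref{thm:inv} but run only forward. The idea is that the asymptotic cones of $\vec{R}$ separate $\vec{R}$ from $\vec{C}_{bad}$ up to lower-dimensional exceptions. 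Those exceptions can then be handled by the induction hypothesis, applied to a VAS restricted to a lower-dimensional affine slice.
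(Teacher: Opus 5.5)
There is a genuine gap, and it is in how Step~2 keeps $\vec{S}$ away from $\vec{C}_{bad}$. You get safety from $\vec{G}\subseteq\vec{R}$, hence $\vec{S}=\per{\vec{G}}\subseteq\vec{R}$. That cannot produce an inductive invariant in general. $\vec{R}$ is the least inductive invariant of $(\vec{A},\vec{C}_0)$, so any inductive invariant contained in $\vec{R}$ equals $\vec{R}$. And $\vec{R}$ is not semilinear in general, already for periodic VAS, via the extra-counter construction of Section~\ref{sub:plain2per}.

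So the invariant you want must contain unreachable configurations. Once it does, maintaining $\vec{S}\cap\vec{C}_{bad}=\emptyset$ is not enough. Every later step closes under successors, so what must be preserved is $\poststar{\vec{A}}{\vec{S}}\cap\vec{C}_{bad}=\emptyset$, and that is no longer automatic.

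The phrase ``each linear set should lie in $\vec{R}$ up to a shift'' does not rescue this. For linearizations it is false: in Example~\ref{ex:pow}, $\lin{\vec{P}}=\setN^2$, yet no translate of $\setN^2$ lies in $\vec{P}$. Moreover, Lemma~\ref{lem:spacediff:linear} only bounds the space-dimension of what a shift removes; it does not make the shifted set reachable. If you instead add genuine linear subsets of $\vec{R}$, the part of a non-semilinear almost linear piece they miss can keep full space-dimension, so nothing decreases.

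What is missing is the paper's key lemma, \cref{lem:n-existence}. Let $\vec{I}$ be semilinear periodic and let $\{\vec{L}_1,\ldots,\vec{L}_k\}$, with $\vec{L}_i=\vec{b}_i+\per{\vec{G}_i}$, be a linearization of $\poststar{\vec{A}}{\vec{I}}\setminus\vec{I}$. Then for every semilinear $\vec{T}\supseteq\poststar{\vec{A}}{\vec{I}}$ there is $n$ such that, setting $\vec{q}_i=n\sum_{\vec{g}\in\vec{G}_i}\vec{g}$ and $\vec{J}=\per{\vec{I}\cup\bigcup_i(\vec{L}_i+\vec{q}_i)}$, we have $\poststar{\vec{A}}{\vec{J}}\subseteq\vec{T}$.

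Taking $\vec{T}=(\vec{I}\cup\bigcup_i\vec{L}_i)\setminus\vec{C}_{bad}$ gives two things at once. First, the over-approximating set $\vec{J}$ is safe in the strong sense $\poststar{\vec{A}}{\vec{J}}\cap\vec{C}_{bad}=\emptyset$. Second, the new defect $\poststar{\vec{A}}{\vec{J}}\setminus\vec{J}$ is confined to $\bigcup_i\bigl(\vec{L}_i\setminus(\vec{L}_i+\vec{q}_i)\bigr)$. Lemmas~\ref{lem:spacelin} and~\ref{lem:spacediff:linear} then give $\sdim{\poststar{\vec{A}}{\vec{J}}\setminus\vec{J}}<\sdim{\poststar{\vec{A}}{\vec{I}}\setminus\vec{I}}$.

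The proof of that lemma is where the non-reachability of $\vec{C}_{bad}$ is actually used. Assume a bad run $\rho_n$ exists for every $n$. Extract $m<n$ that are increasing simultaneously for $\unlhd$, for $\leq_{\vec{I}}$, for $\leq_{\setN^d\setminus\vec{T}}$, and for the decomposition coefficients of the sources (Lemma~\ref{lem:semilinear-to-well}). Pump with Lemma~\ref{lem:amalgamation:iterated}. Then use interior vectors (Lemma~\ref{lem:pushin}) and the periodicity of $\poststar{\vec{A}}{\vec{I}}$ to show the pumped source is reachable from $\vec{I}$, a contradiction.

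Your Step~3 has no substitute for this. It never invokes the wqo on runs or amalgamation. It measures one-step defects rather than $\poststar{\vec{A}}{\vec{S}}\setminus\vec{S}$, which is the set whose linearization ties the dimension count together. Finally, its fallback of restricting the VAS to a lower-dimensional affine slice does not yield a VAS, since runs leave the slice, so there is no induction hypothesis to apply.
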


\section{Geometry of VAS Reachability Sets}\label{sec:geovas}
In this section we recall the classical well-quasi-order on the set of runs of a VAS. We also recall the geometrical characterization of the VAS reachability sets by almost semilinear sets. Finally, we introduce the notion of interior vectors of a periodic set, a new notion that provides a way to push vectors in the approximation of almost linear sets $\vec{L}$ (called linearization in the sequel) to be in the original set $\vec{L}$.

\subsection{Well-Quasi-Orders}
\label{subsec:wqo}
A binary relation $\sqsubseteq$ on a set $S$ is said to be \emph{almost-full}~\cite{VeldmanBezem93} if
for every infinite sequence $(s_n)_{n\in\setN}$ of elements in $S$, there exists $m<n$ such that $s_m\sqsubseteq s_n$.
Let us recall that a relation is almost-full if, and only if, for every infinite sequence $(s_n)_{n\in\setN}$ of elements in $S$, there exists an infinite sequence $n_0<n_1<\cdots$ such that $s_{n_i}\sqsubseteq s_{n_j}$ for all $i<j$.
This characterization is an immediate consequence of the infinite version of Ramsey's Theorem.

A quasi-order that is almost-full is called a \emph{well-quasi-order}. Let us recall that the classical order $\leq$ on $\setN$ is a well-quasi-order. Moreover, since the Cartesian product of two well-quasi-orders is a well-quasi-order (\emph{Dickson's lemma}), it follows that the component-wise extension of $\leq$ on $\setN^d$ is also a well-quasi-order.

In this subsection,
we associate an almost-full relation with each semilinear subset of $\setN^d$.
These almost-full relations will be used in the sequel (namely, in the proof of \cref{lem:n-existence}).

\medskip

Given a set $\vec{S}\subseteq\setN^d$, we introduce the binary relation $\leq_{\vec{S}}$ on $\vec{S}$ defined by $\vec{x}\leq_{\vec{S}}\vec{y}$ if $\vec{x}\leq \vec{y}$ and $\vec{x}+\setN(\vec{y}-\vec{x})\subseteq \vec{S}$.
We observe that $\leq_{\vec{S}}$ is not necessarily transitive
(this is the reason why we need the notion of almost-full relations).

\begin{lemma}
  \label{lem:semilinear-to-well}
  The relation $\leq_{\vec{S}}$ is almost-full for every semilinear set $\vec{S}\subseteq \setN^d$.
\end{lemma}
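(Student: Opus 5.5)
The plan is to reduce to a single linear set, treat that case with Dickson's lemma plus an extraction argument, and then combine the components by Ramsey.

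\emph{Reduction to linear sets.} Write $\vec{S}=\vec{L}_1\cup\cdots\cup\vec{L}_k$ with $\vec{L}_j=\vec{b}_j+\per{\vec{G}_j}$. Given an infinite sequence in $\vec{S}$, infinitely many terms lie in a single $\vec{L}_j$. If $\vec{x}\leq_{\vec{L}_j}\vec{y}$, then $\vec{x}\leq_{\vec{S}}\vec{y}$, since $\vec{L}_j\subseteq\vec{S}$. It therefore suffices to show that $\leq_{\vec{L}}$ is almost-full for a linear set $\vec{L}=\vec{b}+\per{\vec{G}}$ with $\vec{G}=\{\vec{g}_1,\ldots,\vec{g}_m\}$.

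\emph{The linear case via Dickson.} Each $\vec{x}\in\vec{L}$ is written as $\vec{x}=\vec{b}+\sum_i \lambda_i\vec{g}_i$ for some $\lambda\in\setN^m$; fix one such coefficient vector $\lambda(\vec{x})$. Suppose $\lambda(\vec{x})\leq\lambda(\vec{y})$ componentwise. Then $\vec{y}-\vec{x}=\sum_i(\lambda_i(\vec{y})-\lambda_i(\vec{x}))\vec{g}_i$ lies in $\per{\vec{G}}$. Hence $\vec{x}\leq\vec{y}$ holds, because the $\vec{g}_i$ are in $\setN^d$. Moreover $\vec{x}+\setN(\vec{y}-\vec{x})\subseteq\vec{b}+\per{\vec{G}}+\per{\vec{G}}\subseteq\vec{L}$. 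So $\lambda(\vec{x})\leq\lambda(\vec{y})$ implies $\vec{x}\leq_{\vec{L}}\vec{y}$. Given a sequence $(\vec{s}_n)$ in $\vec{L}$, Dickson's lemma applied to $(\lambda(\vec{s}_n))_n$ in $\setN^m$ gives $m<n$ with $\lambda(\vec{s}_m)\leq\lambda(\vec{s}_n)$. This yields $\vec{s}_m\leq_{\vec{L}}\vec{s}_n$, and hence $\vec{s}_m\leq_{\vec{S}}\vec{s}_n$.

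\emph{Assembling the argument.} For an infinite sequence in $\vec{S}$, extract an infinite subsequence inside one $\vec{L}_j$. Apply the linear case to obtain indices $m<n$ of the subsequence, which are also indices $m<n$ of the original sequence. The relation $\vec{s}_m\leq_{\vec{S}}\vec{s}_n$ then holds, which is exactly almost-fullness.

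The only delicate point is that $\leq_{\vec{S}}$ is not transitive, so one cannot argue through finite unions of well-quasi-orders in the usual way. The extraction argument above sidesteps this, because almost-fullness only asks for one good pair along any sequence. The choice of coefficient vector $\lambda(\vec{x})$ is arbitrary but fixed, which is harmless. I expect no genuine obstacle.
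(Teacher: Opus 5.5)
Your proposal is correct and follows essentially the same route as the paper. The paper also handles a linear set by applying Dickson's lemma to chosen coefficient vectors in $\setN^k$, and then treats a semilinear set by the same pigeonhole extraction into one component $\vec{L}_j$; as in your final argument, no Ramsey step is actually needed.
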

\begin{proof}
  Let us first prove the lemma for a linear set $\vec{L}$ given by a linear presentation $(\vec{b},\vec{G})$, i.e such that $\vec{L}=\vec{b}+\per{\vec{G}}$ for $\vec{b}\in\setN^d$ and $\vec{G}=\{\vec{g}_1,\ldots,\vec{g}_k\}$ is a finite set of vectors in $\setN^d$. Let us consider a sequence $(\vec{x}_n)_{n\in\setN}$ of vectors in $\vec{L}$. We introduce the mapping $f:\setN^k\rightarrow\vec{L}$ defined by $f(c_1,\ldots,c_k)=\vec{b}+\sum_{j=1}^kc_j\vec{g}_j$. Since $f(\setN^k)=\vec{L}$, we deduce that for every $n\in\setN$, there exists $\vec{c}_n\in\setN^k$ such that $\vec{x}_n=f(\vec{c}_n)$. Since the quasi-order $\leq$ on $\setN^k$ is almost-full, there exists $m<n$ such that $\vec{c}_m\leq\vec{c}_n$. It follows that $\vec{x}_m\leq_{\vec{L}}\vec{x}_n$ and we have proved that $\leq_{\vec{L}}$ is almost-full.

  Now, let us consider a semilinear set $\vec{S}$ of the form $\vec{L}_1\cdots\ldots\cup\vec{L}_k$ where $\vec{L}_j$ is a linear set for every $j$. Let us consider an infinite sequence $(\vec{x}_n)_{n\in\setN}$ in $\vec{S}$. There exists $j$ such that $\vec{x}_n\in\vec{L}_j$ for infinitely many indices $n\in\setN$. As $\leq_{\vec{L}_j}$ is almost-full, we deduce that there exists $m<n$ such that $\vec{x}_m\leq_{\vec{L}_j}\vec{x}_n$. In particular, as $\vec{L}_j\subseteq \vec{S}$ we deduce that $\vec{x}_m\leq_{\vec{S}}\vec{x}_n$. We have proved that $\leq_{\vec{S}}$ is almost-full.
\end{proof}

\medskip

We introduce the binary relation $\unlhd$ on runs of a VAS $\vec{A}$ defined by $\rho\unlhd \rho'$ if $\rho=\vec{c}_0\ldots\vec{c}_k$ for some configurations $\vec{c}_0,\ldots,\vec{c}_k$ and $\rho'=\rho_0\ldots\rho_k$ for some runs $\rho_j$ such that $\src{\rho_j},\tgt{\rho_j}\geq \vec{c}_j$. Let us recall~\cite{JANCARwqo,LS15} that $\unlhd$ is a well-quasi-order satisfying the following \emph{amalgamation property}.
\begin{lemma}[Amalgamation Property\cite{LS15}]
  \label{lem:amalgamation}
  For all runs $\rho\unlhd \rho_1,\rho_2$ of a VAS $\vec{A}$, there exists a run $\tau$ of $\vec{A}$ such that:
  \begin{itemize}
  \item $\rho_1,\rho_2\unlhd \tau$,
  \item $\src{\rho}+\src{\tau}=\src{\rho_1}+\src{\rho_2}$, and
  \item $\tgt{\rho}+\tgt{\tau}=\tgt{\rho_1}+\tgt{\rho_2}$.
  \end{itemize}
\end{lemma}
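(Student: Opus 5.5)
We have to prove the amalgamation property: given runs $\rho\unlhd\rho_1$ and $\rho\unlhd\rho_2$, build a run $\tau$ with $\rho_1,\rho_2\unlhd\tau$ whose source and target are $\src{\rho_1}+\src{\rho_2}-\src{\rho}$ and $\tgt{\rho_1}+\tgt{\rho_2}-\tgt{\rho}$. The idea is to merge the two runs segment by segment, exploiting the decompositions that witness $\rho\unlhd\rho_1$ and $\rho\unlhd\rho_2$.

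\textbf{Setup.} Write $\rho=\vec{c}_0\ldots\vec{c}_k$, with actions $\vec{a}_1,\ldots,\vec{a}_k$ so that $\vec{c}_j=\vec{c}_{j-1}+\vec{a}_j$. Write $\rho_1=\alpha_0\ldots\alpha_k$ and $\rho_2=\beta_0\ldots\beta_k$, where $\alpha_j,\beta_j$ are runs whose sources and targets are $\geq\vec{c}_j$. Concatenation in the word sense forces $\tgt{\alpha_{j-1}}+\vec{a}_j=\src{\alpha_j}$, and likewise for $\beta$. Set $\vec{x}_j\coloneqq\src{\alpha_j}-\vec{c}_j\in\setN^d$ and $\vec{x}'_j\coloneqq\tgt{\alpha_j}-\vec{c}_j\in\setN^d$. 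Define $\vec{y}_j,\vec{y}'_j$ the same way from the $\beta_j$. The linking equations then become $\vec{x}_j=\vec{x}'_{j-1}$ and $\vec{y}_j=\vec{y}'_{j-1}$.

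\textbf{Construction of $\tau$.} For each $j$, let $\tau_j$ be the concatenation of two translated runs:
\[
\tau_j\coloneqq(\alpha_j+\vec{y}_j)\,(\beta_j+\vec{x}'_j).
\]
The first run $\alpha_j+\vec{y}_j$ goes from $\vec{c}_j+\vec{x}_j+\vec{y}_j$ to $\vec{c}_j+\vec{x}'_j+\vec{y}_j$. The second run $\beta_j+\vec{x}'_j$ goes from $\vec{c}_j+\vec{y}_j+\vec{x}'_j$ to $\vec{c}_j+\vec{y}'_j+\vec{x}'_j$. These two endpoints agree, so the first run ends where the second starts. When concatenating, merge the shared configuration, which is allowed since runs are non-empty words. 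Translating a run by a vector of $\setN^d$ is again a run, since all configurations stay nonnegative. Hence $\tau_j$ is a run from $\vec{c}_j+\vec{x}_j+\vec{y}_j$ to $\vec{c}_j+\vec{x}'_j+\vec{y}'_j$.

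Now set $\tau\coloneqq\tau_0\ldots\tau_k$. Because $\vec{x}'_{j-1}=\vec{x}_j$ and $\vec{y}'_{j-1}=\vec{y}_j$, the target of $\tau_{j-1}$ plus $\vec{a}_j$ is exactly the source of $\tau_j$. So $\tau$ is a run.

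\textbf{Verification of the three properties.}
\begin{itemize}
\item Source: $\src{\tau}=\vec{c}_0+\vec{x}_0+\vec{y}_0=\src{\rho_1}+\src{\rho_2}-\src{\rho}$.
\item Target: $\tgt{\tau}=\vec{c}_k+\vec{x}'_k+\vec{y}'_k=\tgt{\rho_1}+\tgt{\rho_2}-\tgt{\rho}$.
\item $\rho_1\unlhd\tau$: decompose $\tau$ into runs indexed by the configurations of $\rho_1$. Each configuration $\vec{d}$ of $\alpha_j$ is matched with the single-configuration run $\vec{d}+\vec{y}_j$ inside the first part of $\tau_j$. The last configuration of $\alpha_j$ is matched instead with the whole run $\beta_j+\vec{x}'_j$. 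Its source $\vec{c}_j+\vec{y}_j+\vec{x}'_j$ and its target $\vec{c}_j+\vec{y}'_j+\vec{x}'_j$ both dominate $\tgt{\alpha_j}=\vec{c}_j+\vec{x}'_j$. Every other match is a translation by a vector of $\setN^d$, so it also dominates. These pieces concatenate exactly to $\tau$.
\item $\rho_2\unlhd\tau$: symmetrically, the first configuration of $\beta_j$ is matched with the whole run $\alpha_j+\vec{y}_j$, which dominates $\src{\beta_j}=\vec{c}_j+\vec{y}_j$ at both ends. The remaining configurations of $\beta_j$ are matched with their translates by $\vec{x}'_j$.
\end{itemize}

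The only delicate part is the bookkeeping at the junction inside $\tau_j$. A single-configuration run $\alpha_j$ or $\beta_j$ is allowed and needs no special treatment. Also, the junction configuration must be assigned to exactly one piece in each of the two decompositions; the choices above do this.
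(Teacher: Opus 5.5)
Your gluing construction $\tau_j=(\alpha_j+\vec{y}_j)(\beta_j+\vec{x}'_j)$ is the natural proof of amalgamation; the paper gives no proof and only cites the lemma. Your bookkeeping of sources, targets, the glued configuration, one-configuration blocks and the two decompositions of $\tau$ is correct. The one real problem is the sentence ``concatenation in the word sense forces $\tgt{\alpha_{j-1}}+\vec{a}_j=\src{\alpha_j}$''. It does not. Since $\rho_1$ is a run, concatenation only gives $\src{\alpha_j}-\tgt{\alpha_{j-1}}\in\vec{A}$, which may be an action different from $\vec{a}_j$. In that case $\vec{x}_j\neq\vec{x}'_{j-1}$, and the step from $\tgt{\tau_{j-1}}$ to $\src{\tau_j}$ need not be an action, so $\tau$ need not be a run.

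No cleverer argument avoids this: with $\unlhd$ read literally as the paper states it, the lemma is false. Take $\vec{A}=\{(1,0),(0,1)\}$, $\rho=(0,0)\,(1,0)$ and $\rho_1=\rho_2=(1,0)\,(1,1)$, split into the one-configuration runs $(1,0)\geq(0,0)$ and $(1,1)\geq(1,0)$. Then $\rho\unlhd\rho_1,\rho_2$ literally holds. But $\tau$ would have to go from $(2,0)$ to $(1,2)$, which is impossible because every run of this VAS is componentwise nondecreasing.

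So $\vec{x}_j=\vec{x}'_{j-1}$ and $\vec{y}_j=\vec{y}'_{j-1}$ must be a hypothesis, not a derivation. The embedding has to require that the step from $\tgt{\rho_{j-1}}$ to $\src{\rho_j}$ is the same action $\vec{c}_j-\vec{c}_{j-1}$ as in $\rho$. This is how $\unlhd$ must be understood for the lemma to hold; the paper's one-line definition leaves it implicit. State it explicitly.

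Once you do, ``these pieces concatenate exactly to $\tau$'' no longer suffices for $\rho_1,\rho_2\unlhd\tau$; you must also check the condition for your two decompositions of $\tau$. It does hold:
\begin{itemize}
\item Inside $\tau_j$, every step between consecutive pieces is a translate of a step of $\alpha_j$ (first decomposition) or of $\beta_j$ (second decomposition). This includes the step into the glued configuration in the first case and the step out of it in the second.
\item The step from $\tgt{\tau_{j-1}}$ to $\src{\tau_j}$ equals $\vec{a}_j$, which is exactly the step between blocks $j-1$ and $j$ in both $\rho_1$ and $\rho_2$.
\end{itemize}
With this amendment your proof is complete.
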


\subsection{Almost Semilinear Sets} \label{subsec:almost-SL}
Thanks to the well-quasi-order $\unlhd$ on VAS runs, the reachability set of a VAS $(\vec{A},\vec{C}_0)$ can be decomposed as finite union of sets \emph{called almost linear sets} since they are geometrically close to the linear sets.

\medskip

An \emph{almost linear set} $\vec{L}$ is a set of the form $\vec{b}+\vec{P}$ where $\vec{b}\in\setN^d$ and $\vec{P}\subseteq\setN^d$ is a periodic set such that the \emph{cone} $\setQ_{\geq 0}\vec{P}$ is definable in $FO(\setQ_{\geq 0},+)$.\footnote{Since $FO(\setQ_{\geq 0},+,\leq)$ admits quantifier elimination, a set $\vec{X}\subseteq \setQ_{\geq 0}^d$ is definable in this logic if, and only if, it is a boolean combination of \emph{half-spaces} $\{(x_1,\ldots,x_d)\in\setQ_{\geq 0}^d \mid a_1x_1+\cdots+a_dx_d\sim 0\}$ where $a_1,\ldots,a_d\in\setZ$ and $\sim$ is either $\geq$ or $>$.
}

An \emph{almost semilinear set} is a finite union of almost linear sets.
\begin{example}\label{ex:pow}
  A semilinear periodic set $\vec{P}$ is such that the \emph{cone} $\setQ_{\geq 0}\vec{P}$ is definable in $FO(\setQ_{\geq 0},+)$. The converse is not true since the periodic set $\vec{P}=\{(0,0)\}\cup \{(x,y)\in\setN^2\mid x\geq 1\wedge y\leq 2^x\}$ is such that the cone $\setQ_{\geq 0}\vec{P}$ is definable in $FO(\setQ_{\geq 0},+)$ since it is equal to $\{(0,0)\}\cup (\setQ_{> 0}\times \setQ_{\geq 0})$. But $\vec{P}$ is not semilinear.
\end{example}

We recall the following theorem.
\begin{theorem}[\cite{DBLP:conf/popl/Leroux11}]\label{lem:reachalmost}
  The set $\poststar{\vec{A}}{\vec{C}_0}\cap\vec{S}$ is almost semilinear for every semilinear set $\vec{S}\subseteq \setN^d$ and for every VAS $(\vec{A},\vec{C}_0)$. 
\end{theorem}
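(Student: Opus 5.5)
The plan is to follow the run-based decomposition behind Leroux's proof. It has three parts: a finite basis obtained from a well-quasi-order, periodicity obtained from amalgamation, and a separate argument for definability of the cones.

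\textbf{Reduction to linear pieces.} Since almost semilinear sets are closed under finite unions, I would first write $\vec{C}_0=\bigcup_i \vec{b}_i+\per{\vec{G}_i}$ and $\vec{S}=\bigcup_j \vec{b}'_j+\per{\vec{G}'_j}$. I then treat each pair $(i,j)$ separately. The periods $\vec{G}_i$ can be absorbed by adding to $\vec{A}$ one extra counter per generator together with the actions $\vec{g}$ for $\vec{g}\in\vec{G}_i$, gated by these counters, and then projecting. However, projection does not obviously preserve almost semilinearity. I therefore prefer to keep $\vec{C}_0$ and $\vec{S}$ as they are and work with triples $(\rho,\vec{u},\vec{v})$. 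Here $\rho$ is a run with $\src{\rho}=\vec{b}_i+\vec{G}_i\vec{u}$ and $\tgt{\rho}=\vec{b}'_j+\vec{G}'_j\vec{v}$.

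\textbf{Finite basis.} I order such triples by $(\rho,\vec{u},\vec{v})\preceq(\rho',\vec{u}',\vec{v}')$ iff $\rho\unlhd\rho'$, $\vec{u}\leq\vec{u}'$ and $\vec{v}\leq\vec{v}'$. This is a product of well-quasi-orders, hence a well-quasi-order, in the spirit of \cref{lem:semilinear-to-well}. Consequently the set of triples has finitely many minimal elements $t_1,\ldots,t_m$, and $\poststar{\vec{A}}{\vec{C}_0}\cap\vec{S}$ is the union over $k$ of the targets of triples above $t_k$. For a fixed minimal triple $t=(\rho,\vec{u},\vec{v})$, I define
\[
\vec{P}_t=\{(\src{\rho'}-\src{\rho},\,\tgt{\rho'}-\tgt{\rho}) \mid (\rho',\vec{u}',\vec{v}')\succeq t\}.
\]
By the amalgamation property (\cref{lem:amalgamation}), $\vec{P}_t$ is closed under addition. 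Indeed, if $\rho\unlhd\rho_1,\rho_2$, the amalgamated run $\tau$ has source and target displacements equal to the sums of those of $\rho_1$ and $\rho_2$. Its parameters $\vec{u}_1+\vec{u}_2-\vec{u}$ and $\vec{v}_1+\vec{v}_2-\vec{v}$ are nonnegative and lie in the right linear sets. Since $\vec{0}\in\vec{P}_t$ via $\rho$ itself, $\vec{P}_t$ is periodic. The target projection $\vec{Q}_t$ of $\vec{P}_t$ is then periodic as well, and the piece of the reachability set above $t$ equals $\tgt{\rho}+\vec{Q}_t$, which is an almost linear set provided its cone is definable.

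\textbf{Definability of $\setQ_{\geq 0}\vec{Q}_t$ (main obstacle).} This is the step I expect to be hardest. It suffices to show that $\setQ_{\geq0}\vec{P}_t$ is definable in $FO(\setQ_{\geq0},+)$, since definability is preserved under linear projection by quantifier elimination. My first attempt is to describe the cone through its faces. For each face, the vectors of $\vec{P}_t$ lying in its relative interior should be exactly the displacements of runs that pump the ``cycles'' of $\rho$ along that face. The set of such displacements is controlled by a finite family of integer linear constraints: Kirchhoff-type equations on the transitions and the periods, together with sign conditions on the counters that are unbounded along $\rho$.

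I would then show that the cone is a finite union of relatively open polyhedral cones. This requires proving that each face is generated by finitely many ``productive'' directions. The tool is again the wqo $\unlhd$ applied within the face, combined with the observation that $\setQ\vec{P}_t\cap\vec{F}$ is a vector space for each face $\vec{F}$, so that \cref{lem:insecable} applies. An induction on the space-dimension, using \cref{cor:dimspacedim}, should ensure that only finitely many faces occur. If this direct approach stalls, the fallback is to appeal to the KLM-style characterization of the ``transformer relation'' of $\rho$. By that characterization, the rational cone of displacements equals the set of solutions of a linear system with strict inequalities, and hence is definable.
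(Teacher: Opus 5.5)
\paragraph{Verdict.} Your proposal has a genuine gap at its central step, the definability of the cones.

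\paragraph{What works, and what fails.} The paper gives no proof of this theorem; it quotes it from Leroux's POPL~2011 paper, so the comparison is with that argument. Your first two steps are the standard ones and are sound:
\begin{itemize}
\item the product order on triples is a well-quasi-order, so there are finitely many minimal triples;
\item amalgamation keeps the parameters admissible, since $\vec{u}_1+\vec{u}_2-\vec{u}\geq\vec{u}_1\geq\vec{u}$, and keeps $\tau$ above $\rho$ by transitivity of $\unlhd$, so $\vec{P}_t$ is periodic;
\item cones commute with projections.
\end{itemize}
But this only yields a finite union of sets $\vec{b}+\vec{Q}$ with $\vec{Q}$ periodic. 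The whole content of the theorem is the definability of the cones, and your face argument for it does not work.

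It uses nothing about VAS beyond periodicity and well-quasi-orders, and these cannot suffice. The set $\{(x,y,z)\in\setN^3\mid z^2\leq xy\}$ is periodic (by AM--GM) and has space-dimension $3$. Its cone $\{(x,y,z)\in\setQ_{\geq 0}^3\mid z^2\leq xy\}$ has infinitely many extreme rays $(t^2,1,t)$, $t\in\setQ_{>0}$. So it is not definable in $FO(\setQ_{\geq 0},+)$. Consequently:
\begin{itemize}
\item no induction on space-dimension can bound the number of faces;
\item \cref{lem:insecable} is about covering a periodic set by affine spaces and says nothing about the faces of its cone;
\item $\setQ\vec{P}_t\cap\vec{F}$ is a cone, not a vector space;
\item finitely many $\unlhd$-minimal elements do not give finitely many generators of a cone.
\end{itemize}
Moreover, $\vec{P}_t$ restricted to a relatively open face cannot in general be described exactly by integer linear constraints. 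That would make $\vec{P}_t$ semilinear, and then the Hopcroft--Pansiot reachability set would be semilinear. Only the rational cone can admit a linear description, and proving that it does is exactly the point. Your fallback simply invokes that fact.

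\paragraph{The missing idea.} What is missing is the lemma on which Leroux's proof rests: transformer relations are asymptotically definable.

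Read $\unlhd$ as Jan\v{c}ar's ordering: $\rho'$ keeps the actions of $\rho=\vec{c}_0\cdots\vec{c}_k$ and inserts at each $\vec{c}_j$ a run from $\vec{c}_j+\vec{x}_j$ to $\vec{c}_j+\vec{y}_j$, which forces $\vec{x}_j=\vec{y}_{j-1}$. Then the displacements above $\rho$ form the composition $R_{\vec{c}_0}\circ\cdots\circ R_{\vec{c}_k}$, where $R_{\vec{c}}=\{(\vec{x},\vec{y})\mid\vec{c}+\vec{x}\xrightarrow{\vec{A}^*}\vec{c}+\vec{y}\}$.

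For periodic sets, taking cones commutes with composition, intersection and products (scale by a common multiple). Hence
$\setQ_{\geq 0}\vec{P}_t=(\setQ_{\geq 0}R_{\vec{c}_0}\circ\cdots\circ\setQ_{\geq 0}R_{\vec{c}_k})\cap(\setQ_{\geq 0}\per{\vec{G}_i}\times\setQ_{\geq 0}\per{\vec{G}'_j})$.
So everything reduces to the definability of $\setQ_{\geq 0}R_{\vec{c}}$ for a single configuration $\vec{c}$.

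That step needs a genuine VAS-specific pumping argument. For instance, one can describe the cone by finitely many linear systems indexed by the supports of $(\vec{x},\vec{y})$. The hard inclusion is then that every rational solution is realised, for some large $n$, by an actual run from $\vec{c}+n\vec{x}$ to $\vec{c}+n\vec{y}$. Nothing in your proposal supplies this.
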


We also recall some definitions that provide a way to over-approximate almost linear sets by linear sets~(see \cite{DBLP:conf/popl/Leroux11} for more details). The \emph{linearization} of an almost linear set $\vec{L}=\vec{b}+\vec{P}$ is the set $\lin{\vec{L}}=\vec{b}+\vec{Q}$ where $\vec{Q} \coloneqq (\vec{P}-\vec{P})\cap \overline{\setQ_{\geq 0}\vec{P}}$ and $\overline{\vec{X}}$ where $\vec{X}\subseteq \setQ^d$ is the classical \emph{topological closure} of $\vec{X}$.
Let us recall that $\lin{\vec{L}}$ is a linear set since $\vec{Q}$ is a finitely-generated periodic set~\cite[Lemma 5.1]{DBLP:conf/popl/Leroux11}. A \emph{linearization} of an almost semilinear set $\vec{S}$ is a finite set $\{\lin{\vec{L}_1},\ldots,\lin{\vec{L}_k}\}$ where $\vec{L}_1,\ldots,\vec{L}_k$ are almost linear sets satisfying $\vec{S}=\bigcup_{j=1}^k\vec{L}_j$. Since such a decomposition of an almost semilinear set into finite union of almost linear sets is not unique, an almost semilinear set can admit several linearizations.
\begin{example}\label{ex:pow2}
  Let us come back to Example~\ref{ex:pow} and notice that $\overline{\setQ_{\geq 0}\vec{P}}=\setQ_{\geq 0}^2$ and $\vec{P}-\vec{P}=\setZ^2$. It follows that $\lin{\vec{P}}=\setN^2$.
\end{example}

\begin{example}\label{ex:linlin}
  The linearization of the linear set $\vec L=\per{\{2,3\}}$ is the linear set $\setN$. Observe that $\setN$ is strictly larger than $\vec L$ since $1\not\in \vec L$.
\end{example}

Linearizations provide a simple way to over-approximate almost linear sets and almost semilinear sets by linear sets and semilinear sets respectively. The following lemma shows that the space-dimension is unchanged by those approximations.
\begin{lemma} \label{lem:spacelin}
  We have $\sdim{\vec{L}}=\sdim{\lin{\vec{L}}}$ for every almost linear set $\vec{L}$. 
\end{lemma}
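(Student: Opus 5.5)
Write $\vec{L}=\vec{b}+\vec{P}$ with $\vec{P}$ periodic, and $\lin{\vec{L}}=\vec{b}+\vec{Q}$ with $\vec{Q}=(\vec{P}-\vec{P})\cap\overline{\setQ_{\geq 0}\vec{P}}$. Translations do not change the space-dimension, so it suffices to prove $\sdim{\vec{P}}=\sdim{\vec{Q}}$. Both sets are non-empty, contain $\vec{0}$, and are closed under addition. For $\vec{P}$ this holds by definition. For $\vec{Q}$, $\vec{P}-\vec{P}$ is a group and the closure of a convex cone is again a convex cone, so $\vec{Q}$ is an intersection of two additively closed sets. Corollary~\ref{cor:dimspacedim} therefore applies to both. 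Writing $\vec{V}$ for the vector space spanned by $\vec{P}$ and $\vec{W}$ for the one spanned by $\vec{Q}$, it remains to show $\dim{\vec{V}}=\dim{\vec{W}}$, and I will in fact show $\vec{V}=\vec{W}$.

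For $\vec{V}\subseteq\vec{W}$, note that $\vec{P}=\vec{P}-\vec{0}\subseteq\vec{P}-\vec{P}$ and $\vec{P}\subseteq\setQ_{\geq 0}\vec{P}\subseteq\overline{\setQ_{\geq 0}\vec{P}}$. Hence $\vec{P}\subseteq\vec{Q}$, and so $\vec{V}\subseteq\vec{W}$.

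For $\vec{W}\subseteq\vec{V}$, note that $\vec{Q}\subseteq\vec{P}-\vec{P}\subseteq\vec{V}$, because $\vec{V}$ is a vector space containing $\vec{P}$. Thus $\vec{W}\subseteq\vec{V}$. (One could also observe that $\overline{\setQ_{\geq 0}\vec{P}}\subseteq\vec{V}$, since vector spaces of $\setQ^d$ are topologically closed, but this is not needed.)

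The only point needing care is that $\vec{Q}$ is closed under addition. The group part is immediate. For the cone part, take $\vec{x},\vec{y}\in\overline{\setQ_{\geq 0}\vec{P}}$, given as limits of sequences $\vec{x}_n,\vec{y}_n\in\setQ_{\geq 0}\vec{P}$. Then $\vec{x}_n+\vec{y}_n\in\setQ_{\geq 0}\vec{P}$, because $\setQ_{\geq 0}\vec{P}$ is a convex cone: $\lambda\vec{p}+\mu\vec{p}'$ with $\lambda,\mu\geq 0$ lies in $\setQ_{\geq 0}\vec{P}$ after clearing denominators, using that $\vec{P}$ is periodic. Passing to the limit gives $\vec{x}+\vec{y}\in\overline{\setQ_{\geq 0}\vec{P}}$. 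With this in hand the lemma follows.
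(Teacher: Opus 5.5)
Your proof is correct and follows the paper's argument. You use translation invariance of the space-dimension and Corollary~\ref{cor:dimspacedim} to reduce the claim to showing that $\vec{P}$ and $\vec{Q}$ span the same vector space, via $\vec{P}\subseteq\vec{Q}\subseteq\vec{V}$. Two small differences are both valid refinements: your shortcut $\vec{Q}\subseteq\vec{P}-\vec{P}\subseteq\vec{V}$ avoids the topological-closure step the paper uses, and your explicit check that $\vec{Q}+\vec{Q}\subseteq\vec{Q}$ supplies a hypothesis of the corollary that the paper leaves implicit (relying on the cited fact that $\vec{Q}$ is a finitely-generated periodic set).
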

\begin{proof}
  Assume that $\vec{L}=\vec{b}+\vec{P}$ and observe that $\lin{\vec{L}}=\vec{b}+\vec{Q}$ where $\vec{Q}$ is the periodic set $(\vec{P}-\vec{P})\cap \overline{\setQ_{\geq 0}\vec{P}}$. Let $\vec{V}$ and $\vec{W}$ be the vector space spanned by $\vec{P}$ and $\vec{Q}$. Corollary~\ref{cor:dimspacedim} shows that $\sdim{\vec{P}}=\dim{\vec{V}}$ and $\sdim{\vec{Q}}=\dim{\vec{W}}$. Since $\sdim{\vec{L}}=\sdim{\vec{P}}$ and $\sdim{\lin{\vec{L}}}=\sdim{\vec{Q}}$, it is sufficient to prove that $\vec{V}=\vec{W}$. As $\vec{P}\subseteq \vec{Q}$ we get $\vec{P}\subseteq \vec{W}$. Hence $\vec{V}\subseteq \vec{W}$ by minimality of $\vec{V}$. For the converse inclusion, the inclusion $\vec{P}\subseteq \vec{V}$ implies that $\vec{P}-\vec{P}\subseteq \vec{V}$ and $\setQ_{\geq 0}\vec{P}\subseteq \vec{V}$. Since a vector-space is topologically-closed, from $\setQ_{\geq 0}\vec{P}\subseteq \vec{V}$ we derive $\overline{\setQ_{\geq 0}\vec{P}}\subseteq \vec{V}$. We have proved $\vec{Q}\subseteq\vec{V}$. It follows that $\vec{W}\subseteq\vec{V}$ by minimality of $\vec{W}$. We have proved that $\vec{V}=\vec{W}$.
\end{proof}

\subsection{Interior Vectors}
We now introduce a new technique to refine the approximation thanks to the notion of \emph{interior vectors of a periodic set}. Formally, an \emph{interior vector} of a periodic set $\vec{P}\subseteq \setN^d$ is a vector $\vec{p}\in\setN^d$ such that for every $\vec{q}\in\vec{P}$, there exists $n\in\setN_{>0}$ satisfying $n\vec{p}\in \vec{q}+\vec{P}$. \emph{Interior vectors} can be geometrically characterized by the topological interior of the cone $\setQ_{\geq 0}\vec{P}$ spanned by $\vec{P}$. Since this characterization is not used in this paper, we just provide the following results about interior vectors%
\ifthenelse{\boolean{arXivLongVersion}}{%
\ (Detailed proofs are given in appendix).
}{.}

The following lemma shows as direct corollary that any periodic set admits an interior vector.
\begin{restatable}{lemma}{lemintcara}\label{lem:intcara}
    A vector $\vec{p}\in\setN^d$ is interior to a periodic set $\vec{P}\subseteq\setN^d$ if, and only if, there exists a finite sequence $\vec{p}_1,\ldots,\vec{p}_k \in \vec P$ of vectors spanning the same vector space as $\vec{P}$ such that:
    $$\vec{p}\in\sum_j \setQ_{>0}\vec{p}_j$$
\end{restatable}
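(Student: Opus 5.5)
The plan is to prove the two implications separately. Each direction only needs the definition of interior vector, the periodicity of $\vec{P}$, and the fact recalled in the preliminaries that the vector space $\vec{V}$ spanned by $\vec{P}$ is spanned by finitely many vectors of $\vec{P}$.

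\emph{($\Rightarrow$)} Assume $\vec{p}$ is interior to $\vec{P}$. First pick $\vec{p}_1,\ldots,\vec{p}_r\in\vec{P}$ spanning $\vec{V}$, and set $\vec{s}\coloneqq\vec{p}_1+\cdots+\vec{p}_r$, which lies in $\vec{P}$ by periodicity. Applying the definition of interior vector to $\vec{q}=\vec{s}$ gives $n\in\setN_{>0}$ and $\vec{q}'\in\vec{P}$ with $n\vec{p}=\vec{s}+\vec{q}'$. Set $\vec{p}_{r+1}\coloneqq\vec{q}'$ (or omit it if $\vec{q}'=\vec{0}$). Then
$$\vec{p}=\sum_{j}\tfrac1n\vec{p}_j\in\sum_j\setQ_{>0}\vec{p}_j.$$
The enlarged family is still contained in $\vec{P}$ and still spans $\vec{V}$, as required. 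In the degenerate case $\vec{P}=\{\vec{0}\}$ the family is empty, and the definition with $\vec{q}=\vec{0}$ forces $\vec{p}=\vec{0}$, which is consistent with the empty sum.

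\emph{($\Leftarrow$)} Write $\vec{p}=\sum_j\lambda_j\vec{p}_j$ with $\lambda_j\in\setQ_{>0}$, and let $\vec{q}\in\vec{P}$. Since $\vec{q}\in\vec{V}$, it is a rational combination $\vec{q}=\sum_j\mu_j\vec{p}_j$. The main obstacle is that the $\mu_j$ can be negative and non-integral, so $n\vec{p}-\vec{q}$ is not directly a $\setN$-combination of the $\vec{p}_j$. To handle this, choose $D\in\setN_{>0}$ clearing the denominators of all $\mu_j$, so that $D\vec{q}=\sum_j c_j\vec{p}_j$ with $c_j\in\setZ$. Then write
$$n\vec{p}-\vec{q}=(D-1)\vec{q}+\sum_j(n\lambda_j-c_j)\vec{p}_j.$$
Now choose $n$ to be a sufficiently large multiple of the denominators of the $\lambda_j$, so that every coefficient $n\lambda_j-c_j$ is a nonnegative integer; this is possible because every $\lambda_j>0$. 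The right-hand side then belongs to $\vec{P}$ by periodicity, since $\vec{q}\in\vec{P}$ and each $\vec{p}_j\in\vec{P}$. Hence $n\vec{p}\in\vec{q}+\vec{P}$, so $\vec{p}$ is interior.

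The corollary that every periodic set admits an interior vector follows by taking $\vec{p}=\sum_j\vec{p}_j$ for a spanning family $\vec{p}_1,\ldots,\vec{p}_k$ in $\vec{P}$.
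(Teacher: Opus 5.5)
Your proof is correct and follows essentially the same route as the paper. In ($\Rightarrow$) you apply the interior property to the sum of a spanning family and append the leftover vector; in ($\Leftarrow$) you use the same trick of absorbing $(D-1)\vec{q}$ into $\vec{P}$. The only cosmetic difference is that you pick $n$ as a large multiple of the denominators of the $\lambda_j$, whereas the paper first rescales so that $\vec{p}=\sum_j\vec{p}_j$.
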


The next lemma shows that interior vectors of $\vec{P}$ can be obtained just by considering interior vectors of $\vec{Q} \coloneqq (\vec{P}-\vec{P})\cap \overline{\setQ_{\geq 0}\vec{P}}$. Notice that if $\vec{Q}$ is finitely-generated, i.e. a periodic set of the form $\vec{Q}=\per{\vec{G}}$ for some finite set $\vec{G}\subseteq \setN^d$, thanks to the previous lemma~\ref{lem:intcara}, we deduce that the vector $\sum_{\vec{g}\in\vec{G}}\vec{g}$ is an interior vector of $\vec{Q}$, and in particular an interior vector of $\vec{P}$.
\begin{restatable}{lemma}{lemintPQ}\label{lem:intPQ}
  The set of interior vectors of a periodic set $\vec{P}\subseteq \setN^d$ coincides with the set of interior vectors of the periodic set $\vec{Q}\coloneqq(\vec{P}-\vec{P})\cap\overline{\setQ_{\geq 0}\vec{P}}$.
\end{restatable}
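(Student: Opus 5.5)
The plan is to prove the two inclusions between the set of interior vectors of $\vec{P}$ and the set of interior vectors of $\vec{Q}$. Lemma~\ref{lem:intcara} will be the main tool, used in both directions.

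\textbf{Setup.} First we record the basic facts.
\begin{itemize}
\item $\vec{P}\subseteq\vec{Q}$, and $\vec{Q}$ is periodic.
\item By the proof of Lemma~\ref{lem:spacelin}, $\vec{P}$ and $\vec{Q}$ span the same vector space $\vec{V}$.
\item Every $\vec{q}\in\vec{Q}$ lies in $\vec{V}\cap\overline{\setQ_{\geq 0}\vec{P}}$. Moreover, $\overline{\setQ_{\geq 0}\vec{P}}=\overline{\setQ_{\geq 0}\vec{Q}}$.
\end{itemize}

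\textbf{Interior for $\vec{P}$ implies interior for $\vec{Q}$.} Let $\vec{p}$ be interior to $\vec{P}$. Lemma~\ref{lem:intcara} gives $\vec{p}_1,\ldots,\vec{p}_k\in\vec{P}$ spanning $\vec{V}$ with $\vec{p}\in\sum_j\setQ_{>0}\vec{p}_j$. These vectors lie in $\vec{Q}$ and span the vector space spanned by $\vec{Q}$. The converse direction of Lemma~\ref{lem:intcara}, applied to $\vec{Q}$, then shows that $\vec{p}$ is interior to $\vec{Q}$.

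\textbf{Interior for $\vec{Q}$ implies interior for $\vec{P}$.} Let $\vec{p}$ be interior to $\vec{Q}$. Lemma~\ref{lem:intcara} gives $\vec{q}_1,\ldots,\vec{q}_k\in\vec{Q}$ spanning $\vec{V}$ with $\vec{p}=\sum_j\lambda_j\vec{q}_j$ and all $\lambda_j>0$.

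We want to replace the $\vec{q}_j$ by elements of $\vec{P}$. Fix a finite family $\vec{p}_1,\ldots,\vec{p}_m\in\vec{P}$ spanning $\vec{V}$, and set $\vec{r}=\sum_i\vec{p}_i$. For each $\vec{q}\in\vec{Q}$ we aim to show:
\[\vec{q}+\epsilon\vec{r}\in\setQ_{\geq 0}\vec{P}\quad\text{for every rational }\epsilon>0.\]

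This is the main obstacle. The reason it should hold is that $\vec{r}$ lies in the relative interior of the closed cone $\overline{\setQ_{\geq 0}\vec{P}}$ inside $\vec{V}$. The cone $\setQ_{\geq 0}\vec{P}$ is convex and has the same relative interior as its closure. Hence the open segment from a point of the closure to a relative-interior point stays inside $\setQ_{\geq 0}\vec{P}$.

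I would make this elementary rather than topological. The idea is to approximate $\vec{q}$ by $\vec{x}\in\setQ_{\geq 0}\vec{P}$ with $\vec{q}-\vec{x}\in\vec{V}$ small. Since the $\vec{p}_i$ span $\vec{V}$, write
\[\vec{q}+\epsilon\vec{r}-\vec{x}=\sum_i\mu_i\vec{p}_i.\]
Because $\vec{q}-\vec{x}$ is small, its coefficients can be chosen small (coordinates with respect to a fixed basis extracted from the $\vec{p}_i$ depend continuously on the vector). So for $\vec{x}$ close enough to $\vec{q}$, every $\mu_i$ is positive, and $\vec{q}+\epsilon\vec{r}$ is a nonnegative combination of elements of $\vec{P}$.

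\textbf{Finishing the second direction.} Define the rewritten sum
\[\vec{p}=\sum_j\lambda_j'\vec{q}_j+\delta\vec{r},\qquad \lambda_j'=\lambda_j-\eta>0,\quad \delta=\eta\sum_j\ldots\]
The precise $\delta$ is not needed. Instead, rewrite $\vec{p}$ as
\[\vec{p}=\sum_j\tfrac{\lambda_j}{2}\,\vec{q}_j+\tfrac{1}{2}\sum_j\lambda_j\vec{q}_j.\]
We must avoid circularity here, so the cleaner route is to perturb directly. Pick $\epsilon>0$ small enough that
\[\vec{p}-\epsilon\sum_j\lambda_j\vec{r}=\sum_j\lambda_j(\vec{q}_j-\epsilon\vec{r}).\]
We need this difference to lie in $\sum_j\setQ_{>0}$ of vectors of $\vec{P}$. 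Instead, we use that $\vec{p}$ lies in the relative interior of $\setQ_{\geq 0}\vec{Q}$, hence is a positive combination of the spanning vectors $\vec{q}_j+\epsilon\vec{r}$ when $\epsilon$ is small. By the claim above, each $\vec{q}_j+\epsilon\vec{r}$ is a nonnegative combination of elements of $\vec{P}$.

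Combining, $\vec{p}$ is a positive combination of finitely many elements of $\vec{P}$, all with strictly positive coefficients. We may include all the $\vec{p}_i$ among them, by splitting off a small multiple of $\vec{r}$ in the same way, so that the family spans $\vec{V}$. Lemma~\ref{lem:intcara} then shows that $\vec{p}$ is interior to $\vec{P}$.
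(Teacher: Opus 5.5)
Your first direction and your key claim (for every $\vec{q}\in\vec{Q}$ and rational $\epsilon>0$, $\vec{q}+\epsilon\vec{r}\in\setQ_{\geq 0}\vec{P}$) are correct. The claim is essentially the paper's argument with the ball $B_\epsilon\subseteq\sum_i(-1,1)\vec{p}_i$ in $\vec{V}$.

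The gap is in the finishing step of the second direction, which still contains abandoned attempts. You conclude that $\vec{p}$ is a positive combination of the vectors $\vec{q}_j+\epsilon\vec{r}$ because ``$\vec{p}$ lies in the relative interior of $\setQ_{\geq0}\vec{Q}$''. That inference is invalid. A relative-interior point of a cone need not be a nonnegative combination of a given spanning family inside the cone. For example, in $\setQ_{\geq0}^2$ the interior point $(1,2)$ is not a nonnegative combination of $(1,0),(1,1)$. Membership in the relative interior of $\setQ_{\geq0}\vec{Q}$ is beside the point.

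What you need is a perturbation of the specific representation $\vec{p}=\sum_j\lambda_j\vec{q}_j$, and it has to be argued. For instance:
\begin{itemize}
\item Write $\vec{r}=\sum_j c_j\vec{q}_j$, which is possible since the $\vec{q}_j$ span $\vec{V}\ni\vec{r}$.
\item Choose $\delta>0$ with $\alpha_j\coloneqq\lambda_j-\delta c_j>0$ for all $j$. Then $\vec{p}=\sum_j\alpha_j\vec{q}_j+\delta\vec{r}$.
\item Choose $\epsilon>0$ with $\epsilon\sum_j\alpha_j<\delta$. This gives
\[
\vec{p}=\sum_j\alpha_j(\vec{q}_j+\epsilon\vec{r})+\Bigl(\delta-\epsilon\sum_j\alpha_j\Bigr)\vec{r}.
\]
\end{itemize}
The first sum lies in $\setQ_{\geq0}\vec{P}$ by your claim, since this set is a convex cone when $\vec{P}$ is periodic. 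The last term puts strictly positive coefficients on all the $\vec{p}_i$, which settles the spanning requirement, so Lemma~\ref{lem:intcara} applies.

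The paper avoids this detour. Instead of invoking Lemma~\ref{lem:intcara} for $\vec{Q}$, it applies the definition of interior vector of $\vec{Q}$ directly to $\vec{r}\in\vec{P}\subseteq\vec{Q}$. This gives $n\vec{p}=\vec{r}+\vec{x}$ with $\vec{x}\in\vec{Q}$. A single approximation $\vec{x}=\vec{y}+\vec{z}$, with $\vec{y}\in\setQ_{\geq0}\vec{P}$ and $\vec{z}\in B_\epsilon$, then yields $n\vec{p}=(\vec{r}+\vec{z})+\vec{y}$ with $\vec{r}+\vec{z}\in\sum_i(0,2)\vec{p}_i$. The slack $\vec{r}$ that you manufacture by perturbing the $\vec{q}_j$ comes for free from the definition, so only one element of $\vec{Q}$ needs approximating.
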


Interior vectors are used in the sequel to translate a vector $\vec{y}\in\vec{Q}$ in order to obtain an infinite subset of $\vec{P}$ of the form $\vec{y}+r\setN_{>0}\vec{x}$ as shown in the following lemma.
\begin{restatable}{lemma}{lempushin}\label{lem:pushin}
  Let $\vec{P}\subseteq \setN^d$ be a periodic set and
  let $\vec{Q} \coloneqq (\vec{P}-\vec{P}) \cap \overline{\setQ_{\geq 0}\vec{P}}$.
  For every $\vec{y} \in \vec{Q}$, 
  for every interior vector $\vec{q}$ of $\vec{Q}$, and 
  for every $\vec{x} \in \vec{q} + \vec{Q}$,
  there exists $r \in \setN_{> 0}$ such that
  $\vec{y} + r \setN_{> 0} \vec{x} \subseteq \vec{P}$.
\end{restatable}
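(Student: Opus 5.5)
The idea is to combine the interior vector $\vec{q}$ with the definition of $\vec{Q}$ to write $\vec{x}$ as a difference of two elements of $\vec{P}$ in which the subtracted part is small compared with a multiple of $\vec{x}$. Concretely, I would first show that $\vec{x}$ is itself an interior vector of $\vec{Q}$. Take $\vec{x}=\vec{q}+\vec{z}$ with $\vec{z}\in\vec{Q}$. For any $\vec{w}\in\vec{Q}$, since $\vec{q}$ is interior there is $n>0$ with $n\vec{q}\in\vec{w}+\vec{Q}$. Hence $n\vec{x}=n\vec{q}+n\vec{z}\in\vec{w}+\vec{Q}$, because $\vec{Q}$ is periodic. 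By \cref{lem:intPQ}, $\vec{x}$ is then also an interior vector of $\vec{P}$.

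Next I would handle $\vec{y}$. Since $\vec{y}\in\vec{Q}\subseteq\vec{P}-\vec{P}$, write $\vec{y}=\vec{p}_1-\vec{p}_2$ with $\vec{p}_1,\vec{p}_2\in\vec{P}$. Because $\vec{x}$ is interior to $\vec{P}$ and $\vec{p}_2\in\vec{P}$, there is $n_0>0$ with $n_0\vec{x}\in\vec{p}_2+\vec{P}$, i.e.\ $n_0\vec{x}=\vec{p}_2+\vec{p}_3$ for some $\vec{p}_3\in\vec{P}$. Then
$$\vec{y}+n_0\vec{x}=\vec{p}_1+\vec{p}_3\in\vec{P}.$$

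Finally I want all of $\vec{y}+r\setN_{>0}\vec{x}$, not just one multiple. The natural choice is $r=n_0$. For $k\geq1$ we have
$$\vec{y}+kn_0\vec{x}=(\vec{y}+n_0\vec{x})+(k-1)n_0\vec{x}.$$
This lies in $\vec{P}$ provided $n_0\vec{x}\in\vec{P}$. Applying interiority of $\vec{x}$ with $\vec{0}\in\vec{P}$ only gives some $m>0$ with $m\vec{x}\in\vec{P}$, not necessarily $m=n_0$. So I would set $r=n_0m$ and use the identity
$$\vec{y}+kr\vec{x}=(\vec{y}+n_0\vec{x})+(n_0(km-1))\vec{x}.$$
Here $n_0(km-1)$ is not obviously a multiple of $m$. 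The fix is to choose the exponent in the $\vec{p}_2$ step to be a multiple of $m$. If $n_0\vec{x}=\vec{p}_2+\vec{p}_3$, then $mn_0\vec{x}=\vec{p}_2+(\vec{p}_3+(m-1)n_0\vec{x})$, and $(m-1)n_0\vec{x}\in\vec{P}$ since it is a multiple of $m\vec{x}$. Hence $\vec{y}+r\vec{x}\in\vec{P}$ with $r=mn_0$. Since $r\vec{x}\in\vec{P}$ as well, periodicity gives $\vec{y}+kr\vec{x}\in\vec{P}$ for all $k\geq1$.

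The main obstacle is the first step, showing that $\vec{x}\in\vec{q}+\vec{Q}$ is interior, together with the appeal to \cref{lem:intPQ} to transfer interiority from $\vec{Q}$ to $\vec{P}$. Once interiority in $\vec{P}$ is available, the rest is bookkeeping with periodicity. If \cref{lem:intPQ} is not available, I would fall back on the characterization in \cref{lem:intcara}, using that $\vec{P}$ and $\vec{Q}$ span the same vector space (shown in the proof of \cref{lem:spacelin}).
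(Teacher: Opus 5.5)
Your argument follows the paper's route. You show that $\vec{x}$ is interior to $\vec{Q}$, transfer this to $\vec{P}$ via \cref{lem:intPQ}, write $\vec{y}=\vec{p}_1-\vec{p}_2$, and apply interiority of $\vec{x}$ against $\vec{p}_2$. Your direct check that $\vec{x}$ is interior to $\vec{Q}$ uses only the definition and the periodicity of $\vec{Q}$. It is correct, and slightly more elementary than the paper's appeal to \cref{lem:intcara}.

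The final step, however, rests on a false justification. $(m-1)n_0\vec{x}$ is an integer multiple of $m\vec{x}$ only when $m$ divides $n_0$, so ``it is a multiple of $m\vec{x}$'' does not place it in $\vec{P}$. The claim is still true, for a reason you already have: $n_0\vec{x}=\vec{p}_2+\vec{p}_3$ with $\vec{p}_2,\vec{p}_3\in\vec{P}$, hence $n_0\vec{x}\in\vec{P}$. This also means the detour through $m$ was never needed, since you already had $n_0\vec{x}\in\vec{P}$. With $r=n_0$ you get $\vec{y}+kn_0\vec{x}=(\vec{p}_1+\vec{p}_3)+(k-1)n_0\vec{x}\in\vec{P}$ for every $k\geq 1$.

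The paper first replaces $\vec{x}$ by a multiple lying in $\vec{P}$. This is harmless but also unnecessary, because any multiple of $\vec{x}$ landing in $\vec{p}+\vec{P}$ with $\vec{p}\in\vec{P}$ is already in $\vec{P}$.
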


\section{Forward Semilinear Periodic Inductive Invariants}\label{sec:algorithm}
We are now equipped with the necessary ingredients to prove the main result of the paper,
namely \cref{thm:invper}.
Our proof relies on a construction of the desired semilinear periodic inductive invariant $\vec{I}$
by means of an algorithm with oracle calls.
This algorithm,
dubbed $\mathtt{InductiveInvariant}$,
is defined in \cref{algo:inductive-invariant}.
We don't address whether the oracle calls can be effectively implemented, so the algorithm should be understood only as a convenient presentation of the existence proof.
The oracle call at line~\ref{l:oracle} takes as input a semilinear set $\vec{I}$ (stored in the variable $\mathtt{Inv}$) and returns a linearization of the almost semilinear set $\poststar{\vec A}{\vec I} \setminus \vec I$.
Recall that $\poststar{\vec A}{\vec I} \setminus \vec I$ is almost semilinear by \cref{lem:reachalmost}.
A second oracle call at line~\ref{l:choose_n} selects some vectors $\vec{q}_1, \ldots, \vec{q}_k \in \setN^d$ satisfying a suitable condition.
The existence of these vectors is established in \cref{cor:n-existence}.

\begin{algorithm}[t]
  \caption{$\mathtt{InductiveInvariant}(\vec A, \vec{C}_0, \vec{C}_{\text{bad}})$}
  \label{algo:inductive-invariant}
  \begin{algorithmic}[1]
    \Require
    A periodic $d$-VAS $(\vec A, \vec{C}_0)$ and a semilinear set $\vec{C}_{\text{bad}} \subseteq \setN^d \setminus \poststar{\vec A}{\vec{C}_0}$.
    \Ensure
    A semilinear periodic inductive invariant of $(\vec{A}, \vec{C}_0)$ that is disjoint from $\vec{C}_{bad}$.
    \State $\mathtt{Inv} \gets \per{\vec{C}_0}$
    \While{$\poststar{\vec A}{\mathtt{Inv}} \not\subseteq \mathtt{Inv}$}
    \label{while:start}
    \LComment{%
      The existence of $\{\vec{L}_1, \ldots, \vec{L}_k\}$ at line \ref{l:oracle} is ensured by \cref{lem:reachalmost}
    }
    \State Let $\{\vec{L}_1, \ldots, \vec{L}_k\}$ be a linearization of $\poststar{\vec A}{\mathtt{Inv}} \setminus \mathtt{Inv}$
    \label{l:oracle}
    \LComment{%
      The existence of $\vec{q}_1, \ldots, \vec{q}_k$ at line~\ref{l:choose_n} is ensured by \cref{cor:n-existence}
    }
    \State Let $\vec{q}_1, \ldots, \vec{q}_k \in \setN^d$ such that
    $\vec{L}'_i \coloneqq (\vec{L}_i + \vec{q}_i) \subseteq \vec{L}_i$ for each $i \in \{1, \ldots, k\}$ and
    \makebox[5mm]{}%
    $\poststar{\vec A}{\per{\mathtt{Inv} \cup \vec{L}'_1 \cup \cdots \cup \vec{L}'_k}} \subseteq (\mathtt{Inv} \cup \vec{L}_1 \cup \cdots \cup \vec{L}_k) \setminus \vec{C}_{\text{bad}}$
    \label{l:choose_n}
    \State $\mathtt{Inv} \gets \per{\mathtt{Inv} \cup \vec{L}'_1 \cup \cdots \cup \vec{L}'_k}$
    \EndWhile
    \label{while:end}
    \State \Return $\mathtt{Inv}$
  \end{algorithmic}
\end{algorithm}

\smallskip

Termination and correctness of this algorithm,
under the assumption that we can always find $\vec{q}_1, \ldots, \vec{q}_k$ at line \ref{l:choose_n},
come from these two lemmas.
We assume for the remainder of this section that
$(\vec A, \vec{C}_0)$ is a periodic $d$-VAS and that $\vec{C}_{\text{bad}} \subseteq \setN^d$ is a semilinear set that is disjoint from $\poststar{\vec A}{\vec{C}_0}$.
Consider an execution of $\mathtt{InductiveInvariant}(\vec A, \vec{C}_0, \vec{C}_{\text{bad}})$.
As an immediate consequence of the two following claims,
we get that the execution terminates and returns
a semilinear periodic inductive invariant of $(\vec{A}, \vec{C}_0)$ that is disjoint from $\vec{C}_{bad}$.

\begin{claim}[Invariant] \label{cl:invariant}
  At the beginning and end of each iteration of the \textbf{while}-loop, $\mathtt{Inv}$ is a semilinear periodic set and $\poststar{\vec A}{\mathtt{Inv}}$ is disjoint from $\vec{C}_{\text{bad}}$.
\end{claim}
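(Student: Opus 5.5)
We argue by induction on the number of completed iterations of the \textbf{while}-loop. It suffices to show two things: the property holds when the loop is first entered, and it is preserved by each assignment to $\mathtt{Inv}$ at the end of an iteration. The existence of $\vec{q}_1,\ldots,\vec{q}_k$ at line~\ref{l:choose_n} is taken for granted here; it is provided by \cref{cor:n-existence}. The beginning of each iteration coincides with either the initialization or the end of the previous iteration, since $\mathtt{Inv}$ is only modified at the last line of the loop body. Hence these two checks cover every point mentioned in the claim.

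\emph{Base case.} Initially $\mathtt{Inv}=\per{\vec{C}_0}$. This set is periodic by definition of $\per{\cdot}$. It is semilinear because $\vec{C}_0$ is semilinear, and, as recalled in the preliminaries, the periodic set spanned by a semilinear set is semilinear. The assumption on $(\vec{A},\vec{C}_0)$ is that it is a periodic VAS, so \cref{lem:periodic-vas} ((i)$\Rightarrow$(ii)) gives $\poststar{\vec A}{\per{\vec{C}_0}}=\poststar{\vec A}{\vec{C}_0}$. The latter is disjoint from $\vec{C}_{\text{bad}}$ by the standing hypothesis.

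\emph{Inductive step.} Suppose that at the start of an iteration $\mathtt{Inv}=\vec{I}$ is a semilinear periodic set. Let $\vec{L}_1,\ldots,\vec{L}_k$ and $\vec{q}_1,\ldots,\vec{q}_k$ be as chosen at lines~\ref{l:oracle} and~\ref{l:choose_n}. The new value is $\vec{I}'=\per{\vec{I}\cup\vec{L}'_1\cup\cdots\cup\vec{L}'_k}$, which is periodic by construction.

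For semilinearity, each $\vec{L}_i$ is a linear set, since a linearization consists of linear sets \cite[Lemma 5.1]{DBLP:conf/popl/Leroux11}. If $\vec{L}_i=\vec{b}_i+\per{\vec{G}_i}$, then $\vec{L}'_i=(\vec{b}_i+\vec{q}_i)+\per{\vec{G}_i}$ is again linear. Hence $\vec{I}\cup\vec{L}'_1\cup\cdots\cup\vec{L}'_k$ is semilinear, as a finite union of semilinear sets. Its periodic closure is then semilinear by the same closure fact used in the base case.

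For disjointness, the condition imposed at line~\ref{l:choose_n} directly gives $\poststar{\vec A}{\vec{I}'}\subseteq(\vec{I}\cup\vec{L}_1\cup\cdots\cup\vec{L}_k)\setminus\vec{C}_{\text{bad}}$. This set is disjoint from $\vec{C}_{\text{bad}}$, as required.

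Notice that the induction hypothesis on disjointness is not even needed in the step, since line~\ref{l:choose_n} re-establishes disjointness outright. All the difficulty is therefore deferred to \cref{cor:n-existence}, which guarantees that suitable $\vec{q}_i$ exist. Within this claim itself there is no real obstacle. The only points to be careful about are that translating a linear set by $\vec{q}_i$ keeps it linear, and that $\per{\cdot}$ preserves semilinearity; both follow from the preliminaries.
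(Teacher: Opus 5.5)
Your argument is correct, and it is the routine verification the paper leaves implicit, since the paper gives no separate proof of this claim: the base case uses \cref{lem:periodic-vas} together with closure of semilinear sets under $\per{\cdot}$, and the step uses linearity of the $\vec{L}'_i$ and the condition at line~\ref{l:choose_n}. One nuance: the disjointness hypothesis at the start of an iteration is unused in your step, but it is exactly what \cref{cor:n-existence} needs to produce the $\vec{q}_i$, so the claim and the corollary are really established together by a single induction on iterations.
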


\begin{claim}[Termination] \label{cl:termination}
  Each iteration of the \textbf{while}-loop strictly decreases the space-dimension of $\poststar{\vec A}{\mathtt{Inv}} \setminus \mathtt{Inv}$.
\end{claim}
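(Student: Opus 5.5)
We have to prove Claim~\ref{cl:termination}: the space-dimension of $\poststar{\vec A}{\mathtt{Inv}} \setminus \mathtt{Inv}$ strictly decreases in each iteration. Write $\vec I$ for the value of $\mathtt{Inv}$ at the beginning of the iteration and $\vec I' \coloneqq \per{\vec I \cup \vec L'_1 \cup \cdots \cup \vec L'_k}$ for its value at the end. Let $\vec D \coloneqq \poststar{\vec A}{\vec I}\setminus \vec I$ and $\vec D' \coloneqq \poststar{\vec A}{\vec I'}\setminus \vec I'$. The aim is to show $\sdim{\vec D'} < \sdim{\vec D}$. Since the loop condition holds, $\vec D$ is non-empty, so $\sdim{\vec D}\geq 0$.

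First we bound $\sdim{\vec D}$ from below by the linearization. By line~\ref{l:oracle}, $\vec D=\vec M_1\cup\cdots\cup\vec M_k$ with almost linear sets $\vec M_i$ and $\vec L_i=\lin{\vec M_i}$. Lemma~\ref{lem:spacelin} gives $\sdim{\vec L_i}=\sdim{\vec M_i}$. Since space-dimension of a union is the maximum, we get $\max_i \sdim{\vec L_i} = \sdim{\vec D}$.

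Next we locate $\vec D'$ inside small sets. Line~\ref{l:choose_n} gives
$\poststar{\vec A}{\vec I'} \subseteq \vec I \cup \vec L_1\cup\cdots\cup\vec L_k$. The set $\vec I'$ contains $\vec I$ and each $\vec L'_i$. Removing these from the right-hand side gives
\[
\vec D' \subseteq \bigcup_{i=1}^k (\vec L_i\setminus \vec L'_i).
\]
Each $\vec L_i$ is a linear set, and line~\ref{l:choose_n} guarantees $\vec L'_i=\vec L_i+\vec q_i\subseteq \vec L_i$. So Lemma~\ref{lem:spacediff:linear} yields $\sdim{\vec L_i\setminus\vec L'_i}<\sdim{\vec L_i}\leq \sdim{\vec D}$.

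We conclude by monotonicity of $\sdim{\cdot}$ under inclusion and the fact that the space-dimension of a finite union is the maximum: $\sdim{\vec D'}\leq \max_i \sdim{\vec L_i\setminus \vec L'_i} < \sdim{\vec D}$. For $k\geq 1$ the strict inequality is immediate, since the maximum is taken over terms each strictly below $\sdim{\vec D}$. We need $k\geq1$, which holds because $\vec D\neq\emptyset$. The edge case where $\vec D'$ is empty gives $-1$, which is still strictly smaller.

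There is no serious obstacle here. The only subtle point is remembering that $\vec D'$ removes \emph{all} of $\vec I'$, and in particular the translated pieces $\vec L'_i$; this is exactly what confines $\vec D'$ to the differences $\vec L_i\setminus\vec L'_i$. Termination then follows because the space-dimension lies in the finite range $\{-1,\ldots,d\}$, and the loop exits once it reaches $-1$, i.e.\ once $\poststar{\vec A}{\mathtt{Inv}}\subseteq\mathtt{Inv}$.
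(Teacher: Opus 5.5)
Your proof is correct and follows the paper's argument step for step. You identify $\sdim{\vec D}$ with $\max_i\sdim{\vec L_i}$ via Lemma~\ref{lem:spacelin}, confine $\vec D'$ to $\bigcup_i(\vec L_i\setminus\vec L'_i)$ using the condition at line~\ref{l:choose_n} together with $\vec I'\supseteq\vec I\cup\bigcup_i\vec L'_i$, and conclude with Lemma~\ref{lem:spacediff:linear}; your remark that $k\geq 1$ (because $\vec D\neq\emptyset$) is a detail the paper leaves implicit, and it is what makes the final strict inequality valid.
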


\begin{proof}
  Consider an iteration of the \textbf{while}-loop.
  Let $\vec{I}$ be the value of $\mathtt{Inv}$ at the beginning (line~\ref{while:start}) and
  let $\vec{J}$ be its value at the end (line~\ref{while:end}).
  We need to prove that $\sdim{\poststar{\vec A}{\vec{J}} \setminus \vec{J}} < \sdim{\poststar{\vec A}{\vec{I}} \setminus \vec{I}}$. 

  \smallskip

  According to the body of the \textbf{while}-loop,
  there exist
  a linearization $\{\vec{L}_1, \ldots, \vec{L}_k\}$ of $\poststar{\vec A}{\vec{I}} \setminus \vec{I}$ and
  $\vec{q}_1, \ldots, \vec{q}_k \in \setN^d$ with
  $(\vec{L}_i + \vec{q}_i) \subseteq \vec{L}_i$ for each $i \in \{1, \ldots, k\}$ such that,
  firstly,
  $\vec{J} = \per{\vec{I} \cup \vec{L}'_1 \cup \cdots \cup \vec{L}'_k}$ where
  $\vec{L}'_i \coloneqq (\vec{L}_i + \vec{q}_i)$ for each $i$, and secondly,
  $\poststar{\vec A}{\vec{J}} \subseteq (\vec{I} \cup \vec{L}_1 \cup \cdots \cup \vec{L}_k)$.
  By definition of linearizations of almost semilinear sets,
  there exist almost linear sets $\vec{H}_1, \ldots, \vec{H}_k$ such that
  $\poststar{\vec A}{\vec{I}} \setminus \vec{I} = \bigcup_{i=1}^k \vec{H}_i$ and
  $\vec{L}_i = \lin{\vec{H}_i}$ for each $i \in \{1, \ldots, k\}$.
  We derive from \cref{lem:spacelin} that
  $\sdim{\poststar{\vec A}{\vec{I}} \setminus \vec{I}} = \max_{i} \sdim{\vec{H}_i} = \max_{i} \sdim{\vec{L}_i}$.
  Let us now upper-bound $\sdim{\poststar{\vec A}{\vec{J}} \setminus \vec{J}}$.
  Observe that $\vec{J} \supseteq (\vec{I} \cup \vec{L}'_1 \cup \cdots \cup \vec{L}'_k)$ and
  recall that $\poststar{\vec A}{\vec{J}} \subseteq (\vec{I} \cup \vec{L}_1 \cup \cdots \cup \vec{L}_k)$.
  It follows that
  $\poststar{\vec A}{\vec{J}} \setminus \vec{J}$ is contained in
  $(\vec{L}_1 \setminus \vec{L}'_1) \cup \cdots \cup (\vec{L}_k \setminus \vec{L}'_k)$,
  hence,
  $\sdim{\poststar{\vec A}{\vec{J}} \setminus \vec{J}} \leq \max_{i} \sdim{\vec{L}_i \setminus \vec{L}'_i}$.
  By \cref{lem:spacediff:linear},
  $\sdim{\vec{L}_i \setminus \vec{L}'_i} < \sdim{\vec{L}_i}$
  for each $i \in \{1, \ldots, k\}$.
  We derive that
  $\sdim{\poststar{\vec A}{\vec{J}} \setminus \vec{J}} < \max_{i} \sdim{\vec{L}_i} = \sdim{\poststar{\vec A}{\vec{I}} \setminus \vec{I}}$.
\end{proof}

The remainder of this section is devoted to the proof that some $\vec{q}_1, \ldots, \vec{q}_k \in \setN^d$ satisfying the condition at line~\ref{l:choose_n} always exist (see \cref{cor:n-existence}).
We will need the following easy consequence of \cref{lem:amalgamation}.

\begin{lemma}
  \label{lem:amalgamation:iterated}
  For every runs $\rho \unlhd \rho'$ of a VAS $\vec{A}$ and for every $r \in \setN$,
  there exists a run $\tau$ of $\vec{A}$ such that
  $\src{\tau} = \src{\rho} + r(\src{\rho'} - \src{\rho})$, and
  $\tgt{\tau} = \tgt{\rho} + r(\tgt{\rho'} - \tgt{\rho})$.
\end{lemma}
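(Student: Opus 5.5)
We argue by induction on $r$, using the amalgamation property (\cref{lem:amalgamation}) at each step. For $r=0$ take $\tau \coloneqq \rho$, and for $r=1$ take $\tau \coloneqq \rho'$. The induction hypothesis we maintain is slightly stronger than the statement, because the inductive step must be able to apply amalgamation again. For every $r\ge 1$ we require a run $\tau_r$ with $\rho \unlhd \tau_r$ and
$$\src{\tau_r}=\src{\rho}+r(\src{\rho'}-\src{\rho}),\qquad \tgt{\tau_r}=\tgt{\rho}+r(\tgt{\rho'}-\tgt{\rho}).$$
The base case $\tau_1 \coloneqq \rho'$ satisfies $\rho\unlhd\tau_1$ by hypothesis.

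For the inductive step, assume $\tau_r$ is given with $\rho\unlhd\tau_r$. Apply \cref{lem:amalgamation} to $\rho\unlhd\rho'$ and $\rho\unlhd\tau_r$, taking $\rho_1 \coloneqq \rho'$ and $\rho_2 \coloneqq \tau_r$. This yields a run $\tau_{r+1}$ with $\rho',\tau_r\unlhd\tau_{r+1}$ and
$$\src{\tau_{r+1}}=\src{\rho'}+\src{\tau_r}-\src{\rho}=\src{\rho}+(r+1)(\src{\rho'}-\src{\rho}),$$
and the analogous identity holds for targets. Since $\rho\unlhd\rho'\unlhd\tau_{r+1}$ and $\unlhd$ is a quasi-order (a well-quasi-order, as recalled in \cref{subsec:wqo}), transitivity gives $\rho\unlhd\tau_{r+1}$. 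This closes the induction.

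The only point that needs care is keeping the invariant $\rho\unlhd\tau_r$, so that amalgamation can be reapplied at the next step; transitivity of $\unlhd$ supplies it. Alternatively, one can avoid transitivity by noting that $\tau_r\unlhd\tau_{r+1}$ and relying on the amalgamation output directly. Either way, the arithmetic on sources and targets is immediate, so no real obstacle remains.
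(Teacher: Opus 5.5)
Your proof is correct and matches what the paper intends: the paper gives no separate proof and only calls the lemma an easy consequence of \cref{lem:amalgamation}. Your induction on $r$, which amalgamates $\rho \unlhd \rho'$ with $\rho \unlhd \tau_r$ and uses transitivity of the quasi-order $\unlhd$ to keep $\rho \unlhd \tau_{r+1}$, is the natural way to fill this in (your transitivity-free aside also works once made precise, by amalgamating $\tau_r \unlhd \tau_{r+1}$ with $\rho_1 = \rho_2 = \tau_{r+1}$, which gives $\tau_{r+2}$ with $\tau_{r+1} \unlhd \tau_{r+2}$, $\src{\tau_{r+2}} = 2\src{\tau_{r+1}} - \src{\tau_r}$, and likewise for targets).
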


\begin{lemma} \label{lem:n-existence}
  Let $\vec{I} \subseteq \setN^d$ be a semilinear periodic set and
  consider a linearization $\{\vec{L}_1, \ldots, \vec{L}_k\}$ of the almost semilinear set $\poststar{\vec A}{\vec{I}} \setminus \vec{I}$.
  Let $(\vec{b}_i, \vec{G}_i)$ be a linear-presentation of $\vec{L}_i$ for each $i \in \{1, \ldots, k\}$.
  For every semilinear set $\vec{T} \subseteq \setN^d$ such that $\poststar{\vec A}{\vec{I}} \subseteq \vec{T}$,
  there exists $n \in \setN$ such that
  $\poststar{\vec A}{\per{\vec{I} \cup \vec{L}'_1 \cup \cdots \cup \vec{L}'_k}} \subseteq \vec{T}$
  where
  $\vec{L}'_i \coloneqq (\vec{L}_i + \vec{q}_i)$ and $\vec{q}_i \coloneqq n \sum_{\vec g \in \vec{G}_i} \vec{g}$ for each $i \in \{1, \ldots, k\}$.
\end{lemma}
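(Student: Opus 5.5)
The plan is to argue by contradiction and combine the well-quasi-order $\unlhd$, the almost-full relations $\leq_{\vec S}$ of \cref{lem:semilinear-to-well}, the iterated amalgamation (\cref{lem:amalgamation:iterated}) and the push-in lemma (\cref{lem:pushin}). Write $\vec{g}_i\coloneqq\sum_{\vec g\in\vec{G}_i}\vec g$ and $\vec{J}_n\coloneqq\per{\vec{I}\cup\vec{L}'_1\cup\cdots\cup\vec{L}'_k}$ for the choice $\vec{q}_i=n\vec{g}_i$.

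\emph{Preliminary facts.} First, since $\vec{I}$ is periodic, the argument of \cref{lem:periodic-vas} ((ii)$\Rightarrow$(i)) shows that $\vec{P}\coloneqq\poststar{\vec A}{\vec I}$ is periodic. Second, $\vec{P}\subseteq\vec{T}$ by hypothesis. Third, write $\vec{H}_i=\vec{b}_i+\vec{P}_i$ for the almost linear sets with $\lin{\vec H_i}=\vec L_i=\vec b_i+\vec Q_i$ and $\vec Q_i=\per{\vec G_i}$. By \cref{lem:intcara}, $\vec g_i$ is an interior vector of $\vec Q_i$.

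\emph{Extracting a good pair of runs.} Suppose no $n$ works. Then for every $n$ there is a run $\rho_n$ from some $\vec s_n\in\vec J_n$ to some $\vec t_n\in\setN^d\setminus\vec T$. Since $\vec{J}_n$ decreases in $n$ (because $n\vec g_i\in\vec Q_i$), we may use arbitrarily large $n$. Decompose each source as
\[
\vec s_n=\vec x_n+\sum_{i=1}^k\Big(c_{i,n}(\vec b_i+n\vec g_i)+\sum_{\vec g\in\vec G_i}\lambda_{\vec g,n}\vec g\Big),
\]
with $\vec x_n\in\vec I$ and $c_{i,n},\lambda_{\vec g,n}\in\setN$. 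Here $c_{i,n}$ counts the summands taken from $\vec L'_i$. Record the coefficient vector $(c_{i,n},\lambda_{\vec g,n})_{i,\vec g}$ together with a parametrisation of $\vec x_n$ in a linear component of $\vec I$. Using Ramsey, Dickson, the well-quasi-order $\unlhd$ and $\leq_{\setN^d\setminus\vec T}$ (almost-full, since the complement of $\vec T$ is semilinear), extract $m<n$ such that:
\begin{itemize}
\item $\rho_m\unlhd\rho_n$;
\item all coefficients are componentwise non-decreasing from $m$ to $n$, and $\vec x_m\leq_{\vec I}\vec x_n$;
\item $\vec t_m\leq_{\setN^d\setminus\vec T}\vec t_n$;
\item each $c_{i,\cdot}$ is either constantly $0$ along the subsequence or positive at $m$.
\end{itemize}

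\emph{Amalgamation and push-in.} By \cref{lem:amalgamation:iterated}, for every $r$ there is a run $\tau_r$ from $\vec s_m+r(\vec s_n-\vec s_m)$ to $\vec t_m+r(\vec t_n-\vec t_m)$. The target lies outside $\vec T$ for all $r$, by the choice of $\leq_{\setN^d\setminus\vec T}$. Group the source as $\vec x_m+r(\vec x_n-\vec x_m)$, which lies in $\vec I\subseteq\vec P$, plus, for each $i$ and each of the $c_{i,m}$ summands, a vector of the form
\[
\vec b_i+\vec y+r\big((n-m)\vec g_i+\vec z\big),
\]
with $\vec y,\vec z\in\vec Q_i$ and $\vec y\in n\vec g_i+\vec Q_i$. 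Surplus summands coming from $c_{i,n}-c_{i,m}$ can be absorbed into one of the existing summands of the same $i$, which exists by the last item of the extraction. Since $(n-m)\vec g_i+\vec z\in\vec g_i+\vec Q_i$ when $n>m$, \cref{lem:pushin} (applied to $\vec P_i$) gives some $r_i>0$ with $\vec y+r_i\setN_{>0}(\ldots)\subseteq\vec P_i$. Taking $r$ a common multiple of the finitely many $r_i$ puts each grouped summand in $\vec H_i\subseteq\vec P$. Periodicity of $\vec P$ then places the source of $\tau_r$ in $\vec P$, so its target lies in $\vec P\subseteq\vec T$, a contradiction.

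\emph{Main obstacle.} I expect the hardest part to be the bookkeeping in this last step. Sources of runs from $\vec J_n$ are sums of an unbounded number of elements of the $\vec L'_i$, while \cref{lem:pushin} acts on one element at a time. Each pushed summand must take the precise form $\vec y+r\vec x$ with $\vec x\in\vec g_i+\vec Q_i$. This requires the shift $n\vec g_i$ to make the differences strictly "interior", which is where $n>m$ and the choice $\vec q_i=n\vec g_i$ are used. If the uniform grouping fails, I would first try to collapse all $c_{i,n}$ summands of index $i$ into a single element of $c\vec b_i+\vec Q_i$ and apply \cref{lem:pushin} to the periodic closure instead.
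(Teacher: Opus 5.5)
Your skeleton is the paper's: contradiction, decomposition of $\vec s_n$, and extraction of $m<n$ via $\unlhd$, $\leq_{\vec I}$, $\leq_{\setN^d\setminus\vec T}$ and Dickson. It continues with iterated amalgamation, \cref{lem:pushin} applied with $(n-m)\vec g_i+\vec z\in\vec g_i+\vec Q_i$, and periodicity of $\poststar{\vec A}{\vec I}$. The gap is in the bookkeeping step you flagged yourself.

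In $\vec s_m+r(\vec s_n-\vec s_m)$ the surplus contributes $r(c_{i,n}-c_{i,m})(\vec b_i+n\vec g_i)$. This cannot be absorbed into a pushed summand $\vec b_i+\vec y+r\vec x$. The vector fed to \cref{lem:pushin} would then have to contain a positive multiple of $\vec b_i$. Nothing puts $\vec b_i$ in $\vec Q_i$, so that vector need not lie in $\vec Q_i$ or in $\vec g_i+\vec Q_i$. Concretely, take a finite component $\vec H_i=\{\vec b_i\}$, so $\vec P_i=\vec Q_i=\{\vec 0\}$. Any element of $\vec H_i$ is a single copy of $\vec b_i$, but your absorbed summand is $(1+r(c_{i,n}-c_{i,m}))\vec b_i$. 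Nothing in your extraction forces $c_{i,n}=c_{i,m}$, so the surplus is genuinely there.

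The missing observation is that the base itself is reachable. Since $\vec 0\in\vec P_i$, we get $\vec b_i\in\vec b_i+\vec P_i\subseteq\poststar{\vec A}{\vec I}$. Hence the $r(c_{i,n}-c_{i,m})$ surplus copies of $\vec b_i$ lie in $\poststar{\vec A}{\vec I}$ by periodicity. Only their $\vec Q_i$-parts $r(c_{i,n}-c_{i,m})n\vec g_i$ need to be folded into some $\vec z$, and with this your argument closes.

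Your fallback, collapsing all index-$i$ summands into one element of $c\vec b_i+\vec Q_i$, is in fact the paper's route, but it needs the same fact. The paper writes the index-$j$ part as $(\mu_{j,m}-1+r(\mu_{j,n}-\mu_{j,m}))\vec b_j+(\vec b_j+\vec y_j+r\vec x_j)$. It discharges the first term via $\vec b_j\in\poststar{\vec A}{\vec I}$ and pushes only the second term into $\vec b_j+\vec P_j$, so \cref{lem:pushin} is invoked once per index rather than once per summand.

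A minor slip: the summands of $\vec s_m$ give $\vec y\in m\vec g_i+\vec Q_i$, not $n\vec g_i+\vec Q_i$. This is harmless, since only $\vec y\in\vec Q_i$ is needed.
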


\begin{proof}
  Consider a semilinear set $\vec{T} \subseteq \setN^d$ such that $\poststar{\vec A}{\vec{I}} \subseteq \vec{T}$.
  For each $n \in \setN$,
  define $\vec{S}_n \coloneqq \per{\vec{I} \cup \vec{L}'_{1,n} \cup \cdots \cup \vec{L}'_{k,n}}$ where
  $\vec{L}'_{i,n} \coloneqq (\vec{L}_i + \vec{q}_{i,n})$ and $\vec{q}_{i,n} \coloneqq n \sum_{\vec g \in \vec{G}_i} \vec{g}$ for each $i \in \{1, \ldots, k\}$.
  Suppose by contradiction that
  $\poststar{\vec A}{\vec{S}_n} \not\subseteq \vec{T}$ for every $n \in \setN$.

  \smallskip

  Define $\vec{Q}_i \coloneqq \per{\vec{G}_i}$ and
  note that $\vec{L}_i = \vec{b}_i + \vec{Q}_i$ for each $i \in \{1, \ldots, k\}$.
  Since $\{\vec{L}_1, \ldots, \vec{L}_k\}$ is a linearization of $\poststar{\vec A}{\vec{I}} \setminus \vec{I}$,
  there exist periodic sets $\vec{P}_1, \ldots, \vec{P}_k \subseteq \setN^d$ such that
  $\poststar{\vec A}{\vec{I}} \setminus \vec{I} = \bigcup_{i=1}^k \vec{b}_i + \vec{P}_i$ and
  $\vec{Q}_i = (\vec{P}_i - \vec{P}_i) \cap \overline{\setQ_{\geq 0}\vec{P}_i}$ for each $i \in \{1, \ldots, k\}$.

  \smallskip

  By hypothesis, for every $n \in \setN$,
  there is a run $\rho_n$ from some $\vec{s}_n \in \vec{S}_n$ to some $\vec{t}_n \in (\setN^d \setminus \vec{T})$.
  Moreover,
  according to the definition of $\vec{S}_n$,
  we can write each $\vec{s}_n$ under the form
  $\vec{s}_n = \vec{i}_n + \sum_{j \in J_n} \left( \mu_{j,n} (\vec{b}_j + \vec{q}_{j,n}) + \vec{u}_{j,n} \right)$
  for
  some $\vec{i}_n \in \per{\vec{I}} = \vec{I}$,
  some subset $J_n \subseteq \{1, \ldots, k\}$ and,
  for each $j \in J_n$,
  some $\mu_{j,n} \in \setN$ with $\mu_{j,n} > 0$ and
  some $\vec{u}_{j,n} \in \per{\vec{G}_j}$.
  Extracting a subsequence if necessary,
  we may assume that $J_m = J_n$ for all $m, n \in \setN$,
  and we let $J$ denote their common value.

  \smallskip

  As $\vec{I}$ and $\setN^d \setminus \vec{T}$ are semilinear,
  the relations $\leq_{\vec{I}}$ and $\leq_{\setN^d \setminus \vec{T}}$ are almost-full
  (see \cref{lem:semilinear-to-well}).
  The usual order $\leq$ on $\setN$ is a well-quasi-order,
  and so is the relation $\unlhd$ on runs of $\vec{A}$
  (see \cref{subsec:wqo}).
  It is also well-known that for every finite subset $\vec{G} \subseteq \setN^d$,
  the binary relation $\preceq_{\vec{G}}$ on $\per{\vec{G}}$ defined by
  $\vec{u} \preceq_{\vec{G}} \vec{v}$ if $(\vec{v} - \vec{u}) \in \per{\vec{G}}$,
  is a well-quasi-order.
  So we can find an increasing pair $m, n \in \setN$ with $m < n$ such that
  $\vec i_m \leq_{\vec{I}} \vec i_n$,
  $\vec t_m \leq_{\setN^d \setminus \vec{T}} \vec t_n$,
  $\mu_{j,m} \leq \mu_{j,n}$,
  $\rho_m \trianglelefteq \rho_n$ and
  $\vec{u}_{j,m} \preceq_{\vec{G}_j} \vec{u}_{j,n}$ for each $j \in J$.
  Let us define,
  $\vec{y}_j \coloneqq \mu_{j,m} \vec{q}_{j,m} + \vec{u}_{j,m}$ and
  $\vec{x}_j \coloneqq \mu_{j,n} \vec{q}_{j,n} - \mu_{j,m} \vec{q}_{j,m} + \vec{u}_{j,n} - \vec{u}_{j,m}$,
  for each $j \in J$.
  Observe that $\vec{y}_j \in \vec{Q}_j$ and
  $\vec{x}_j \in (\nu \sum_{\vec{g} \in \vec{G}_j} \vec{g}) + \vec{Q}_j$,
  where $\nu \coloneqq n\mu_{j,n} - m\mu_{j,m} > 0$.
  We derive from \cref{lem:intcara,lem:pushin} that
  there exists $r_j \in \setN_{> 0}$ such that
  $\vec{y}_j + r_j \setN_{> 0} \vec{x}_j \subseteq \vec{P}_j$.
  By defining $r \coloneqq \prod_{j \in J} r_j$,
  we get that $(\vec{y}_j + r \vec{x}_j) \in \vec{P}_j$ for each $j \in J$.

  \smallskip

  Recall that $\rho_m \trianglelefteq \rho_n$.
  By \cref{lem:amalgamation:iterated},
  there exists a run $\tau$ of $\vec{A}$ such that
  $\src{\tau} = \vec{s}_m + r(\vec{s}_n - \vec{s}_m)$, and
  $\tgt{\tau} = \vec{t}_m + r(\vec{t}_n - \vec{t}_m)$.
  Observe that $\tgt{\tau} \in (\setN^d \setminus \vec{T})$ since
  $\vec{t}_m, \vec{t}_n \in (\setN^d \setminus \vec{T})$ and
  $\vec t_m \leq_{\setN^d \setminus \vec{T}} \vec t_n$.
  So to get a contradiction,
  there only remains to show that $\src{\tau}$ is in $\poststar{\vec A}{\vec{I}}$.
  Indeed, if this is the case, by prepending to $\tau$ a run from $\vec{I}$ to $\src{\tau}$,
  we get a run from $\vec{I}$ to $\setN^d \setminus \vec{T}$,
  contradicting the assumption that
  $\poststar{\vec A}{\vec{I}} \subseteq \vec{T}$.

  \smallskip

  We now show that $\src{\tau}$ is in $\poststar{\vec A}{\vec{I}}$.
  Notice that $\poststar{\vec A}{\vec{I}}$ is periodic since $\vec{I}$ is periodic
  (see \cref{lem:periodic-vas}).
  So it is enough to express $\src{\tau}$ as a sum of elements of $\poststar{\vec A}{\vec{I}}$.
  We have
  $\vec{s}_m = \vec{i}_m + \sum_{j \in J} \left( \mu_{j,m} \vec{b}_j + \vec{y}_j \right)$
  and
  $\vec{s}_n - \vec{s}_m = \vec{i}_n - \vec{i}_m + \sum_{j \in J} (\mu_{j,n} - \mu_{j,m})\vec{b}_j + \sum_{j \in J} \vec{x}_j$.
  It follows that
  $$
  \src{\tau} =
  \vec{i}_m + r(\vec{i}_n - \vec{i}_m) +
  \sum_{j \in J} \left(\mu_{j,m} - 1 + r(\mu_{j,n} - \mu_{j,m}) \right) \vec{b}_j +
  \sum_{j \in J} \left( \vec{b}_j + \vec{y}_j + r \vec{x}_j \right)
  \ .
  $$
  The term $\vec{i}_m + r(\vec{i}_n - \vec{i}_m)$ belongs to $\vec{I} \subseteq \poststar{\vec A}{\vec{I}}$
  since $\vec{i}_m, \vec{i}_n \in \vec{I}$ and
  $\vec i_m \leq_{\vec{I}} \vec i_n$.
  For each $j \in J$,
  we have
  $\mu_{j,m} \geq 1$, $\mu_{j,n} \geq \mu_{j,m}$ and $\vec{b}_j \in \poststar{\vec A}{\vec{I}}$,
  hence,
  by periodicity,
  $\left(\mu_{j,m} - 1 + r(\mu_{j,n} - \mu_{j,m}) \right) \vec{b}_j$ is in $\poststar{\vec A}{\vec{I}}$.
  Finally,
  the integer $r$ was chosen so that
  $(\vec{y}_j + r \vec{x}_j) \in \vec{P}_j$ for each $j \in J$,
  hence,
  $\vec{b}_j + \vec{y}_j + r \vec{x}_j$ is in $\vec{b}_j + \vec{P}_j \subseteq \poststar{\vec A}{\vec{I}}$.
  We have written $\src{\tau}$ as a sum of elements of $\poststar{\vec A}{\vec{I}}$,
  which completes the proof.
\end{proof}

\begin{corollary} \label{cor:n-existence}
  In \cref{algo:inductive-invariant}, one can always find $\vec{q}_1, \ldots, \vec{q}_k$ at line~\ref{l:choose_n}.
\end{corollary}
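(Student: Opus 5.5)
The plan is to apply \cref{lem:n-existence} with a suitable semilinear set $\vec{T}$ and then check the side condition on the $\vec{q}_i$. Consider an iteration of the \textbf{while}-loop, and let $\vec{I}$ be the current value of $\mathtt{Inv}$. By \cref{cl:invariant}, $\vec{I}$ is a semilinear periodic set and $\poststar{\vec A}{\vec{I}}$ is disjoint from $\vec{C}_{\text{bad}}$. (The two claims are proved jointly by induction on iterations, using this corollary at each step, so there is no circularity.) Let $\{\vec{L}_1,\ldots,\vec{L}_k\}$ be the linearization chosen at line~\ref{l:oracle}, with linear-presentations $(\vec{b}_i,\vec{G}_i)$.

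I would take
$$\vec{T} \coloneqq (\vec{I} \cup \vec{L}_1 \cup \cdots \cup \vec{L}_k) \setminus \vec{C}_{\text{bad}}.$$
This set is semilinear because semilinear sets are closed under union and difference, and the $\vec{L}_i$ are linear. To see that $\poststar{\vec A}{\vec{I}} \subseteq \vec{T}$, write $\poststar{\vec A}{\vec{I}} = \vec{I} \cup (\poststar{\vec A}{\vec{I}}\setminus\vec{I})$. The second part is $\bigcup_i \vec{H}_i$ for almost linear sets $\vec{H}_i$ with $\vec{L}_i = \lin{\vec{H}_i}$. Each $\vec{H}_i = \vec{b}_i + \vec{P}_i$ is contained in $\vec{b}_i + \vec{Q}_i = \vec{L}_i$, since $\vec{P}_i \subseteq (\vec{P}_i-\vec{P}_i)\cap\overline{\setQ_{\geq 0}\vec{P}_i}$. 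Disjointness from $\vec{C}_{\text{bad}}$ then comes from the invariant.

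\Cref{lem:n-existence} now yields some $n$ such that, with $\vec{q}_i \coloneqq n\sum_{\vec g\in\vec{G}_i}\vec g$, we have $\poststar{\vec A}{\per{\vec{I}\cup\vec{L}'_1\cup\cdots\cup\vec{L}'_k}} \subseteq \vec{T}$. This is exactly the inclusion required at line~\ref{l:choose_n}. It remains to check that $\vec{L}_i + \vec{q}_i \subseteq \vec{L}_i$, which holds because $\vec{q}_i \in \per{\vec{G}_i}$ (see the remark after \cref{lem:spacediff:linear}).

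The step needing the most care is the inclusion $\vec{H}_i \subseteq \lin{\vec{H}_i}$ together with the invariant bookkeeping. The hard combinatorics is already done inside \cref{lem:n-existence}, so I expect no real obstacle here.
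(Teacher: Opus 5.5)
Your proof is correct and follows the paper's argument: it also instantiates \cref{lem:n-existence} with $\vec{T} \coloneqq (\mathtt{Inv} \cup \vec{L}_1 \cup \cdots \cup \vec{L}_k) \setminus \vec{C}_{\text{bad}}$, and relies on \cref{cl:invariant} for semilinearity, periodicity and disjointness from $\vec{C}_{\text{bad}}$. The paper leaves implicit the checks you spell out, namely that $\vec{H}_i \subseteq \lin{\vec{H}_i}$ yields $\poststar{\vec A}{\mathtt{Inv}} \subseteq \vec{T}$, that $\vec{q}_i \in \per{\vec{G}_i}$ yields $\vec{L}_i + \vec{q}_i \subseteq \vec{L}_i$, and the joint induction with the invariant claim; all three are correct.
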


\begin{proof}
  By \cref{cl:invariant},
  the value of $\mathtt{Inv}$ at line~\ref{l:choose_n} is a semilinear periodic set and
  $\poststar{\vec A}{\mathtt{Inv}}$ is disjoint from $\vec{C}_{\text{bad}}$.
  The corollary follows from \cref{lem:n-existence} instantiated with the semilinear set
  $\vec{T} \coloneqq (\mathtt{Inv} \cup \vec{L}_1 \cup \cdots \cup \vec{L}_k) \setminus \vec{C}_{\text{bad}}$.
\end{proof}

\begin{remark} \label{rk:2variants}
  As already observed in Section~\ref{sub:plain2per}, semilinear inductive invariants for a plain VAS $(\vec{A},\vec{C}_0)$ and a set $\vec{C}_{\text{bad}}$
  can be derived from semilinear inductive invariants for a periodic VAS $(\vec{A}',\vec{C}_0')$ and a set $\vec{C}'_{\text{bad}}$ obtained by introducing an extra counter that is untouched by $\vec{A}'$. Instead of applying \cref{algo:inductive-invariant} to $(\vec{A}',\vec{C}_0')$ and $\vec{C}'_{\text{bad}}$, one may apply a variant of that algorithm obtained by just removing the operator $\operatorname{Per}$ each time it is used. In fact, notice that if $\vec{X}, \vec{Y}, \vec{Z}$ are subsets of $\setN^d$, and if $\vec{X}'\coloneqq\per{\vec{X}\times\{1\}}$, $\vec{Y}'\coloneqq\per{\vec{Y}\times\{1\}}$, and $\vec{Z}'\coloneqq\per{\vec{Z}\times\{1\}}$, then $\vec{Z}=\vec{X}\cup\vec{Y}$ if, and only if, $\vec{Z}'=\vec{X}'+\vec{Y}'$.
\end{remark}

\section{Conclusion}\label{sec:conclusion}
The recent approach of Leroux \cite{DBLP:conf/popl/Leroux11,Turing-100:Vector_Addition_Systems_Reachability}
to solve the VAS reachability problem is based on the fundamental property
(recalled in \cref{thm:inv}) that
every semilinear set $\vec{S}$ containing the reachability set of a VAS $(\vec{A}, \vec{C}_0)$ also contains a semilinear inductive invariant $\vec{I}$ for $(\vec{A}, \vec{C}_0)$.
Let us call this property the ``semilinear inductive invariant'' property.

In this paper,
we have shown that $\vec{I}$ can in addition be required to be periodic when the VAS is periodic. Moreover, our method builds invariants solely from the set $\vec{C}_0$ of initial configurations and avoids the need for backward reasoning. We believe that such a forward approach is more suitable than the back-and-forth approach for attacking the branching VAS reachability problem, which is still open.
\ifthenelse{\boolean{arXivLongVersion}}{%
We discuss potential extensions of the ``semilinear inductive invariant'' property in appendix (see \cref{sec:appliBGVAS}).
}{}

\bibliographystyle{plainurl}
\bibliography{biblio.bib}

\ifthenelse{\boolean{arXivLongVersion}}{%
\clearpage
\appendix

\section{Detailed presentation of the examples of Section~\ref{sec:example}} \label{app:example}

\subsection{Reminder of Definitions from \cref{sec:preliminaries}}

The only notions needed to read this appendix are recalled below.

The \emph{dimension} of a set $\vec X \subseteq \setQ^d$ is the smallest integer $d'$ such that $\vec X$ is included in a finite union of affine subspaces of dimension $d'$.

A \emph{periodic set} (of $\setN^d$) is a set that contains the zero vector and is closed under addition. For every $\vec X \subseteq \setN^d$, we denote by $\per{\vec X}$ the periodic set generated by $\vec X$, i.e., the set of all finite sums of elements of $\vec X$.

A \emph{semilinear} set is a finite union of \emph{linear} sets, i.e., sets of the form $\vec b + \per{\{\vec p_1,\ldots,\vec p_k\}}$ for some $k \in \setN$ and $\vec b, \vec p_1,\ldots,\vec p_k \in \setN^d$.  
Semilinear sets are closed under union, intersection, and complement.

A VAS of dimension $d$ consists of a finite set of vectors $\vec A \subseteq \setZ^d$ together with a semilinear set $\vec C_0 \subseteq \setN^d$ of initial configurations.

A \emph{run} (from $\vec c_0$ to $\vec c_\ell$) is a sequence $\vec c_0,\ldots,\vec c_\ell$ in $\setN^d$ such that for all $i \in [0,\ell-1]$, there exists $\vec a_i \in \vec A$ with $\vec c_{i+1} = \vec c_i + \vec a_i$.

For any $\vec C \subseteq \setN^d$, we write $\poststar{\vec A}{\vec C}$ for the set of targets of runs starting in $\vec C$, and $\prestar{\vec A}{\vec C}$ for the set of sources of runs ending in $\vec C$.

An \emph{inductive invariant} is a set $\vec I$ such that $\vec C_0 \subseteq \vec I$ and $\poststar{\vec A}{\vec I} = \vec I$.

\subsection{Intuitive Description of the Two Constructions}

Leroux proved that VAS reachability sets admit structured decompositions that can be approximated by semilinear sets. More precisely, for all semilinear sets $\vec S$ and $\vec T$, the sets $\poststar{\vec A}{\vec S} \cap \vec T$ and $\prestar{\vec A}{\vec S} \cap \vec T$ can be decomposed into finitely many components (called \emph{almost linear sets}) of the form $\vec b + \vec P$, where $\vec P \subseteq \setN^d$ is a periodic set satisfying suitable geometric properties.

Each such component admits a semilinear over-approximation, called its \emph{linearization}, defined as
$$
\lin{\vec P} \coloneqq (\vec P - \vec P) \cap \overline{\mathbb{Q}_+ \vec P}.
$$
A decomposition into such almost linear sets is called an \emph{almost semilinear decomposition}, and sets admitting such decompositions are called \emph{almost semilinear}. See \cref{subsec:almost-SL} for a formal definition of almost semilinear sets.

Using these decompositions and their semilinear approximations, Leroux showed that non-reachability in a VAS is always witnessed by a semilinear inductive invariant: for every semilinear set $\vec C_{\text{bad}}$ such that $\vec C_{\text{bad}} \cap \poststar{\vec A}{\vec C_0} = \emptyset$, there exists a semilinear inductive invariant $\vec I$ such that $\vec C_{\text{bad}} \cap \vec I = \emptyset$.

This existence proof naturally takes the form of an algorithm (albeit with calls to an undecidable oracle computing linearizations), and we therefore view it as a \emph{construction}. We refer to Leroux’s construction as \emph{back-and-forth}, and to our alternative as \emph{forward-only}.

\medskip

\noindent \textbf{Back-and-forth construction.} Let us first recall the back-and-forth construction. One iteratively grows two semilinear sets $\vec S$ and $\vec T$ such that there is no run from $\vec S$ to $\vec T$. Initially, $\vec S = \vec C_0$ and $\vec T = \vec C_{\text{bad}}$. At termination, the sets $\vec S$ and $\vec T$ are complementary, so $\vec S$ is an inductive invariant.

Each iteration proceeds in two symmetric steps:
$$
\vec T \gets \vec T \cup \setN^d \setminus \bigl(\vec S \cup \lin{\poststar{\vec A}{\vec S} \setminus \vec S}\bigr), 
\qquad
\vec S \gets \vec S \cup \setN^d \setminus \bigl(\vec T \cup \lin{\prestar{\vec A}{\vec T} \setminus \vec T}\bigr)
\footnote{It is not immediate without entering the geometric details, but it is crucial to add to $\vec T$ the complement of $\vec S \cup \lin{\poststar{\vec A}{\vec S} \setminus \vec S}$, rather than simply the complement of $\lin{\poststar{\vec A}{\vec S}}$. Indeed, the set $\poststar{\vec A}{\vec S} \setminus \vec S$ decreases in dimension at each iteration, making the successive linearizations increasingly precise.}.
$$

Intuitively, each step enlarges one side by adding all configurations that are safely outside an over-approximation of what can be reached from the other side.

Moreover, each iteration adds a large set of configurations to $\vec S \cup \vec T$. Indeed, since there is no run from $\vec S$ to $\vec T$, we have $\poststar{\vec A}{\vec S} \cap \prestar{\vec A}{\vec T} = \emptyset$, and hence
$$
\setN^d = (\setN^d \setminus \poststar{\vec A}{\vec S}) \cup (\setN^d \setminus \prestar{\vec A}{\vec T}).
$$
An iteration adds semilinear under-approximations of these two regions to $\vec T$ and $\vec S$, respectively. Because linearizations are sufficiently precise, most of the remaining configurations are covered at each step. Formally, one can show that the dimension of $\setN^d \setminus (\vec S \cup \vec T)$ strictly decreases at each iteration, which guarantees termination.

\medskip

Observe how the back-and-forth construction addresses a key difficulty: reachability sets of VAS can be over-approximated by semilinear sets, but not under-approximated.

A \emph{naive construction} would be to grow a single set $\vec S$, initially equal to $\vec C_0$, by iterating
$$
\vec S \gets \vec S \cup \lin{\poststar{\vec A}{\vec S} \setminus \vec S}.
$$
However, this approach is not sound: $\poststar{\vec A}{\vec S} \cap \vec C_{\text{bad}} = \emptyset$ does not imply $\lin{\poststar{\vec A}{\vec S}} \cap \vec C_{\text{bad}} = \emptyset$.

Ensuring termination is also challenging. At each iteration, $\vec S$ is enlarged not only by configurations that are actually reachable, but also by additional ones introduced by the over-approximation, which may in turn reach new configurations outside $\vec S$. Even if these additional configurations lie in a low-dimensional region, the set of configurations they can subsequently reach may have larger dimension.

\medskip

\noindent \textbf{Forward-only construction.} Our forward-only construction also grows a single set $\vec S$, but avoids these pitfalls. The main technical ingredient is the following lemma: for every semilinear set $\vec S$, every almost semilinear component $\vec b + \vec P \subseteq \poststar{\vec A}{\vec S} \setminus \vec S$, and every semilinear set $\vec T$ containing $\poststar{\vec A}{\vec S} \setminus \vec S$, there exists $\vec p \in \vec P$ such that
$$
\poststar{\vec A}{\vec b + \vec p + \lin{\vec P}} \subseteq \vec T.
$$

One iteration applies this lemma to each component $\vec b + \vec P$ of an almost semilinear decomposition of $\poststar{\vec A}{\vec S} \setminus \vec S$, taking
$$
\vec T = \vec S \cup \vec L \setminus \vec C_{\text{bad}},
$$
where $\vec L$ is a linearization of $\poststar{\vec A}{\vec S} \setminus \vec S$.

The set $\vec T$ enforces two constraints: it prevents $\vec S$ from reaching $\vec C_{\text{bad}}$, and it avoids adding too many superfluous vectors outside $\poststar{\vec A}{\vec S}$. One then adds sets of the form $\vec b + \vec p + \lin{\vec P}$ to $\vec S$.

Note that the added region is neither a subset nor a superset of $\poststar{\vec A}{\vec S} \setminus \vec S$: it is an over-approximation from which a lower-dimensional boundary region has been removed. One can show that the dimension of $\poststar{\vec A}{\vec S} \setminus \vec S$ decreases at each step.

\subsection{Step-by-Step Execution on the Examples}

Before turning to the examples, let us note that both constructions are inherently non-deterministic.

A first source of non-determinism, common to both constructions, is that an almost semilinear set may admit several almost semilinear decompositions (even though VAS reachability sets admit a canonical decomposition based on minimal runs for a well-quasi-order). Different choices may lead to different linearizations, so the notation $\lin{\poststar{\vec A}{\vec S} \setminus \vec S}$ is slightly abusive.

In addition, the forward-only construction involves a second source of non-determinism, namely the choice of the vectors $\vec p$.

\medskip

Also note that we slightly simplified the presentation of the forward-only construction. 
In fact, there are two closely related variants (see~\cref{rk:2variants}). 
One is specific to periodic VAS: it relies on periodicity assumptions and always produces periodic invariants, but does not extend to general VAS. 
The other one, presented here, applies to all VAS, but may yield non-periodic invariants even for periodic systems.

The periodic variant can be recovered by applying $\per{\cdot}$ to $\vec S$ initially and after each update, together with a strengthened form of the lemma (see \cref{lem:n-existence}).
The VAS given in the examples below are both periodic, but they are simple enough, so the two variants actually coincide on them.

\begin{example}
Consider the VAS $(\vec A, \vec C_0)$ given by $\vec A \coloneqq \{(0,1,0), (0,-1,1), (1,2,-2)\}$ and $\vec C_0 = \{\vec 0\}$.

In this system, the last two counters can be viewed as encoding a control state. As long as their sum is less than two, the first counter cannot be modified. Once their sum reaches two (after applying the first action twice), the actions $(0,-1,1)$ and $(1,2,-2)$ allow one to increase the first counter arbitrarily. Moreover, this sum cannot decrease.

We consider unreachable targets of the form $\vec C_{\text{bad}} \coloneqq \{(x_{\text{bad}},1,0)\}$ for some $x_{\text{bad}} \ge 1$.

\medskip
\noindent\textbf{Back-and-forth construction.}
Initially, $\vec S = \vec C_0 = \{(0,0,0)\}$ and $\vec T = \vec C_{\text{bad}}$.

The canonical almost semilinear decomposition of $\poststar{\vec A}{\vec S} \setminus \vec S$ is
$$
((0,1,0) + \vec P) \;\cup\; ((0,0,1) + \vec P),
$$
where
$$
\vec P = \{(0,0,0)\} \cup (\setN \times [1,\infty) \times \setN) \cup (\setN^2 \times [1,\infty)).
$$
Its linearization is $\lin{\vec P} = \setN^3$.

At the first iteration, we update
$$
\vec T \gets \vec T \cup \bigl([1,\infty) \times \{(0,0)\}\bigr).
$$
Moreover,
$$
\prestar{\vec A}{\vec T} \setminus \vec T = \{(x_{\text{bad}},0,0)\},
$$
and since singletons coincide with their linearization, we update
$$
\vec S \gets \vec S \cup \bigl(\setN^3 \setminus (\vec T \cup \{(x_{\text{bad}},0,0)\})\bigr).
$$

At the second iteration, the point $(x_{\text{bad}},0,0)$ is added to $\vec T$, making $\vec S$ and $\vec T$ complementary. The construction terminates and returns the inductive invariant
$$
\vec S = \setN^3 \setminus \bigl([1,\infty) \times \{(0,0)\} \;\cup\; \{(x_{\text{bad}},1,0), (x_{\text{bad}},0,0)\}\bigr).
$$
Observe that this invariant depends strongly on the choice of $x_{\text{bad}}$.

\medskip
\noindent\textbf{Forward-only construction.}
Again, we start with $\vec S = \{\vec (0,0,0)\}$ and a linearization of $\poststar{\vec A}{\vec S} \setminus \vec S$ is
$$
((0,1,0) + \setN^3) \;\cup\; ((0,0,1) + \setN^3).
$$

At the first iteration, we add to $\vec S$ sets of the form $(0,1,0) + \vec p + \setN^3$ and $(0,0,1) + \vec q + \setN^3$, with $\vec p,\vec q \in \setN^3$, chosen so that
$$
\poststar{\vec A}{\{(0,0,0)\} \cup ((0,1,0) + \vec p + \setN^3) \cup ((0,0,1) + \vec q + \setN^3)}
\subseteq ((0,1,0) + \setN^3) \cup ((0,0,1) + \setN^3) \setminus \{(x_{\text{bad}},1,0)\}.
$$
One can take $\vec p = (0,1,0)$ and $\vec q = \vec 0$.

At the second iteration,
$$
\poststar{\vec A}{\vec S} \setminus \vec S = \{(0,1,0)\},
$$
so we simply add this point to $\vec S$. The construction returns the inductive invariant %
$$
\vec S = \{(0,0,0), (0,1,0)\} \cup ((0,2,0) + \setN^3) \cup ((0,0,1) + \setN^3),
$$
which no longer depends on $x_{\text{bad}}$.
\end{example}

\begin{example}
We now give a simple example of a periodic VAS for which the back-and-forth construction does not always yield a periodic inductive invariant.

Consider the VAS $(\vec A, \vec C_0)$ given by $\vec A \coloneqq \{5,6\}$ and $\vec C_0 \coloneqq \{0\}$, and let $\vec C_{\text{bad}} = \{14\}$.

\medskip
\noindent\textbf{Back-and-forth construction.}
Initially, $\vec S = \{0\}$ and $\vec T = \{14\}$.

The canonical almost semilinear decomposition of $\poststar{\vec A}{\vec S} \setminus \vec S$ is
$$
(5 + \per{\{5,6\}}) \;\cup\; (6 + \per{\{5,6\}}),
$$
whose linearization is
$$
(5 + \setN) \;\cup\; (6 + \setN).
$$
Hence, we add $\{1,2,3,4\}$ to $\vec T$.

Next,
$$
\prestar{\vec A}{\vec T} \setminus \vec T = \{8,9\},
$$
which is finite, hence equal to its own linearization. We therefore add $\setN \setminus \{8,9\}$ to $\vec S$.

At the second iteration, the set $\{8,9\}$ is added to $\vec T$, making $\vec S$ and $\vec T$ complementary. The construction returns
$$
\setN \setminus \{1,2,3,4,8,9,14\},
$$
which is not periodic, since it contains $7$ but not $7+7=14$.

\medskip
\noindent\textbf{Forward-only construction.}
We again start with $\vec S = \{0\}$, and a linearization of $\poststar{\vec A}{\vec S} \setminus \vec S$ is $(5 + \setN) \cup (6 + \setN)$.

At the first iteration, we add a set of the form $5 + p + \setN$, with $p \in \setN$, such that
$$
\poststar{\vec A}{\{0\} \cup (5 + p + \setN)} \subseteq (5 + \setN) \setminus \{14\}.
$$
(Adding a set of the form $6 + q + \setN$ is unnecessary, as one such set is included in $5 + p + \setN$.)

One can take $p = 10$, yielding
$$
\vec S = \{0\} \cup (15 + \setN).
$$

At the second iteration, $\poststar{\vec A}{\vec S} \setminus \vec S$ is finite. The finitely many points in $\poststar{\vec A}{\vec S} \setminus \vec S$ are added to $\vec S$, and the construction returns the periodic inductive invariant
$$
\vec S = \{0,5,6,10,11,12\} \cup (15 + \setN) = \setN \setminus \{1,2,3,4,7,8,9,13,14\}.
$$
\end{example}

\section{Proofs of Section~\ref{sec:preliminaries}}\label{app:preliminaries}
\leminsecable*
\begin{proof}
   The proof is performed by induction on $k\geq 1$. The lemma is trivial when $k=1$. Now, assume the lemma proved for some $k\geq 1$ and let $\vec{P}\subseteq \setQ^d$ be a set of vectors such that $\vec{P}+\vec{P}\subseteq \vec{P}$, let $\vec{V}_1,\ldots,\vec{V}_{k+1}$ be a sequence of vector spaces of $\setQ^d$, and let $\vec{x}_1,\ldots,\vec{x}_{k+1}$ be a sequence of vectors in $\setQ^d$ such that the following inclusion holds.
  $$\vec{P}\subseteq \bigcup_{j=1}^{k+1}\vec{x}_j+\vec{V}_j$$
  If $\vec{P}\subseteq \vec{x}_{k+1}+\vec{V}_{k+1}$ then we are done. So, we can assume that there exists $\vec{p}_0\in \vec{P}$ such that $\vec{p}_0\not\in \vec{x}_{k+1}+\vec{V}_{k+1}$. We introduce the set $J=\{j\mid \vec{p}_0\in \vec{x}_j+\vec{V}_j\}$. Notice that $|J|\leq k$ since $\vec{p}_0\not\in \vec{x}_{k+1}+\vec{V}_{k+1}$, and $J$ is non empty since $\vec{p}_0\in\vec{P}\subseteq\bigcup_{j=1}^{k+1}\vec{x}_j+\vec{P}_j$. Let $\vec{p}\in \vec{P}$ and $n\in\setN$. As $\vec{p}_0+n\vec{p}\in \vec{P}$ we deduce that there exists $j\in\{1,\ldots,k+1\}$ such that $\vec{p}_0+n\vec{p}\in \vec{x}_j+\vec{V}_j$. Since $\{1,\ldots,k+1\}$ is finite while $\setN$ is infinite, there exists $j\in\{1,\ldots,k+1\}$ and $n<m$ in $\setN$ such that $\vec{p}_0+n\vec{p}$ and $\vec{p}_0+m\vec{p}$ are both in $\vec{x}_j+\vec{V}_j$. The difference of those two vectors shows that $\vec{p}\in\vec{V}_j$ and in particular $\vec{p}_0\in \vec{x}_j+\vec{V}_j$. So $j\in J$. We have proved that $\vec{P}\subseteq \bigcup_{j\in J}\vec{V}_j$. By induction hypothesis, we deduce that there exists $j\in J$ such that $\vec{P}\subseteq \vec{V}_j$. It follows that $\vec{p}_0\in \vec{V}_j$. Moreover, from $j\in J$, we also get $\vec{p}_0\in \vec{x}_j+\vec{V}_j$. Hence $\vec{x}_j\in \vec{V}_j$. It follows that $\vec{V}_j=\vec{x}_j+\vec{V}_j$. From $\vec{P}\subseteq \vec{V}_j$ we derive $\vec{P}\subseteq \vec{x}_j+\vec{V}_j$. The induction is proved.
\end{proof}

\cordimspacedim*
\begin{proof}
  Let $\vec{V}$ be the vector space spanned by $\vec{P}$. From $\vec{P}\subseteq\vec{V}$ we deduce that $\sdim{\vec{P}}\leq \dim{\vec{V}}$ by minimality of $\sdim{\vec{P}}$. Now, let $\vec{V}_1,\ldots,\vec{V}_k$ be a sequence of vector spaces of dimension at most $\sdim{\vec{P}}$, and $\vec{x}_1,\ldots,\vec{x}_k$ be a sequence of vectors such that $\vec{P}\subseteq\bigcup_{j=1}^k\vec{x}_j+\vec{V}_j$. As $\vec{P}$ is non-empty, we deduce that $k\geq 1$. Lemma~\ref{lem:insecable} shows that there exists $j$ such that $\vec{P}\subseteq \vec{x}_j+\vec{V}_j$. Let $\vec{p}_0\in \vec{P}$. Notice that $\vec{p}_0\in \vec{x}_j+\vec{V}_j$. Since $\vec{p}_0+\vec{P}\subseteq\vec{P}\subseteq \vec{x}_j+\vec{V}_j$ we deduce that $\vec{P}\subseteq \vec{x}_j-\vec{p}_0+\vec{V}_j=\vec{V}_j$ since $\vec{p}_0\in\vec{x}_j+\vec{V}_j$. We deduce that $\vec{V}\subseteq \vec{V}_j$ by minimality of the vector space spanned by $\vec{P}$. Therefore $\dim{\vec{V}}\leq\dim{\vec{V}_j}$. The previous inequality with $\dim{\vec{V}_j}\leq\sdim{\vec{P}}$ provides $\dim{\vec{V}}\leq\sdim{\vec{P}}$. We have proved that $\sdim{\vec{P}}=\dim{\vec{V}}$.
\end{proof}

\lemspacedifflinear*
\begin{proof}
  Let $(\vec{b},\vec{G})$ be a linear-presentation of $\vec{L}$ and let $\vec{P}=\per{\vec{G}}$. As $\vec{L}+\vec{q}\subseteq \vec{L}$, we deduce that $\vec{b}+\vec{q}\in \vec{b}+\vec{P}$, i.e. $\vec{q}\in\vec{P}$. As $\vec{L}\setminus (\vec{L}+\vec{q})$ is equal to $\vec{b}+(\vec{P}\setminus (\vec{P}+\vec{q}))$, it is sufficient to prove that $\sdim{\vec{P}\setminus (\vec{P}+\vec{q})}<\sdim{\vec{P}}$. 
  
  Let $\vec{V}$ be the vector-space spanned by $\vec{P}$. Corollary~\ref{cor:dimspacedim} shows that $\sdim{\vec{P}}=\dim{\vec{V}}$. Let $\vec{g}_1,\ldots,\vec{g}_k$ be an enumeration of $\vec{G}$, i.e. such that $\vec{G}=\{\vec{g}_1,\ldots,\vec{g}_k\}$. We denote by $\vec{V}_J$ the vector space spanned by $(\vec{g}_j)_{j\in J}$ for every $J\subseteq \{1,\ldots,k\}$, and we let $\mathcal{F}$ be the set of $J\subseteq\{1,\ldots,k\}$ such that $\vec{V}_J=\vec{V}$. Notice that $\dim{\vec{V}_J}<\dim{\vec{V}}$ for every $J\not\in \mathcal{F}$, since $\vec{V}_J$ is strictly included in $\vec{V}$ in that case.

  Let $J\in\mathcal{F}$ and let us prove that there exists $n_J\in\setN$ such that $n_J\sum_{j\in J}\vec{g}_j\in \vec{P}+\vec{q}$. As $(\vec{g}_j)_{j\in J}$ span the vector space $\vec{V}$, there exists a sequence $(\lambda_j)_{j\in J}$ of rational numbers such that $\vec{q}=\sum_{j\in J}\lambda_j\vec{g}_j$. In particular, there exists $m\in\setN_{>0}$ such that $m\lambda_j\in\setZ$ for every $j\in J$. Let $n_J\in\setN$ such that $n_J\geq m\lambda_j$ for every $j\in J$. We have $n_J\sum_{j\in J}\vec{g}_j=\sum_{j\in J}m\lambda_j\vec{g}_j+\sum_{j\in J}(n_J-m\lambda_j)\vec{g}_j$. Hence $n_J\sum_{j\in J}\vec{g}_j\in m\vec{q}+\vec{P}\subseteq \vec{P}+\vec{q}$.

  Now, let $n=\max_{J\in \mathcal{F}}n_J$ and observe that $\sum_{j\in J}n_j\vec{g}_j\in \vec{q}+\vec{P}$ for every $J\in\mathcal{F}$ and for every sequence $(n_j)_{j\in J}$ such that $n_j\geq n$ for every $j\in J$. Let $\vec{p}\in \vec{P}\setminus (\vec{q}+\vec{P})$. There exists a sequence $n_1,\ldots,n_k\in\setN$ such that $\vec{p}=\sum_{j=1}^k n_j\vec{g}_j$. Let $J=\{j\mid n_j\geq n\}$. If $J\in\mathcal{F}$ then $\vec{p}\in \vec{q}+\vec{P}$ and we get a contradiction. Hence, $J\not\in \mathcal{F}$. We have proved the following inclusion:
  $$\vec{P}\setminus(\vec{P}+\vec{q})\subseteq \bigcup_{J\not\in\mathcal{F}}\sum_{j\not\in J}\{0,\ldots,n-1\}\vec{g}_j+\vec{V}_J$$
  In particular $\sdim{\vec{P}\setminus (\vec{P}+\vec{q})}\leq \max_{J\not\in \mathcal{F}}\dim{\vec{V}_J}$ where the max returns $-1$ it the set of indexes is empty. Since $\max_{J\not\in \mathcal{F}}\dim{\vec{V}_J}<\dim{\vec{V}}=\sdim{\vec{P}}$, we have proved the lemma.
\end{proof}

\section{Proofs of Section~\ref{sec:periovas}}
\lemundecper*
\begin{proof}
  The VAS reachability set inclusion problem takes as input two VAS $(\vec{A}_1,\{\vec{c}_1\})$ and $(\vec{A}_2,\{\vec{c}_2\})$ having the same dimension, i.e. such that $\vec{A}_1,\vec{A}_2$ are two finite sets of actions in $\setZ^d$, and $\vec{c}_1,\vec{c}_2\in\setN^d$, and checks whether $\poststar{\vec{A}_1}{\vec{c}_1}\subseteq \poststar{\vec{A}_2}{\vec{c}_2}$. This problem is known to be undecidable~\cite{Baker73,DBLP:journals/tcs/Hack76,DBLP:journals/tcs/Jancar95}. We are going to reduce the VAS reachability set inclusion problem to the problem of deciding if the projection $\vec{X}$ of the reachability set of a VAS $(\vec{A},\vec{C}_0)$ defined below is periodic. 

  This VAS will have 6 additional counters. To simplify the presentation, vectors in $\setZ^{d+6}$ are denoted as tuples in $\setZ^d\times\setZ^6$.

  The set of actions $\vec{A}$ is defined as follows.
  \begin{align*}
    \vec{A} =
    \vec{A}_1\times \{(0,0,-1,1,0,0), (0,0,1,-1,0,0)\}\cup
    \vec{A}_2\times \{(0,0,0,0,-1,1), (0,0,0,0,1,-1)\}
  \end{align*}
  The semilinear set $\vec{C}_0$ is defined as follows where $\vec{F}=\{(\vec{0},(0,0,0,0,0,0)),(\vec{0},(0,1,0,0,0,0))\}\cup \setN^d\times \vec{E}\times \setN^4$ and $\vec{E}=\{(m,n)\in\setN^2 \mid m\geq 2\vee m+n\geq 3\}$.
  \begin{align*}
    \vec{C}_0 = \vec{F}\cup 
    &\{(\vec{c}_1,(1,0,1,0,0,0)),(\vec{c}_1,(1,0,0,1,0,0))\}\cup\\
    &\{(\vec{c}_2,(1,1,0,0,1,0)), (\vec{c}_2,(1,1,0,0,0,1))\}
  \end{align*}
  Since $\vec{F}$ is an inductive invariant for $\vec{A}$, we easily derive the following equality.
  \begin{align*}
    \poststar{\vec{A}}{\vec{C}_0}=
    \vec{F}\cup
    & \poststar{\vec{A}_1}{\{\vec{c}_1\}}\times \{(1,0,1,0,0,0), (1,0,0,1,0,0)\}\cup \\
    & \poststar{\vec{A}_2}{\{\vec{c}_2\}}\times \{(1,1,0,0,1,0), (1,1,0,0,0,1)\}
  \end{align*}
  In particular, the set $\vec{X}\coloneqq\{(\vec{x},(m,n))\in\setN^d\times\setN^2 \mid \exists \vec{y}\in\setN^4,
  (\vec{x},m,n,\vec{y})\in \poststar{\vec{A}}{\vec{c}_0}\}$ is equal to the following set:
  \begin{align*}
    \{(\vec{0},(0,0)),(\vec{0},(0,1))\}\cup (\setN^d\times\vec{E})\cup
     (\poststar{\vec{A}_1}{\{\vec{c}_1\}}\times\{(1,0)\})\cup
     (\poststar{\vec{A}_2}{\{\vec{c}_2\}}\times\{(1,1)\})
  \end{align*}
  Observe that if $\vec{X}$ is periodic, then $(\poststar{\vec{A}_1}{\{\vec{c}_1\}}\times \{(1,0)\}) +(\vec{0},(0,1))$ is included in $\vec{X}$. In particular $\poststar{\vec{A}_1}{\{\vec{c}_1\}}\subseteq \poststar{\vec{A}_2}{\{\vec{c}_2\}}$ since $(1,1)\not\in\vec{E}$. Conversely, if the previous inclusion holds, notice that $\vec{X}$ is periodic.
\end{proof}

\section{Proofs About Interior Vectors}\label{app:intPQ}
\lemintcara*
\begin{proof}
    Let us denote by $\vec{V}$ the vector space spanned by $\vec{P}$. 
    
    Assume first that $\vec{p}$ is an interior vector of $\vec{P}$ and let $\vec{p}_1,\ldots,\vec{p}_d$ be a sequence of vectors in $\vec{P}$ spanning the vector space $\vec{V}$, and let $\vec{q}=\sum_i\vec{p}_i$. Since $\vec{p}$ is an interior vector, there exists $n\in\setN_{>0}$ such that $n\vec{p}\in \vec{q}+\vec{P}$. Let us introduce $\vec{p}_{d+1}\in\vec{P}$ such that $n\vec{p}=\vec{q}+\vec{p}_{d+1}$. Since $\vec{p}=\frac{1}{n}(\vec{p}_1+\cdots+\vec{p}_k)$ with $k=d+1$, we have proved one direction of the lemma.

    For the other direction, assume that $\vec{p}\in\setN^d$ is a vector such that $\vec{p}=\sum_j\lambda_j\vec{p}_j$ for a sequence $\vec{p}_1,\ldots,\vec{p}_k$ of vectors in $\vec{P}$ that spans the vector space $\vec{V}$, and a sequence $\lambda_1,\ldots,\lambda_k\in\setQ_{>0}$. Let us prove that $\vec{p}$ is an interior vector. By replacing $\vec{p}$ and the vectors $\vec{p}_1,\ldots,\vec{p}_k$ by some multiples, we can assume w.l.o.g that $\vec{p}=\sum_j\vec{p}_j$. Now let $\vec{q}\in\vec{P}$. Since $\vec{q}\in\vec{V}$ we have $m\vec{q}=\sum_{j}z_j\vec{p}_j$ for some $z_1,\ldots,z_k\in\setZ$ and $m\in\setN_{>0}$. Let $z\in\setN_{>0}$ such that $n_j:=z-z_j$ is in $\setN$ for every $j$. Observe that $z\vec{p}-m\vec{q}=\sum_jn_j\vec{p}_j\in\vec{P}$. It follows that $z\vec{p}\in \vec{q}+(m-1)\vec{q}+\vec{P}\subseteq \vec{q}+\vec{P}$. We have proved that $\vec{p}$ is an interior vector.
\end{proof}

\lemintPQ*
\begin{proof}
 We introduce $\vec{V}:=\setQ\vec{P}$ the vector space spanned by $\vec{P}$. Since $\setQ_{\geq 0}\vec{P}$ is included in the vector space $\vec{V}$, which is topologically closed (recall by duality that any vector space is the set of vectors satisfying a system of homogeneous linear equalities), it follows that $\overline{\setQ_{\geq 0}\vec{P}}$ is also included in $\vec{V}$ by minimality of the topological closure. Hence $\vec{Q}\subseteq \vec{V}$ and we deduce that $\vec{V}$ is also the vector space spanned by $\vec{Q}$. In particular, by Lemma~\ref{lem:intcara}, since $\vec{P}\subseteq \vec{Q}$, we deduce that any interior vector of $\vec{P}$ is an interior vector of $\vec{Q}$.

 Now, let $\vec{q}\in\setN^d$ be a vector interior to $\vec{Q}$. We consider a sequence $\vec{p}_1,\ldots,\vec{p}_k$ of vectors in $\vec{P}$ spanning the vector space $\vec{V}$, and we let $\vec{p}=\sum_j\vec{p}_j$. Let us denote by $B_{\epsilon}=\{\vec{x}\in\vec{V} \mid \norm{\vec{x}}<\epsilon\}$ the \emph{open ball} of $\vec{V}$ of radius $\epsilon\in\setQ_{>0}$ centered on zero (where $\norm{x}=\max_i|x(i)|$). Since $\vec{p}_1,\ldots,\vec{p}_k$ is spanning the vector space $\vec{V}$, Cramer's rules show that there exists $\epsilon\in\setQ_{>0}$ such that the following inclusion holds where $(-1,1)$ denotes the open interval $\{\lambda\in\setQ \mid -1<\lambda<1\}$:
    $$B_\epsilon\subseteq \sum_{j}(-1,1)\vec{p}_j$$

Since $\vec{q}$ is an interior vector of $\vec{Q}$, there exists $n\in\setN_{>0}$ and $\vec{x}\in\vec{Q}$ such that $n\vec{q}=\vec{p}+\vec{x}$. As $\vec{x}\in \overline{\setQ_{\geq 0}\vec{P}}$, there exists $\vec{y}\in \setQ_{\geq 0}\vec{P}$ such that $\vec{z}\coloneqq\vec{x}-\vec{y}$ is in $B_\epsilon$. Notice that $n\vec{q}=(\vec{p}+\vec{z})+\vec{y}$ and we have:
$$\vec{p}+\vec{z}\in \sum_{j}(0,2)\vec{p}_j$$
It follows from Lemma~\ref{lem:intcara} that $\vec{q}$ is an interior vector of $\vec{P}$.
\end{proof}

\lempushin*
\begin{proof}
  Lemma~\ref{lem:intcara} shows that $\vec{x}$ is in the interior of $\vec{Q}$ and Lemma~\ref{lem:intPQ} shows that $\vec{x}$ is in the interior of $\vec{P}$. In particular there exists $n\in\setN_{>0}$ such that $n\vec{x}\in \vec{0}+\vec{P}$. By replacing $\vec{x}$ by a multiple, we can assume w.l.o.g that $\vec{x}\in\vec{P}$. As $\vec{y}\in \vec{Q}\subseteq \vec{P}-\vec{P}$, there exists $\vec{p}\in\vec{P}$ such that $\vec{y}+\vec{p}\in\vec{P}$. As $\vec{x}$ is in the interior of $\vec{P}$, there exists $r\in\setN_{>0}$ such that $r\vec{x}\in \vec{p}+\vec{P}$. It follows that $\vec{y}+r\vec{x}\in \vec{y}+\vec{p}+\vec{P}\subseteq \vec{P}$. As $\vec{x}\in\vec{P}$, we deduce that $\vec{y}+r\setN_{>0}\vec{x}\subseteq \vec{P}$.
\end{proof}

\section{What About Branching VAS and Grammar VAS?}\label{sec:appliBGVAS}
The recent approach of Leroux \cite{DBLP:conf/popl/Leroux11,Turing-100:Vector_Addition_Systems_Reachability}
to solve the VAS reachability problem is based on the fundamental property
(recalled in \cref{thm:inv}) that
every semilinear set $\vec{S}$ containing the reachability set of a VAS $(\vec{A}, \vec{C}_0)$ also contains a semilinear inductive invariant $\vec{I}$ for $(\vec{A}, \vec{C}_0)$.
Let us call this property the ``semilinear inductive invariant'' property.
In terms of the constraint system presented in \cref{subsec:vas},
this property says that
every semilinear set that is $\subseteq$-larger than the $\subseteq$-least solution of \cref{eq:inductive:init,eq:inductive:step} is also $\subseteq$-larger than some semilinear solution $\vec{I}$.
In this paper,
we have shown that $\vec{I}$ can in addition be required to be periodic when the $\subseteq$-least solution is periodic
(see \cref{thm:invper}).
We now discuss in this section the hypothetical extension of the ``semilinear inductive invariant'' property to branching VAS and grammar VAS.

\subsection{Branching VAS}

An \emph{initialized $d$-dim binary branching VAS} (\emph{$d$-BVAS} or just \emph{BVAS} for short) is
a triple $(\vec{A}_1, \vec{A}_2, \vec{C}_0)$ where
$\vec{A}_1 \subseteq \setZ^d$ is a finite set of \emph{unary actions},
$\vec{A}_2 \subseteq \setZ^d$ is a finite set of \emph{binary actions} and
$\vec{C}_0\subseteq \setN^d$ is a semilinear set of \emph{initial configurations}.
The \emph{reachability set} of a BVAS $(\vec{A}_1, \vec{A}_2, \vec{C}_0)$ is the $\subseteq$-least solution of the following monovariate system of constraints:
\begin{align}
  & \vec{X} \supseteq \vec{C}_0 \label{eq:inductive:init:bvas} \\
  & \vec{X} \supseteq (\vec{X}+\vec{a}) \cap \setN^d & \forall \vec{a} \in \vec{A}_1 \label{eq:inductive:ustep:bvas} \\
  & \vec{X} \supseteq (\vec{X}+\vec{X}+\vec{a}) \cap \setN^d & \forall \vec{a} \in \vec{A}_2 \label{eq:inductive:bstep:bvas}
\end{align}
An \emph{inductive invariant} for $(\vec{A}_1, \vec{A}_2, \vec{C}_0)$ is a solution of this constraint system.
Observe that every inductive invariant is periodic when $\vec{A}_2$ and $\vec{C}_0$ both contain the zero vector.
We conjecture that BVAS also satisfy the ``semilinear inductive invariant'' property.

\begin{conjecture}
  Every semilinear set $\vec{S}$ containing the reachability set of a BVAS $(\vec{A}_1, \vec{A}_2, \vec{C}_0)$ also contains a semilinear inductive invariant for $(\vec{A}_1, \vec{A}_2, \vec{C}_0)$.
\end{conjecture}

To test this conjecture,
one might want to consider the subcase where $\vec{A}_2 = \{\vec{0}\}$ and $\vec{C}_0$ is periodic.
In that case,
the reachability set of the BVAS $(\vec{A}_1, \vec{A}_2, \vec{C}_0)$ coincides with
the reachability set of the periodic VAS $(\vec{A}_1, \vec{C}_0)$.
Our \cref{thm:invper} entails that the above conjecture holds for that subcase.

\subsection{GVAS}

A \emph{$d$-dim grammar-controlled VAS} (\emph{$d$-GVAS} or just \emph{GVAS} for short) is
a context-free grammar $(V, \vec{A}, R, S)$ where
$V$ is a finite set of nonterminals,
$\vec{A} \subseteq \setZ^d$ is a finite set of \emph{terminals},
$R \subseteq V \times (\vec{A} \cup VV)$ is a finite set of production \emph{rules}\footnote{We consider, w.l.o.g., context-free grammars in Chomsky normal form.}, and
$S \in V$ is the \emph{start symbol}.
The \emph{reachability relation} of a GVAS $(V, \vec{A}, R, S)$ is the $\subseteq$-least solution of the following multivariate (one variable per nonterminal) system of constraints:
\begin{align}
  & \vec{X} \supseteq (\vec{a}^{-}, \vec{a}^{+}) & \forall (X \rightarrow \vec{a}) \in R \label{eq:inductive:A:gvas}\\
  & \vec{X} \supseteq \vec{X} + \{(\vec{v}, \vec{v}) \mid \vec{v} \in \setN^d\} \label{eq:inductive:diag:gvas}\\
  & \vec{X} \supseteq \vec{Y} \fatsemi \vec{Z} & \forall (X \rightarrow YZ) \in R \label{eq:inductive:YZ:gvas}
\end{align}
The vectors $\vec{a}^{-}$ and $\vec{a}^{+}$ in \cref{eq:inductive:A:gvas} are defined by
$\vec{a}^{-}(i) = 0$ and $\vec{a}^{+}(i) = \vec{a}(i)$ if $\vec{a}(i) \geq 0$, and
$\vec{a}^{-}(i) = -\vec{a}(i)$ and $\vec{a}^{+}(i) = 0$ if $\vec{a}(i) \leq 0$.
The symbol $\fatsemi$ in \cref{eq:inductive:YZ:gvas} stands for the forward relational composition.
A \emph{inductive relational invariant} for $(V, \vec{A}, R, S)$ is a solution of this constraint system.
We conjecture that GVAS satisfy the ``semilinear inductive relational invariant'' property.
The partial order $\subseteq$ over subsets of $\setN^d \times \setN^d$ is extended to families $\left(\vec{X}\right)_{X \in V}$ of subsets of $\setN^d \times \setN^d$ component-wise.

\begin{conjecture}
  Every family $\left(\vec{X}\right)_{X \in V}$ of semilinear sets containing the reachability relation of a GVAS $(V, \vec{A}, R, S)$ also contains a semilinear inductive relational invariant for $(V, \vec{A}, R, S)$.
\end{conjecture}

To test this conjecture,
one might want to consider the subcase where
$V$ contains a single nonterminal (i.e., $V = \{S\}$),
$\vec{A}$ contains the zero vector, and
$R = V \times (\vec{A} \cup VV)$ (i.e., $R$ contains the rule $S \rightarrow SS$ and all the rules $S \rightarrow \vec{a}$ with $\vec{a} \in \vec{A}$).
In that case,
the reachability relation $\vec{U}$ of the GVAS $(V, \vec{A}, R, S)$ coincides with
the reachability relation of the VAS $\vec{A}$.
As mentioned in \cref{subsec:diagonal-relations},
the latter coincides with the reachability set of the periodic VAS $(\vec{A}',\{(\vec{0},\vec{0})\})$ where
$\vec{A}' = (\{\vec{0}\}\times\vec{A})\cup \{(\vec{e}_i,\vec{e}_i)\mid 1\leq i\leq d\}$.
Given a semilinear set $\vec{X}$ containing $\vec{U}$,
our \cref{thm:invper} entails that $(\vec{A}',\{(\vec{0},\vec{0})\})$ admits a periodic semilinear inductive invariant $\vec{I}$ contained in $\vec{X}$.
Note that $\vec{U} \subseteq \vec{I}$ and $\vec{I}$ is diagonal since it is periodic and reflexive.
However,
$\vec{I}$ is not necessarily an inductive relational invariant for $(V, \vec{A}, R, S)$
as $\vec{I}$ may not be transitive.
We conjecture that every semilinear set containing the reachability relation $\vec{U}$ of a given VAS
also contains a semilinear overapproximation of $\vec{U}$ that is diagonal and transitive (and, hence, periodic).

}{}

\end{document}